\documentclass[12pt]{article}
\usepackage[margin=1in]{geometry}
\usepackage{amsmath,amssymb,amsthm}
\usepackage{natbib}
\usepackage{setspace}
\usepackage{booktabs}
\usepackage{graphicx}
\usepackage{hyperref}
\usepackage{caption}
\usepackage{pgfplots}
\pgfplotsset{compat=1.17}
\usepackage{tikz}
\usetikzlibrary{arrows.meta, positioning, calc}

\newtheorem{proposition}{Proposition}
\newtheorem{corollary}{Corollary}

\theoremstyle{definition}
\newtheorem{definition}{Definition}
\newtheorem{assumption}{Assumption}

\begin{document}

\begin{titlepage}
\centering
\vspace*{2.5cm}
{\LARGE \textbf{The Inflation of Resetting Workers}}
\vspace{2.0cm}

{\large Rui Sun\textsuperscript{*}}
\vspace{1.8cm}

\today
\vspace{1.5cm}

\begin{minipage}{0.85\textwidth}
\small
The standard wage Phillips curve aggregates away from which workers reset wages when. I show this aggregation omits a first-order term: the covariance between workers' cost-push exposure and their reset frequency. I introduce two sufficient statistics and embed them in a multi-country HANK model calibrated to six euro-area economies. The omitted term generates 7 percent more cumulative core inflation in the baseline and 10--26 percent more when monetary policy is delayed. Two economies with identical openness can differ by 6.6 percentage-point-quarters solely from within-country composition. Targeted essentials subsidies reduce welfare loss by 32 percent relative to aggressive tightening. Out of sample, the model correctly predicts the persistence ranking across the UK, the US, and Japan.

\vspace{0.5cm}
\textbf{JEL:} E31, E52, F33, F41, F45, J31 \\
\textbf{Keywords:} Wage Phillips curve, inflation persistence, heterogeneous agents, monetary union, optimal currency area
\end{minipage}

\vfill
\begin{flushleft}
\rule{0.4\textwidth}{0.4pt}\\[6pt]
{\footnotesize \textsuperscript{*}Haas School of Business, University of California, Berkeley (rui\_sun@berkeley.edu). I thank Yuriy Gorodnichenko, Emi Nakamura, J\'{o}n Steinsson, Pierre-Olivier Gourinchas, Benjamin Schoefer, and Dmitry Mukhin for guidance. I also thank seminar participants at UC Berkeley for helpful comments. All errors are my own.}
\end{flushleft}
\end{titlepage}

\section{Introduction}\label{sec:intro}

The standard wage Phillips curve uses the wrong price index for the workers who set wages. In the \citet{erceg2000} framework, extended by \citet{blanchard2007} to open economies, all workers target the aggregate CPI when resetting their nominal wage. But the workers who reset most frequently are not a random sample of the population: in European labor markets, they are disproportionately low-income, concentrated in retail, hospitality, and personal services, operating under contracts of one to two years \citep{buchheim2022}. These same workers spend 33--40 percent of their budget on food and energy, compared with 17--22 percent for workers who reset less often. When an imported cost-push shock hits essentials prices, the workers who experience the most inflation are precisely the workers in a position to renegotiate wages. The standard model averages over this composition and misses the resulting wage pressure. Throughout the paper, ``the standard model'' refers to the nested special case in which all workers share a common consumption basket and a common reset frequency, so that $\Omega_{c,t}=0$ identically; it is the representative-agent Calvo wage Phillips curve of \citet{erceg2000}.

I derive a heterogeneous wage Phillips curve in which each worker type targets its own price index. The aggregate contains a term absent from the standard model: the within-country covariance between a worker's wage-reset frequency and the inflation that worker experiences. I call this the reset-heterogeneity wedge and denote it $\Omega_{c,t}$. The wedge is identically zero only under knife-edge conditions, namely identical consumption baskets, identical reset frequencies, or uniform price changes, that the representative-agent model implicitly imposes. The 2021--2023 euro-area inflation episode, in which energy and food prices surged by 40--60 percent while core inflation diverged sharply across countries sharing the same central bank, is the natural laboratory: cumulative core inflation overshoot ranged from 31 percentage-point-quarters in France to 51 in the Netherlands, a dispersion that standard models, which attribute cross-country differences to trade openness or the output gap, cannot account for since the countries at the extremes have comparable aggregate openness and labor market tightness.

The central objects in the analysis are two sufficient statistics. Reset-Weighted Experienced Inflation (RWEI) measures the average inflation experienced by workers at the reset margin; it predicts wage catch-up. Marginal Wage Setter Inflation (MWSI) augments RWEI with sectoral propagation weights that capture how wage increases in each sector feed into downstream prices through the input-output network; it predicts cumulative core inflation persistence. Both objects are computable from three publicly available data sources, namely household expenditure surveys, wage-setting calendars, and input-output tables, for any country and any shock composition, without solving any equilibrium model.

The main contribution is the identification and characterization of the composition channel in the cumulative wage response. The cumulative effect of a cost-push shock decomposes into three additively separable channels:
\begin{equation}\label{eq:decomp_channels}
\underbrace{\sum_h \beta^h \pi^w_{c,h}}_{\text{total}} = \underbrace{\frac{\bar\theta_c}{1-\beta}\bar\pi_c(u_0)}_{\text{level catch-up}} + \underbrace{\frac{\bar\theta_c}{1-\beta\rho_\Omega}\Omega_{c,0}}_{\substack{\text{composition wedge}\\\text{(new)}}} + \underbrace{\frac{\tilde\kappa_c}{1-\beta}\bar x}_{\text{demand}}\,.
\end{equation}
The first channel operates through average inflation; the second through the cross-sectional wedge $\Omega$. The composition channel is new. The wedge is a tail-risk amplifier: it is proportional to the shock size and therefore negligible for small shocks but first-order during the large supply disruptions that matter most for policy. Critically, the aggregation error is shock-dependent: the standard Phillips curve works well for demand shocks, which raise prices broadly, but fails for supply shocks concentrated on necessities. No recalibration of the representative-agent model can replicate this asymmetry. The same mechanism runs in reverse: when oil prices collapsed in 2014--2016, the wedge turned negative, and the standard model overpredicted wage disinflation. One object, two puzzles, opposite signs.

I embed the new Phillips curve in a multi-country heterogeneous-agent New Keynesian model calibrated to six euro-area economies: Germany, France, Italy, Spain, the Netherlands, and Belgium. The quantitative model uses five income quintiles to capture the full expenditure gradient observed in the Eurostat Household Budget Survey. A key feature of the quantitative model is asymmetric wage catch-up: workers who have fallen behind in real terms demand larger raises at the next reset, but workers who are ahead do not volunteer pay cuts. This asymmetry, documented by \citet{buchheim2022} and visible in every major European wage negotiation during 2022--2023, makes the wedge self-reinforcing during large supply shocks while leaving it dormant during small ones. I solve the model using the sequence-space Jacobian method of \citet{auclert2021} and extend it to nonlinear transitions via the quasi-Newton method described in their Section~6. Using pre-shock micro data on expenditure shares, contract durations, and sectoral structure, and without targeting any inflation outcome from the 2021--2023 episode, the model shows that the standard economy underestimates cumulative core inflation by 7 percent in the baseline and by 10--26 percent when monetary policy is delayed, as the ECB's was during 2021--2022. Two economies with identical aggregate openness and identical average wage-setting institutions can differ in cumulative core inflation by 6.6 percentage-point-quarters solely because their within-country distributions of expenditure and reset timing differ.

The wedge has direct implications for stabilization policy. Because it is a cross-sectional object that depends on which workers experience which inflation, the aggregate interest rate cannot close it. This is a new form of the divine coincidence failure. In the monetary union, cross-country variation in RWEI creates a new dimension of the optimal currency area problem \citep{mundell1961}: no common interest rate can close all country-specific wedges, and the direction of bias, specifically which countries are over-tightened and which are under-tightened, is predicted by RWEI. Targeted fiscal transfers to workers at the reset margin are net disinflationary at the medium horizon: moderate tightening combined with essentials subsidies reduces the union-wide welfare loss by 32 percent relative to aggressive tightening alone, because the subsidy directly weakens the wage catch-up channel that fuels second-round inflation. As an out-of-sample validation, the model, calibrated entirely to the euro area, correctly predicts that the United Kingdom should exhibit more persistent core inflation than the United States, and the United States more than Japan, using only pre-shock expenditure and contract data from each country.

This paper contributes to several literatures. First, I build on the theory of wage and price Phillips curves. The standard framework \citep{calvo1983,erceg2000} uses a representative worker whose experienced inflation equals the aggregate CPI. Multi-sector models \citep{carvalho2006,nakamura2010} introduce heterogeneous price stickiness across products but maintain a representative worker within each sector. \citet{hazell2022} estimate slope heterogeneity in regional Phillips curves. I introduce heterogeneity across worker types, in both consumption baskets and wage-reset timing, which generates a new aggregate object absent from all three frameworks and provides a structural explanation for why different compositions of workers can generate different Phillips curve slopes even with the same aggregate conditions.

The closest paper is \citet{afrouzi2024}, who model how workers keep up with inflation in a closed economy with a representative consumption basket. The two papers are complements: \citeauthor{afrouzi2024} answer why workers catch up, the level channel in equation~(\ref{eq:decomp_channels}), while I answer which workers' catch-up matters for aggregate dynamics, the composition channel. In their framework $\Omega=0$ identically; the wedge is the new margin. I also build on the literature on inflation incidence and experienced prices. \citet{pallotti2023} measure heterogeneous welfare losses from the 2021--2023 euro-area inflation; I show that incidence has aggregate consequences beyond welfare. \citet{cavallo2017} and \citet{dacunto2021} document that experienced prices shape household expectations; I use their estimates to discipline the expectations block but show that heterogeneous exposure, not heterogeneous beliefs, is the primary driver.

The paper is also related to recent work on supply-side inflation dynamics. \citet{acharya2023} document how supply-chain pressures generalized inflation across European sectors, \citet{afrouzi2023rp} show that relative-price shocks can act as aggregate supply shocks, and \citet{blanco2024} study inflation dynamics in menu-cost economies. \citet{bernanke2025} find that pandemic-era US inflation was driven by supply shocks rather than overheated labor markets. I provide a sufficient statistic for which relative-price shocks generate the most persistence, and show that the persistence of those supply shocks depends on which workers they hit, a margin that representative-agent supply-side frameworks cannot capture.

On the open-economy side, the canonical model of \citet{gali2005} features a representative agent within each country. \citet{digiovanni2023} develop a multi-country multi-sector model to study pandemic-era inflation through global production networks, but their model maintains representative workers within each country-sector and therefore cannot generate the within-country composition effects that drive my results. \citet{cubaborda2023} show that intermediate-input trade-cost shocks are more persistent than final-goods shocks; \citet{gopinath2020} emphasize dominant-currency pricing. The present mechanism operates through the domestic labor market and is complementary to these trade channels.

I contribute to the theory of optimal currency areas by adding a new criterion to \citet{mundell1961} and \citet{degrauwe2020}: RWEI dispersion captures within-country composition rather than between-country asymmetry. \citet{kekre2022} shows that between-country heterogeneity in labor market frictions creates OCA costs even with symmetric shocks; I identify a complementary source of asymmetry that operates within countries, through the composition of who resets wages and what inflation they experience, rather than through between-country differences in hiring costs or matching efficiency. \citet{hauserseneca2022} study labor mobility as an adjustment mechanism in a monetary union; the present criterion identifies a source of asymmetry that labor mobility cannot resolve, since the relevant heterogeneity is within each country's labor market, not between countries.

Finally, the HANK literature following \citet{kaplan2018} and \citet{mckay2016} is overwhelmingly closed-economy and demand-focused; this paper brings household heterogeneity to the supply side, a channel orthogonal to the MPC channel. The computational method follows \citet{auclert2021}; for alternative approaches, see \citet{bayer2020}.

The rest of the paper is organized as follows. Section~\ref{sec:facts} presents motivating evidence. Section~\ref{sec:model} sets up the model. Section~\ref{sec:theory} derives the main theoretical results. Section~\ref{sec:quant} describes the quantitative model, calibration, and computation. Section~\ref{sec:results} reports quantitative results and policy analysis. Section~\ref{sec:conclusion} concludes.

\section{Motivating Evidence}\label{sec:facts}

The euro area offers a laboratory in which member countries share a single central bank, a common currency, and broadly similar external price shocks, yet exhibit heterogeneous inflation dynamics. This section documents three patterns.

\subsection{Cross-country dispersion in core inflation persistence}

Figure~\ref{fig:core} plots core HICP inflation excluding energy, food, alcohol, and tobacco for five large euro-area economies from 2021 through 2024. All countries experienced rising core inflation during 2022--2023, but the peak levels and subsequent rates of decline differed.

\begin{figure}[htbp]
\centering
\begin{tikzpicture}
\begin{axis}[
    width=13.5cm, height=7cm,
    xlabel={Quarter},
    ylabel={Core HICP (\%, y-o-y)},
    xmin=0.5, xmax=16.5, ymin=-0.5, ymax=8.5,
    xtick={1,4,8,12,16},
    xticklabels={2021:1, 2021:4, 2022:4, 2023:4, 2024:4},
    legend style={at={(0.02,0.98)}, anchor=north west, font=\footnotesize, draw=none},
    grid=major, grid style={dashed, gray!25},
    every axis plot/.append style={thick}
]
\addplot[blue, mark=square*, mark size=1.2] coordinates {
(1,1.4)(2,1.8)(3,2.4)(4,3.2)(5,3.4)(6,3.8)(7,4.3)(8,5.1)
(9,5.5)(10,5.7)(11,5.3)(12,4.8)(13,4.0)(14,3.5)(15,3.0)(16,2.8)};
\addplot[red, mark=triangle*, mark size=1.2] coordinates {
(1,0.8)(2,1.1)(3,1.5)(4,2.0)(5,2.4)(6,3.1)(7,3.8)(8,4.2)
(9,4.5)(10,4.7)(11,4.3)(12,3.6)(13,3.0)(14,2.5)(15,2.1)(16,1.8)};
\addplot[green!60!black, mark=diamond*, mark size=1.2] coordinates {
(1,0.6)(2,0.9)(3,1.6)(4,2.5)(5,3.2)(6,4.5)(7,5.4)(8,5.8)
(9,6.3)(10,6.1)(11,5.2)(12,4.0)(13,3.1)(14,2.3)(15,1.8)(16,1.6)};
\addplot[orange, mark=o, mark size=1.2] coordinates {
(1,0.4)(2,0.8)(3,1.4)(4,2.1)(5,3.0)(6,4.2)(7,5.0)(8,5.8)
(9,6.1)(10,5.6)(11,4.8)(12,4.1)(13,3.5)(14,3.2)(15,2.9)(16,2.6)};
\addplot[purple, mark=star, mark size=1.5] coordinates {
(1,1.2)(2,1.8)(3,2.6)(4,3.8)(5,4.2)(6,5.0)(7,6.2)(8,7.2)
(9,8.0)(10,7.3)(11,6.0)(12,4.8)(13,3.8)(14,3.2)(15,2.7)(16,2.4)};
\legend{Germany, France, Italy, Spain, Netherlands}
\end{axis}
\end{tikzpicture}
\caption{Core HICP inflation for selected euro-area countries.}\label{fig:core}
\begin{flushleft}
\footnotesize Note: Source: Eurostat HICP (prc\_hicp\_manr), series ``All-items HICP excluding energy, food, alcohol, and tobacco,'' quarterly averages of monthly year-on-year rates, 2021:1--2024:4.
\end{flushleft}
\end{figure}
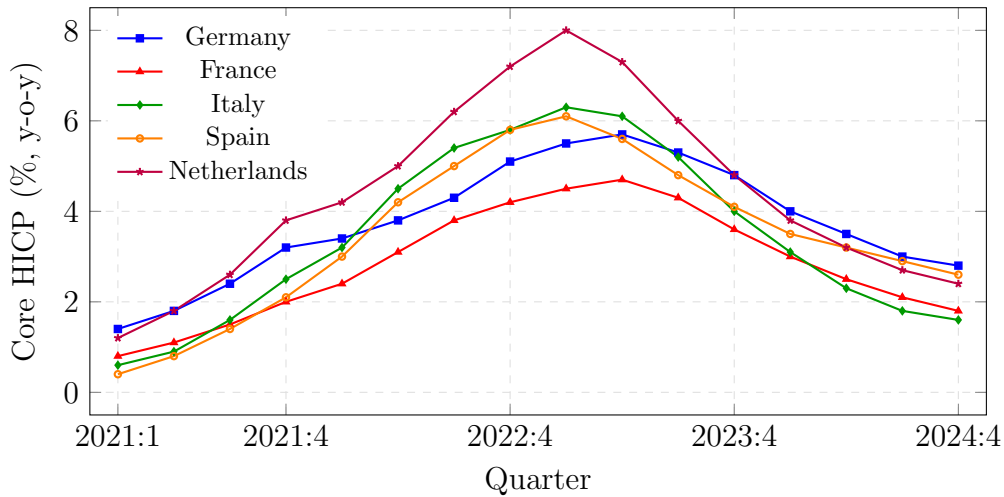

To measure persistence, I define the cumulative core inflation overshoot in country~$c$ as $\Pi^{core}_c \equiv \sum_{t} (\pi^{core}_{c,t} - 2\%)$ over 2021:3--2024:4. Table~\ref{tab:persistence} reports the results. The cross-country standard deviation of $\Pi^{core}_c$ is 6.8 percentage-point-quarters.

\begin{table}[htbp]
\centering\small
\caption{Cumulative core inflation overshoot, 2021:3--2024:4}\label{tab:persistence}
\begin{tabular}{lccc}
\toprule
Country & Peak core (\%) & $\Pi^{core}_c$ (pp$\cdot$q) & Peak services (\%) \\
\midrule
Germany & 5.7 & 42.3 & 5.2 \\
France & 4.7 & 30.8 & 4.0 \\
Italy & 6.3 & 38.5 & 4.8 \\
Spain & 6.1 & 42.8 & 5.9 \\
Netherlands & 8.0 & 51.2 & 7.6 \\
Belgium & 6.9 & 46.1 & 6.3 \\
\bottomrule
\end{tabular}

\medskip
\begin{flushleft}
\footnotesize Note: Cumulative overshoot $\Pi^{core}_c\equiv\sum_t(\pi^{core}_{c,t}-2\%)$ in percentage-point-quarters, summed over 2021:3--2024:4. Core HICP excludes energy, food, alcohol, and tobacco. Peak core and peak services are maximum quarterly year-on-year rates. Source: Eurostat.
\end{flushleft}
\end{table}

\subsection{Correlation with within-country labor market composition}

I construct a preliminary measure of RWEI for each country using expenditure shares by income quintile from the 2020 Eurostat Household Budget Survey, item-level HICP inflation at the onset of the episode, and reset weights approximated from the ECB wage tracker and OECD/AIAS ICTWSS database. Table~\ref{tab:expenditure} reports the underlying expenditure data.

\begin{table}[htbp]
\centering\small
\caption{Expenditure shares and contract durations}\label{tab:expenditure}
\begin{tabular}{lcccc}
\toprule
& \multicolumn{2}{c}{Food + energy share (\%)} & Gap & Avg.\ contract \\
\cmidrule(lr){2-3}
Country & Q1 (bottom) & Q5 (top) & (pp) & (years) \\
\midrule
Germany & 36.2 & 19.8 & 16.4 & 2.1 \\
France & 34.5 & 20.4 & 14.1 & 2.5 \\
Italy & 38.1 & 21.7 & 16.4 & 2.8 \\
Spain & 39.8 & 19.1 & 20.7 & 1.8 \\
Netherlands & 33.0 & 18.5 & 14.5 & 2.0 \\
Belgium & 35.8 & 21.2 & 14.6 & 2.3 \\
\bottomrule
\end{tabular}

\medskip
\begin{flushleft}
\footnotesize Note: Food + energy shares are budget shares of COICOP categories CP01 (food) and CP04.5/CP07.2 (energy) by income quintile, from the 2020 Eurostat Household Budget Survey. Gap is Q1 minus Q5 share in percentage points. Average contract duration from OECD/AIAS ICTWSS database and ECB wage tracker.
\end{flushleft}
\end{table}

Figure~\ref{fig:scatter} plots RWEI, as defined in equation~(\ref{eq:rwei}), against cumulative core inflation. The cross-country correlation is weakly positive: RWEI predicts which countries have the largest reset-heterogeneity wedge, but total cumulative core inflation depends on many additional factors such as indexation, bargaining institutions, and slack that RWEI does not capture. The model's value is not in predicting the level of core inflation, which the standard model already does, but in predicting the additional wage pressure from the wedge channel.

\begin{figure}[htbp]
\centering
\begin{tikzpicture}
\begin{axis}[
    width=10cm, height=7cm,
    xlabel={RWEI$_{c,t_0}$ (\%)}, ylabel={$\Pi^{core}_c$ (pp$\cdot$q)},
    xmin=3.5, xmax=7.5, ymin=25, ymax=55,
    grid=major, grid style={dashed, gray!25},
    nodes near coords, every node near coord/.append style={font=\footnotesize, anchor=south west},
    point meta=explicit symbolic
]
\addplot[only marks, mark=*, mark size=2.5, blue!70!black] coordinates {
(4.8,42.3)[DE] (4.2,30.8)[FR] (5.3,38.5)[IT] (6.5,42.8)[ES] (5.8,51.2)[NL] (5.5,46.1)[BE]};
\addplot[red!70!black, dashed, domain=3.5:7.5, samples=2] {11.2+5.8*x};
\end{axis}
\end{tikzpicture}
\caption{RWEI versus cumulative core inflation.}\label{fig:scatter}
\begin{flushleft}
\footnotesize Note: Each point is one euro-area country. Horizontal axis: Reset-Weighted Experienced Inflation (Definition 1), computed from pre-shock expenditure shares and item-level HICP at 2021:3. Vertical axis: cumulative core inflation overshoot $\Pi^{core}_c$ from Table 1. The correlation captures the raw association; the model's value is in explaining the mechanism, not the cross-sectional fit.
\end{flushleft}
\end{figure}
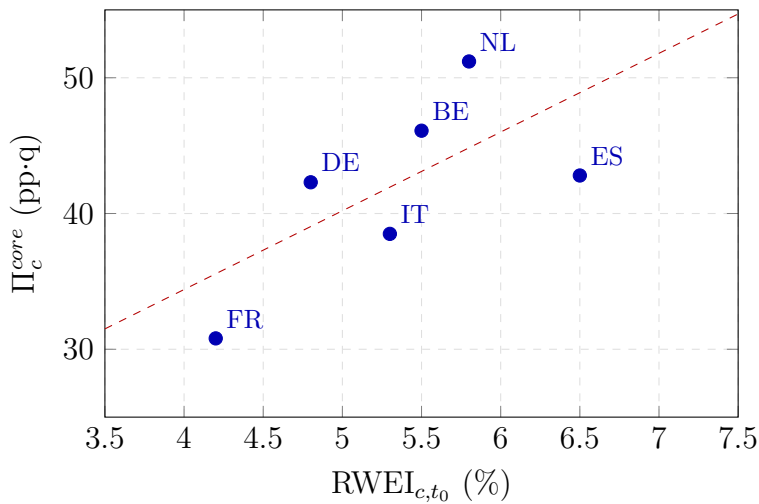

\subsection{The standard model underpredicts dispersion}

For each country, I calibrate a one-sector New Keynesian Phillips curve using country-specific average contract duration and import openness. The standard model generates a cross-country standard deviation of $\Pi^{core}_c$ that is substantially smaller than the observed 6.8 pp$\cdot$q, because it imposes homogeneous baskets and reset timing. The heterogeneous-reset model developed below generates dispersion closer to the data through the wedge channel.

\section{Model}\label{sec:model}

This section specifies the primitive environment used to derive the theoretical results in Section~\ref{sec:theory}. Section~\ref{sec:quant} describes the quantitative extension with idiosyncratic risk and incomplete markets.

\subsection{Overview and timing}

Time is discrete and indexed by $t = 0, 1, 2, \ldots$ The economy is a monetary union of $N$ countries indexed by $c \in \{1,\ldots,N\}$. Within each country there are heterogeneous workers, domestic firms operating in two sectors, and an imported-essentials sector. Countries share a common central bank. Each country has a fiscal authority.

The timing within each period is as follows. At the beginning of period $t$: (i)~the cost-push shock $u_t$ is realized and the imported-essentials price is determined; (ii)~a randomly selected fraction $\theta_{g,c}$ of type-$g$ workers in country $c$ receive the opportunity to reset their nominal wage; (iii)~production takes place, firms set prices, and goods and labor markets clear; (iv)~the central bank sets the nominal interest rate; (v)~consumption and saving occur.

\subsection{Households}\label{sec:hh}

Within each country $c$, there is a unit mass of infinitely lived workers. Workers are divided into two types, $g \in \{H, L\}$, with fixed population shares $\eta_{H,c}$ and $\eta_{L,c} = 1 - \eta_{H,c}$. Type is permanent and publicly observable.

\noindent\textbf{Preferences.} A type-$g$ worker in country $c$ has period utility
\begin{equation}\label{eq:utility}
    U(C_{g,c,t}, N_{g,c,t}) = \frac{C_{g,c,t}^{1-\sigma}}{1-\sigma} - \chi \frac{N_{g,c,t}^{1+\varphi_n}}{1+\varphi_n}\,,
\end{equation}
where $C_{g,c,t}$ is consumption of a composite good and $N_{g,c,t}$ is labor supply. The parameter $\sigma > 0$ is the coefficient of relative risk aversion, $\varphi_n > 0$ is the inverse Frisch elasticity of labor supply, and $\chi > 0$ scales the disutility of labor. Workers discount the future at rate $\beta \in (0,1)$ and maximize $\mathbb{E}_0 \sum_{t=0}^{\infty} \beta^t U(C_{g,c,t}, N_{g,c,t})$.

\noindent\textbf{Consumption basket.} The composite consumption good $C_{g,c,t}$ is a Cobb-Douglas aggregate of three categories: imported essentials ($e$), domestically produced goods ($d$), and services ($s$):
\begin{equation}\label{eq:basket}
    C_{g,c,t} = \left(\frac{C^{e}_{g,c,t}}{\alpha_{g,c,e}}\right)^{\!\alpha_{g,c,e}} \!\cdot\! \left(\frac{C^{d}_{g,c,t}}{\alpha_{g,c,d}}\right)^{\!\alpha_{g,c,d}} \!\cdot\! \left(\frac{C^{s}_{g,c,t}}{\alpha_{g,c,s}}\right)^{\!\alpha_{g,c,s}},
\end{equation}
where $\alpha_{g,c,e} + \alpha_{g,c,d} + \alpha_{g,c,s} = 1$. The expenditure shares $(\alpha_{g,c,e}, \alpha_{g,c,d}, \alpha_{g,c,s})$ are type-specific and country-specific. This is the key heterogeneity in the model: different types of workers allocate different fractions of their budget to different goods.

The ideal price index associated with (\ref{eq:basket}) is
\begin{equation}\label{eq:priceindex}
    P_{g,c,t} = P_{e,t}^{\alpha_{g,c,e}} \cdot P_{d,c,t}^{\alpha_{g,c,d}} \cdot P_{s,c,t}^{\alpha_{g,c,s}}\,,
\end{equation}
where $P_{e,t}$ is the price of imported essentials, common across the union, $P_{d,c,t}$ is the price of domestic goods in country $c$, and $P_{s,c,t}$ is the price of services in country $c$. In logs, the price index is
\begin{equation}\label{eq:pindex}
    p_{g,c,t} = \alpha_{g,c,e}\,p_{e,t} + \alpha_{g,c,d}\,p_{d,c,t} + \alpha_{g,c,s}\,p_{s,c,t}\,.
\end{equation}
Intratemporal optimization implies that expenditure on each category equals its share times total expenditure: $P_{i,c,t} C^i_{g,c,t} = \alpha_{g,c,i} \cdot P_{g,c,t} C_{g,c,t}$ for $i \in \{e,d,s\}$.

\begin{assumption}\label{ass:baskets}
Type-$H$ workers spend a larger share on imported essentials than type-$L$ workers:
\begin{equation*}
    \alpha_{H,c,e} > \alpha_{L,c,e} \qquad \text{for all } c.
\end{equation*}
\end{assumption}

This assumption captures the well-documented empirical regularity that low-income households allocate a larger fraction of their budget to necessities such as food and energy. Across the euro area, households in the bottom income quintile spend 33--40 percent of their budget on food and energy, compared with 17--22 percent for the top quintile (Table~\ref{tab:expenditure}).

\noindent\textbf{Budget constraint.} In the simple model, workers have access to a risk-free nominal bond $B_{g,c,t}$ that pays gross nominal return $1+i_t$, where $i_t$ is the common nominal interest rate set by the union's central bank. The flow budget constraint is
\begin{equation}\label{eq:budget}
    P_{g,c,t}\,C_{g,c,t} + B_{g,c,t+1} = W_{g,c,t}\,N_{g,c,t} + (1+i_{t-1})\,B_{g,c,t} + T_{g,c,t} - \mathcal{T}_{c,t}\,,
\end{equation}
where $W_{g,c,t}$ is the nominal wage of type $g$, $T_{g,c,t}$ is a targeted transfer from the fiscal authority, which is zero in the baseline, and $\mathcal{T}_{c,t}$ is a lump-sum tax levied on all workers. The quantitative extension in Section~\ref{sec:quant} adds idiosyncratic productivity risk and a borrowing constraint.

\subsection{Wage setting}\label{sec:wages}

Nominal wages are set in a Calvo-staggered fashion, following \citet{erceg2000}. In each period, each type-$g$ worker in country $c$ receives, independently and with probability $\theta_{g,c} \in (0,1)$, the opportunity to choose a new nominal wage. With complementary probability $1 - \theta_{g,c}$, the worker retains the nominal wage from the previous period.

\begin{assumption}\label{ass:reset}
Type-$H$ workers reset wages more frequently than type-$L$ workers:
\begin{equation*}
    \theta_{H,c} > \theta_{L,c} \qquad \text{for all } c.
\end{equation*}
\end{assumption}

In European labor markets, workers in services and retail typically operate under contracts of one to two years, while workers in manufacturing and the public sector have contracts lasting two to four years. Since type-$H$ workers are concentrated in the former sectors and type-$L$ workers in the latter, Assumption~\ref{ass:reset} is the natural counterpart of Assumption~\ref{ass:baskets}. The conjunction of these two assumptions, that the workers most exposed to imported cost-push shocks are also the workers who reset wages most frequently, is the source of the mechanism studied in this paper.

\noindent\textbf{Optimal reset wage.} When a type-$g$ worker in country $c$ receives the opportunity to reset at date $t$, the worker chooses the nominal wage $W^*_{g,c,t}$ to maximize the expected discounted utility from the current date until the next reset opportunity, taking into account the probability of retaining the chosen wage in future periods. Formally, substituting consumption from the budget constraint (\ref{eq:budget}) and using the period utility (\ref{eq:utility}), $W^*_{g,c,t}$ solves
\begin{equation}\label{eq:wage_problem}
    \max_{W^*} \sum_{k=0}^{\infty} (\beta\Theta_{g,c})^k\,\mathbb{E}_t\!\left[U\!\left(\frac{W^* N_{g,c,t+k} + (1+i_{t+k-1})B_{g,c,t+k} + T_{g,c,t+k} - \mathcal{T}_{c,t+k}}{P_{g,c,t+k}},\, N_{g,c,t+k}\right)\right],
\end{equation}
where $\Theta_{g,c} \equiv 1 - \theta_{g,c}$ is the probability of not resetting, and labor demand $N_{g,c,t+k}$ depends on $W^*$ through the firm's cost minimization (described below in Section~\ref{sec:firms}). The factor $(\beta\Theta_{g,c})^k$ reflects both discounting and the probability that the wage $W^*$ is still in effect at date $t+k$.

Log-linearizing the first-order condition of (\ref{eq:wage_problem}) around a zero-inflation steady state yields the optimal log reset wage:
\begin{equation}\label{eq:wreset}
    w^*_{g,c,t} = (1-\beta\Theta_{g,c})\sum_{k=0}^{\infty}(\beta\Theta_{g,c})^k\,\mathbb{E}_t\!\left[p_{g,c,t+k} + \mathrm{mrpl}_{g,c,t+k}\right].
\end{equation}
Here $p_{g,c,t+k} = \log P_{g,c,t+k}$ is the type-specific log price index defined in (\ref{eq:pindex}), and $\mathrm{mrpl}_{g,c,t+k}$ is the log marginal revenue product of type-$g$ labor, derived from the firm's problem in Section~\ref{sec:firms}. The derivation follows the standard steps in \citet{erceg2000} and is given in Appendix~\ref{app:wreset}. The key departure from the standard model is that each worker type targets its own price index $p_{g,c,t+k}$ rather than the aggregate CPI: when the imported-essentials price rises, type-$H$ workers, who spend a larger share on essentials, perceive a larger increase in their cost of living and demand commensurately higher wages.

\noindent\textbf{Wage aggregation.} The aggregate nominal wage index for type $g$ in country $c$ evolves as
\begin{equation}\label{eq:wage_index}
    W_{g,c,t} = \left[\theta_{g,c}(W^*_{g,c,t})^{1-\varepsilon_w} + (1-\theta_{g,c})(W_{g,c,t-1})^{1-\varepsilon_w}\right]^{\frac{1}{1-\varepsilon_w}}\,,
\end{equation}
where $\varepsilon_w > 1$ is the elasticity of substitution across differentiated labor services within each type. In log-linearized form, this becomes
\begin{equation}\label{eq:wage_loglin}
    w_{g,c,t} = \theta_{g,c}\,w^*_{g,c,t} + (1-\theta_{g,c})\,w_{g,c,t-1}\,,
\end{equation}
so type-$g$ wage inflation is
\begin{equation}\label{eq:piw_type}
    \pi^w_{g,c,t} \equiv w_{g,c,t} - w_{g,c,t-1} = \theta_{g,c}(w^*_{g,c,t} - w_{g,c,t-1}).
\end{equation}
Aggregate wage inflation in country $c$ is the employment-weighted average:
\begin{equation}\label{eq:piw_agg}
    \pi^w_{c,t} = \sum_{g \in \{H,L\}} \eta_{g,c}\,\pi^w_{g,c,t}\,.
\end{equation}

\subsection{Inflation expectations}\label{sec:expectations}

Workers form expectations about future inflation based on their own experienced inflation. I adopt the specification
\begin{equation}\label{eq:expect}
    \mathbb{E}^g_{c,t}[\pi_{c,t+1}] = \bar{\pi}_{c,t} + \varphi_{g,c}\,\tilde{\pi}^{exp}_{g,c,t}\,,
\end{equation}
where $\bar{\pi}_{c,t}$ is a common signal, for example the central bank's inflation target or recent headline inflation, $\varphi_{g,c} \geq 0$ is a sensitivity parameter, and $\tilde{\pi}^{exp}_{g,c,t}$ is the salient experienced inflation of type $g$, defined as
\begin{equation}\label{eq:piexp}
    \tilde{\pi}^{exp}_{g,c,t} \equiv \sum_{i \in \{e,d,s\}} \lambda_i\,\alpha_{g,c,i}\,\Delta p_{i,c,t}\,.
\end{equation}
Here $\Delta p_{i,c,t} \equiv p_{i,c,t} - p_{i,c,t-1}$ is the log price change in category $i$, and $\lambda_i \geq 1$ is a salience weight that allows frequently purchased items such as food and energy to influence expectations more than their expenditure shares alone would imply.

The parameters $\varphi_{g,c}$ and $\lambda_i$ are not estimated within the model; they are pinned from external micro evidence. The baseline sets $\lambda_e = 1.3$ for essentials and $\lambda_i = 1$ for other goods, so that food and energy are 30 percent more salient than their expenditure share alone would imply, following the estimates in \citet{dacunto2021}. Expectation sensitivity is $\varphi_{g,c} = 0.35$ for all $g$ and $c$, following cross-sectional regressions of survey-measured inflation expectations on experienced inflation by income group in the ECB Consumer Expectations Survey. The results section reports robustness to alternative values.

The economic content of (\ref{eq:expect})--(\ref{eq:piexp}) is that workers who experience higher inflation, because they spend more on items whose prices are rising, form higher expectations of future inflation. When these workers subsequently reset their wages, they demand higher compensation not only for past price increases but also for expected future ones. This creates a feedback loop that amplifies the initial cost-push shock: experienced inflation raises expectations, which raise reset wages, which raise services prices, which raise the experienced inflation of the next cohort of resetting workers.

\subsection{Firms}\label{sec:firms}

Each country $c$ has two domestic production sectors, goods ($d$) and services ($s$), and an imported-essentials sector ($e$).

\noindent\textbf{Imported essentials.} The imported-essentials price is exogenous to each country and common across the union:
\begin{equation}\label{eq:pe}
    p_{e,t} = \bar{p}_e + u_t\,,
\end{equation}
where $u_t$ is the cost-push shock. This captures the common external shocks, namely energy, food, shipping, and imported intermediates, that hit all euro-area members simultaneously.

\noindent\textbf{Technology.} In each domestic sector $j \in \{d,s\}$, a representative competitive firm in country $c$ produces output $Y_{j,c,t}$ using labor from both worker types and intermediate inputs from the other sector:
\begin{equation}\label{eq:production}
    Y_{j,c,t} = A_{j,c}\!\left[\alpha_{j,c}\,L_{j,c,t}^{\frac{\rho-1}{\rho}} + (1-\alpha_{j,c})\,M_{j,c,t}^{\frac{\rho-1}{\rho}}\right]^{\frac{\rho}{\rho-1}},
\end{equation}
where $A_{j,c}$ is total factor productivity, $L_{j,c,t}$ is a composite of type-$H$ and type-$L$ labor, $M_{j,c,t}$ is intermediate input from the other domestic sector and from imported essentials, $\alpha_{j,c} \in (0,1)$ is the labor share, and $\rho > 0$ is the elasticity of substitution between labor and intermediates.

\noindent\textbf{Labor demand.} Within each sector $j$, the labor composite $L_{j,c,t}$ aggregates the two worker types with a CES aggregator of elasticity $\varepsilon_w$. Cost minimization yields the demand for type-$g$ labor: $N^d_{g,j,c,t}=\eta_{g,j,c}(W_{g,c,t}/W_{j,c,t})^{-\varepsilon_w}L_{j,c,t}$, where $W_{j,c,t}$ is the sectoral wage index. The log marginal revenue product of type-$g$ labor is
\begin{equation}\label{eq:mrpl}
    \mathrm{mrpl}_{g,c,t} = p_{j,c,t} + \log\mathrm{MP}_{L_j,c,t} - (\varepsilon_w - 1)(w_{g,c,t} - w_{j,c,t})\,,
\end{equation}
where $\mathrm{MP}_{L_j,c,t}\equiv\alpha_{j,c}Y_{j,c,t}/L_{j,c,t}$ is the physical marginal product of the labor composite and $w_{j,c,t}=\sum_g\eta_{g,j,c}w_{g,c,t}$ is the sectoral wage index. To first order around a symmetric steady state ($w_{g,c}^{ss}=w_c^{ss}$ for all $g$), the relative-wage term vanishes and $\mathrm{mrpl}_{g,c,t}\approx p_{j,c,t}+\log\mathrm{MP}_{L_j,c,t}$ for both types. The details are given in Appendix~\ref{app:firms}.

\noindent\textbf{Price setting.} Firms in each domestic sector face Calvo price adjustment. In each period, a fraction $\theta^p_{j,c}$ of firms can reset their price. The log-linearized sectoral price satisfies
\begin{equation}\label{eq:psector}
    p_{j,c,t} = \alpha_{j,c}\,w_{j,c,t} + (1-\alpha_{j,c})\sum_{j'}\xi_{j,j',c}\,p_{j',c,t}\,,
\end{equation}
where $w_{j,c,t} = \sum_g \eta_{g,j,c}\,w_{g,c,t}$ is the sectoral wage (a weighted average of type-specific wages, with weights $\eta_{g,j,c}$ equal to the employment share of type $g$ in sector $j$), and $\xi_{j,j',c}$ are input-output weights satisfying $\sum_{j'}\xi_{j,j',c} = 1$.

\noindent\textbf{Regularity condition.} I maintain throughout that services has a higher labor share than goods: $\alpha_{s,c} > \alpha_{d,c}$ for all $c$. This is an empirical regularity rather than a modeling choice: across all six euro-area countries in the sample, the services labor share exceeds 0.60 while the goods labor share is below 0.45 (Table~\ref{tab:param_country}). The condition implies that wage increases in the services sector feed into services prices more strongly than wage increases in goods feed into goods prices.

\noindent\textbf{Propagation weights.} To summarize how wage increases in each sector affect broader prices, I define the propagation weight of sector $j$ in country $c$.

\begin{definition}\label{def:propweight}
The propagation weight of sector $j$ in country $c$ is
\begin{equation}\label{eq:propweight}
    \nu_{j,c} \equiv \alpha_{j,c} \cdot e_{j,c} \cdot (1-\theta^p_{j,c})^{-1}\,,
\end{equation}
where $\alpha_{j,c}$ is the labor share, $e_{j,c}$ is the eigenvector centrality of sector $j$ in the input-output network of country $c$, which measures how much a price change in sector $j$ propagates to other sectors through intermediate-input linkages, and $(1-\theta^p_{j,c})^{-1}$ captures the amplification from nominal price rigidity.
\end{definition}

The propagation weight combines three channels through which a sectoral wage increase affects broader prices: the direct effect through the labor share ($\alpha_{j,c}$), the indirect effect through the input-output network ($e_{j,c}$), and the accumulation effect through price stickiness ($(1-\theta^p_{j,c})^{-1}$). Since services has a higher labor share ($\alpha_{s,c}>\alpha_{d,c}$), $\nu_{s,c} > \nu_{d,c}$: wage increases in services propagate more than wage increases in goods.

\subsection{Aggregation and core inflation}

Core inflation in country $c$ is the GDP-weighted average of domestic sectoral price changes:
\begin{equation}\label{eq:core}
    \pi^{core}_{c,t} \equiv \omega_{d,c}\,\Delta p_{d,c,t} + \omega_{s,c}\,\Delta p_{s,c,t}\,,
\end{equation}
where $\omega_{d,c}$ and $\omega_{s,c}$ are the GDP shares of goods and services. This corresponds to the empirical concept of HICP excluding energy and unprocessed food.

\subsection{Aggregate demand}

Log-linearizing the household's Euler equation and aggregating across types yields the IS curve for country $c$:
\begin{equation}\label{eq:IS}
    x_{c,t} = \mathbb{E}_t[x_{c,t+1}] - \sigma_c^{-1}(i_t - \mathbb{E}_t[\pi_{c,t+1}] - r^n_{c,t})\,,
\end{equation}
where $x_{c,t}$ is the output gap, $i_t$ is the common nominal interest rate, $\pi_{c,t}$ is headline inflation, $r^n_{c,t}$ is the natural real rate, and $\sigma_c$ is the aggregate intertemporal elasticity of substitution. In the simple model with complete markets, $\sigma_c=\sigma$ for all $c$; in the quantitative HANK extension, $\sigma_c$ is an effective parameter that differs across countries due to heterogeneity in the wealth distribution and marginal propensities to consume. The derivation is standard and given in Appendix~\ref{app:IS}.

\subsection{Policy}\label{sec:policy_model}

\noindent\textbf{Monetary policy.} The common central bank sets the nominal interest rate according to a Taylor rule that responds to union-wide aggregates:
\begin{equation}\label{eq:taylor}
    i_t = r^* + \phi_\pi\,\pi^{core}_{union,t} + \phi_y\,x_{union,t}\,,
\end{equation}
where
\begin{equation}\label{eq:union_agg}
    \pi^{core}_{union,t} \equiv \sum_c w_c\,\pi^{core}_{c,t}\,,\qquad x_{union,t} \equiv \sum_c w_c\,x_{c,t}\,,
\end{equation}
are GDP-weighted union averages, $r^*$ is the steady-state real rate, and $\phi_\pi > 1$, $\phi_y > 0$ are Taylor rule coefficients. The constraint that all countries face the same $i_t$ is the defining feature of the monetary union.

\noindent\textbf{Fiscal policy.} Each country $c$ has a fiscal authority that can make type-specific lump-sum transfers $T_{g,c,t}$ to its workers, financed by lump-sum taxes $\mathcal{T}_{c,t}$ on all workers, subject to a balanced-budget constraint $\sum_g \eta_{g,c}\,T_{g,c,t} = \mathcal{T}_{c,t}$. The model considers four fiscal regimes: (i)~no fiscal response ($T_{g,c,t} = 0$); (ii)~uniform transfer ($T_{H,c,t} = T_{L,c,t}$); (iii)~targeted transfer ($T_{H,c,t} = \tau_{c,t} > 0$, $T_{L,c,t} = 0$); (iv)~essentials subsidy, implemented as a reduction in the effective essentials price faced by all workers.

\subsection{Equilibrium}

An equilibrium consists of sequences of prices, quantities, and distributions for all worker types $g$, sectors $j$, countries $c$, and dates $t$, such that:
\begin{enumerate}
    \item Households maximize expected discounted utility (\ref{eq:utility}) subject to (\ref{eq:budget}), taking prices and wages as given, except when resetting their own wage.
    \item Resetting workers choose the optimal wage $W^*_{g,c,t}$ to solve (\ref{eq:wage_problem}).
    \item Non-resetting workers retain their previous nominal wage.
    \item Firms in each domestic sector choose inputs to maximize profits given technology (\ref{eq:production}) and Calvo pricing.
    \item The central bank sets $i_t$ according to (\ref{eq:taylor}).
    \item The fiscal authority in each country satisfies the balanced-budget constraint.
    \item The goods market in each country clears: domestic output equals domestic consumption plus net exports for tradable goods, or equals domestic consumption for non-tradable services.
    \item The bond market clears: $\sum_c\sum_g \eta_{g,c}\,B_{g,c,t} = \bar{B}$ where $\bar{B}$ is the exogenous aggregate bond supply.
\end{enumerate}

\subsection{Key objects}\label{sec:keyobjects}

I now define three objects that play a central role in the analysis. Each measures a different aspect of the inflation experienced by workers at the wage-reset margin.

\begin{definition}\label{def:rwei}
Reset-Weighted Experienced Inflation (RWEI) in country $c$ at date $t$ is
\begin{equation}\label{eq:rwei}
    \mathrm{RWEI}_{c,t} \equiv \sum_{g \in \{H,L\}} \omega^{reset}_{g,c} \cdot \tilde{\pi}^{exp}_{g,c,t}\,,
\end{equation}
where $\omega^{reset}_{g,c} \equiv \eta_{g,c}\,\theta_{g,c}\,/\,\bar{\theta}_c$ is the share of aggregate wage resets accounted for by type-$g$ workers, $\bar{\theta}_c \equiv \sum_g \eta_{g,c}\,\theta_{g,c}$ is the average reset frequency, and $\tilde{\pi}^{exp}_{g,c,t}$ is salient experienced inflation defined in (\ref{eq:piexp}).
\end{definition}

RWEI measures the average inflation experienced by workers who are actually in a position to renegotiate wages in the current period. It differs from average CPI inflation whenever the workers at the reset margin have different consumption baskets from the average worker. Under Assumptions~\ref{ass:baskets}--\ref{ass:reset}, type-$H$ workers account for a disproportionate share of resets (because $\theta_{H,c} > \theta_{L,c}$) and experience higher inflation from essentials price shocks (because $\alpha_{H,c,e} > \alpha_{L,c,e}$). Therefore, RWEI exceeds average inflation when essentials prices rise.

\begin{definition}\label{def:mwsi}
Marginal Wage Setter Inflation (MWSI) in country $c$ at date $t$ is
\begin{equation}\label{eq:mwsi}
    \mathrm{MWSI}_{c,t} \equiv \sum_{g \in \{H,L\}} (\omega^{reset}_{g,c} - \eta_{g,c}) \cdot \nu_{s(g),c} \cdot \tilde{\pi}^{exp}_{g,c,t}\,,
\end{equation}
where $\nu_{s(g),c}$ is the propagation weight (Definition~\ref{def:propweight}) of the sector employing type-$g$ workers.
\end{definition}

MWSI augments the reset-heterogeneity wedge with sectoral propagation: it measures how much the excess inflation experienced by resetting workers relative to the average worker propagates into core prices through the input-output network. MWSI equals $\Omega_{c,t}$ when propagation weights are homogeneous ($\nu_{s,c}=\nu_{d,c}$), and exceeds $\Omega_{c,t}$ when type-$H$ workers are concentrated in high-propagation sectors such as services. MWSI is zero in the standard model, where $\omega^{reset}_{g,c}=\eta_{g,c}$ for all $g$.

\begin{definition}\label{def:wedge}
The reset-heterogeneity wedge in country $c$ at date $t$ is
\begin{equation}\label{eq:wedge}
    \Omega_{c,t} \equiv \mathrm{RWEI}_{c,t} - \bar{\pi}_{c,t}\,,
\end{equation}
where $\bar{\pi}_{c,t} \equiv \sum_g \eta_{g,c}\,\tilde{\pi}^{exp}_{g,c,t}$ is the population-weighted average experienced inflation.
\end{definition}

The wedge measures how much more inflation is experienced by resetting workers relative to the average worker. It can equivalently be written as a weighted covariance:
\begin{equation}\label{eq:wedge_cov}
    \Omega_{c,t} = \frac{1}{\bar{\theta}_c}\,\mathrm{Cov}_{\eta_c}\!\left(\theta_{g,c},\, \tilde{\pi}^{exp}_{g,c,t}\right),
\end{equation}
where the covariance is taken across worker types weighted by population shares $\eta_{g,c}$. This covariance form makes explicit that the wedge depends on the correlation between how frequently a worker resets wages and how much inflation that worker experiences, not on either margin alone. When this correlation is positive, which is the empirically dominant case, the wedge is positive and the standard model, which implicitly sets $\Omega_{c,t} = 0$, underestimates wage pressure.

\section{Theoretical Results}\label{sec:theory}

This section states the main theoretical results. All proofs are in the Appendix.

\subsection{The heterogeneous wage Phillips curve}

The first step is to derive the wage Phillips curve for each type separately. Quasi-differencing the reset-wage equation (\ref{eq:wreset}) and substituting into the wage-inflation identity (\ref{eq:piw_type}) yields a type-specific wage Phillips curve (derived in Appendix~\ref{app:proofs}):
\begin{equation}\label{eq:twpc}
    \pi^w_{g,c,t} = \beta\,\mathbb{E}_t[\pi^w_{g,c,t+1}] + \kappa_{g,c}\,\hat{\omega}_{g,c,t} + \theta_{g,c}\,\pi^{exp}_{g,c,t}\,.
\end{equation}
There are three terms. The first is standard: expected future wage inflation. The second involves the type-specific real marginal cost gap, $\hat{\omega}_{g,c,t} \equiv \mathrm{mrpl}_{g,c,t} - w_{g,c,t} + p_{g,c,t}$, with slope $\kappa_{g,c} \equiv \theta_{g,c}(1-\beta\Theta_{g,c})/\Theta_{g,c}$. The third term, $\theta_{g,c}\,\pi^{exp}_{g,c,t}$, is the inflation experienced by type $g$, scaled by the reset probability. It is present because resetting workers target their own price index: when their cost of living rises, they demand higher nominal wages even if the aggregate output gap is unchanged.

Aggregating (\ref{eq:twpc}) across types using the population weights $\eta_{g,c}$ yields the paper's first result.

\begin{proposition}\label{prop:hwpc}
\textbf{(Heterogeneous Wage Phillips Curve.)} Aggregate wage inflation in country $c$ satisfies
\begin{equation}\label{eq:hwpc}
    \pi^w_{c,t} = \beta\,\mathbb{E}_t[\pi^w_{c,t+1}] + \tilde{\kappa}_c\,x_{c,t} + \bar{\theta}_c\,\bar{\pi}_{c,t} + \bar{\theta}_c\,\Omega_{c,t}\,,
\end{equation}
where $\tilde{\kappa}_c$ is the aggregate Phillips curve slope, defined as
\begin{equation}\label{eq:kappa_agg}
    \tilde{\kappa}_c \equiv \sum_g \eta_{g,c}\,\kappa_{g,c} = \sum_g \eta_{g,c}\,\frac{\theta_{g,c}(1-\beta\Theta_{g,c})}{\Theta_{g,c}}\,,
\end{equation}
and $\Omega_{c,t}$ is the reset-heterogeneity wedge (Definition~\ref{def:wedge}). The fourth term, $\bar{\theta}_c\,\Omega_{c,t}$, is absent from the standard New Keynesian wage Phillips curve. It is identically zero if and only if at least one of the following conditions holds:
\begin{enumerate}
    \item All worker types have identical consumption baskets: $\alpha_{H,c,i} = \alpha_{L,c,i}$ for all $i$.
    \item All worker types have identical reset frequencies: $\theta_{H,c} = \theta_{L,c}$.
    \item The price shock is uniform across all consumption categories: $\Delta p_{i,c,t}$ is the same for all $i$.
\end{enumerate}
Under Assumptions~\ref{ass:baskets}--\ref{ass:reset} and a positive imported-essentials price shock ($u_t > 0$), $\Omega_{c,t} > 0$ and the standard model understates wage inflation.
\end{proposition}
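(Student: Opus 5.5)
The proof aggregates the type-specific curve (\ref{eq:twpc}) term by term with the population weights $\eta_{g,c}$, and then reads off the wedge using the covariance form (\ref{eq:wedge_cov}). Throughout, I identify the experienced-inflation term $\pi^{exp}_{g,c,t}$ in (\ref{eq:twpc}) with the salient measure $\tilde\pi^{exp}_{g,c,t}$ of (\ref{eq:piexp}), since Definitions~\ref{def:rwei}--\ref{def:wedge} are stated in terms of that object.

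\textbf{Aggregation.} Because $\eta_{g,c}$ is fixed over time, $\sum_g \eta_{g,c}\beta\,\mathbb{E}_t[\pi^w_{g,c,t+1}] = \beta\,\mathbb{E}_t[\pi^w_{c,t+1}]$ by (\ref{eq:piw_agg}).

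For the marginal-cost term, I use the first-order result of Section~\ref{sec:firms}: around the symmetric steady state, $\mathrm{mrpl}_{g,c,t}$ is common across types. The type-specific gap $\hat\omega_{g,c,t}$ then differs across types only through $p_{g,c,t}-w_{g,c,t}$. I will map the common component into the output gap in the standard Erceg--Henderson--Levin way, using labor demand, the production function and the IS block. The residual differences in $p_{g,c,t}$ carry the basket heterogeneity, and the price-index effect of that heterogeneity is already captured separately by the $\theta_{g,c}\,\pi^{exp}_{g,c,t}$ term. So, to first order, I write $\sum_g\eta_{g,c}\kappa_{g,c}\hat\omega_{g,c,t}=\tilde\kappa_c x_{c,t}$, with $\tilde\kappa_c$ as in (\ref{eq:kappa_agg}). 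This is the step I expect to be the main obstacle. It requires arguing that type-specific deviations of the real-wage gap from the common gap are either second order or absorbed into the experienced-inflation term. I would first try to make this precise by imposing equal steady-state real wages and showing that the deviations net out once weighted by $\eta_{g,c}\kappa_{g,c}$.

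The last term is pure algebra:
\[
\sum_g \eta_{g,c}\theta_{g,c}\tilde\pi^{exp}_{g,c,t}
=\bar\theta_c\sum_g\omega^{reset}_{g,c}\tilde\pi^{exp}_{g,c,t}
=\bar\theta_c\,\mathrm{RWEI}_{c,t}
=\bar\theta_c\bar\pi_{c,t}+\bar\theta_c\Omega_{c,t},
\]
by Definitions~\ref{def:rwei} and~\ref{def:wedge}. This yields (\ref{eq:hwpc}).

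\textbf{Zero conditions.} With two types, the weighted covariance in (\ref{eq:wedge_cov}) factorizes:
\[
\Omega_{c,t}=\frac{\eta_{H,c}\eta_{L,c}}{\bar\theta_c}\,(\theta_{H,c}-\theta_{L,c})\sum_i\lambda_i(\alpha_{H,c,i}-\alpha_{L,c,i})\Delta p_{i,c,t}.
\]
The product is zero whenever condition 2 holds. It is also zero under condition 1, since then every basket difference vanishes. For condition 3, the differences $\alpha_{H,c,i}-\alpha_{L,c,i}$ sum to zero, so a uniform $\Delta p$ kills the sum. With unequal salience weights $\lambda_i$, however, uniformity has to be read as uniformity of the salient changes $\lambda_i\Delta p_{i,c,t}$, and I would state this explicitly.

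For the converse, I interpret ``identically zero'' as vanishing for every admissible path of price changes. If $\theta_{H,c}\ne\theta_{L,c}$ and the baskets differ, a price vector that is non-uniform in the direction of the basket difference gives a nonzero sum. A shock path that is never of that form is exactly the uniform case.

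\textbf{Sign.} Under $u_t>0$, (\ref{eq:pe}) gives $\Delta p_{e,t}=u_t$, while $\Delta p_{d,c,t}$ and $\Delta p_{s,c,t}$ respond only partially because of Calvo pricing and (\ref{eq:psector}). Rewrite the sum as $\sum_i\lambda_i(\alpha_{H,c,i}-\alpha_{L,c,i})(\Delta p_{i,c,t}-\Delta p_{k,c,t})$ for a reference category $k$. Then Assumption~\ref{ass:baskets}, together with $\lambda_e\ge1$ and essentials inflation exceeding domestic inflation, makes the basket factor positive. Assumption~\ref{ass:reset} makes $\theta_{H,c}-\theta_{L,c}>0$. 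Hence $\Omega_{c,t}>0$, and the standard model, which sets $\Omega_{c,t}=0$, understates $\pi^w_{c,t}$.
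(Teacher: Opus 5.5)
Your route is the paper's own: term-by-term aggregation of (\ref{eq:twpc}), the identity $\sum_g\eta_{g,c}\theta_{g,c}\tilde\pi^{exp}_{g,c,t}=\bar\theta_c\,\mathrm{RWEI}_{c,t}=\bar\theta_c(\bar\pi_{c,t}+\Omega_{c,t})$, the two-type factorization (the paper's (A.16)) for the zero conditions, and the sign read off from it. The genuine gap is the marginal-cost step you flagged, and the repair you sketch will not go through. Equal steady-state real wages at most deliver a common $\mathrm{mrpl}_{g,c,t}$, which is the paper's claim in Section~\ref{sec:firms}. Even granting that, what you need to vanish is
\[
\sum_g\eta_{g,c}\kappa_{g,c}\hat\omega_{g,c,t}-\tilde\kappa_c\sum_g\eta_{g,c}\hat\omega_{g,c,t}=\mathrm{Cov}_{\eta_c}\bigl(\kappa_{g,c},\,p_{g,c,t}-w_{g,c,t}\bigr).
\]
This term is first order and generically nonzero after an essentials shock. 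The reason is that $\kappa_{g,c}$ is strictly increasing in $\theta_{g,c}$, $p_{H,c,t}-p_{L,c,t}$ jumps on impact, and $w_{H,c,t}-w_{L,c,t}$ adjusts only gradually. This is the same heterogeneity that makes $\Omega_{c,t}\neq0$, and the $\eta_{c}\kappa$-weighting does not make it net out.

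Nor is it absorbed by $\theta_{g,c}\pi^{exp}_{g,c,t}$. $\hat\omega_{g,c,t}$ contains the level $p_{g,c,t}$ and the other term its first difference; both sit in (\ref{eq:twpc}). Under a persistent shock, $p_{H,c,t}-p_{L,c,t}$ stays positive after the inflation gap has closed or reversed. So (\ref{eq:hwpc}), with only $\tilde\kappa_c x_{c,t}$ collecting marginal costs, does not follow exactly from (\ref{eq:twpc}). The paper does not derive it either: its Step~1 only argues that $\mathrm{mrpl}$ is common and then asserts $\hat\omega_{g,c,t}\approx\hat\omega_{c,t}$. You should state $\sum_g\eta_{g,c}\kappa_{g,c}\hat\omega_{g,c,t}=\tilde\kappa_c x_{c,t}$ openly as a maintained approximation, or carry the covariance above as an extra term.

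Two smaller repairs are needed.

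\emph{Converse.} The price vectors annihilating $\sum_i\lambda_i(\alpha_{H,c,i}-\alpha_{L,c,i})\Delta p_{i,c,t}$ form a plane in $\mathbb{R}^3$, not just the uniform ray. So your claim that a shock never of that form is ``exactly the uniform case'' is false. The ``only if'' holds only generically, which is all the paper's Step~4 claims. Under your ``every price path'' reading it delivers condition~1 or~2, and condition~3 then drops out of the ``if'' direction.

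\emph{Sign.} Assumption~\ref{ass:baskets} plus essentials inflation above domestic inflation is not enough with three categories. If type $H$ spends more than type $L$ on goods and less on services, $\Delta p_{d,c,t}<\Delta p_{s,c,t}$ can flip the sign when the goods-share gap is large relative to the essentials-share gap. Add $\alpha_{H,c,i}\le\alpha_{L,c,i}$ for $i\in\{d,s\}$. Then, with $\lambda_d=\lambda_s=1$, the basket factor is $(\alpha_{H,c,e}-\alpha_{L,c,e})$ times $\lambda_e\Delta p_{e,t}$ minus a convex combination of $\Delta p_{d,c,t}$ and $\Delta p_{s,c,t}$. This is the structure implicit in (\ref{eq:wedge_cf}); the paper's Step~5 simply asserts $\pi^{exp}_{H,c,t}>\pi^{exp}_{L,c,t}$.

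\emph{Salience.} Your point that condition~3 must be read through $\lambda_i\Delta p_{i,c,t}$ when $\lambda_e\neq1$ is correct and goes beyond the paper. By the same token, your opening identification of $\pi^{exp}_{g,c,t}$ with $\tilde\pi^{exp}_{g,c,t}$ is exact only when $\lambda_i\equiv1$. The paper makes this assumption tacitly in (A.12).
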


The proof is in Appendix~\ref{app:proofs}. The key step is the aggregation of the experienced-inflation terms across types, which produces the RWEI object rather than average inflation (see equation (\ref{eq:wedge_cov})). The result is stated for two types for expositional clarity, but extends immediately to any number of types $G\geq 2$: the wedge is $(1/\bar\theta_c)\mathrm{Cov}_{\eta_c}(\theta_{g,c},\tilde\pi^{exp}_{g,c,t})$, which is well-defined for any discrete distribution over worker types. The quantitative model uses $G=5$.

A crucial feature of Proposition~\ref{prop:hwpc} is that the aggregation error is shock-dependent. The wedge $\Omega_{c,t}$ is large when the cost-push shock is concentrated on necessities and zero when the shock raises all prices uniformly. The standard wage Phillips curve therefore works well for demand shocks, which tend to raise prices broadly, but fails for supply shocks that hit essentials. Since the 2021--2023 episode was driven by energy and food, the standard model missed it. This shock-dependence is a qualitative prediction that no recalibration of the standard model can replicate: the representative-agent Phillips curve has no mechanism for producing different forecast errors depending on the composition of the shock.

Figure~\ref{fig:mechanism} illustrates the transmission mechanism. The standard model collapses the dashed box into a single representative agent; this paper opens it.

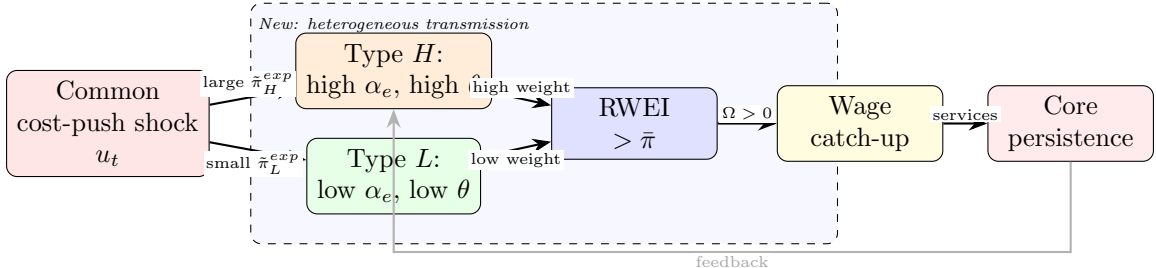
\begin{figure}[htbp]
\centering
\begin{tikzpicture}[
    block/.style={draw, rounded corners, minimum width=2.2cm, minimum height=0.8cm, align=center, font=\footnotesize},
    arrow/.style={-{Stealth[length=2.5mm]}, thick},
    label/.style={font=\tiny, midway, fill=white, inner sep=1pt}
]
\node[block, fill=red!10] (shock) at (0,0) {Common\\cost-push shock\\$u_t$};

\node[draw, dashed, rounded corners, minimum width=7.8cm, minimum height=3.2cm, fill=blue!3] (box) at (5.8,0) {};
\node[font=\tiny\itshape, anchor=north west] at (1.85,1.55) {New: heterogeneous transmission};

\node[block, fill=orange!15] (piH) at (3.8,0.7) {Type $H$:\\high $\alpha_e$, high $\theta$};
\node[block, fill=green!10] (piL) at (3.8,-0.7) {Type $L$:\\low $\alpha_e$, low $\theta$};

\node[block, fill=blue!12] (rwei) at (7.0,0) {RWEI\\$>\bar\pi$};

\node[block, fill=yellow!15] (wage) at (10.0,0) {Wage\\catch-up};

\node[block, fill=red!8] (core) at (12.8,0) {Core\\persistence};

\draw[arrow] (shock) -- (piH) node[label, above, pos=0.45] {large $\tilde\pi^{exp}_H$};
\draw[arrow] (shock) -- (piL) node[label, below, pos=0.45] {small $\tilde\pi^{exp}_L$};
\draw[arrow] (piH) -- (rwei) node[label, above] {high weight};
\draw[arrow] (piL) -- (rwei) node[label, below] {low weight};
\draw[arrow] (rwei) -- (wage) node[label, above] {$\Omega>0$};
\draw[arrow] (wage) -- (core) node[label, above] {services};

\draw[arrow, gray!60] (core.south) -- ++(0,-1.2) -| (piH.south) node[label, below, pos=0.25, text=gray!70] {feedback};
\end{tikzpicture}
\caption{Transmission mechanism.}\label{fig:mechanism}
\begin{flushleft}
\footnotesize Note: A common cost-push shock hits both worker types. Type $H$, with higher essentials expenditure share ($\alpha_{H,e}>\alpha_{L,e}$) and higher reset frequency ($\theta_H>\theta_L$), experiences more inflation and accounts for a disproportionate share of wage resets. RWEI exceeds average CPI because reset weights overweight type $H$. The standard model collapses the dashed box into a representative agent, setting $\Omega=0$. The feedback loop from core inflation to type-$H$ experienced inflation sustains the wedge beyond the initial shock.
\end{flushleft}
\end{figure}

Equation (\ref{eq:hwpc}) nests the standard wage Phillips curve as a special case. The first three terms on the right, expected future wage inflation, the output gap, and average experienced inflation, are present in the standard model. The fourth term is new. It says that aggregate wage inflation depends not only on how much inflation there is on average, but on which workers experience the inflation. When the cost-push shock concentrates on items consumed disproportionately by frequently resetting workers, the covariance in (\ref{eq:wedge_cov}) is positive, and the standard model underestimates wage pressure.

For the two-type case, the wedge has the explicit closed form
\begin{equation}\label{eq:wedge_cf}
    \Omega_{c,t} = \frac{\eta_{H,c}\eta_{L,c}}{\bar{\theta}_c}\,(\theta_{H,c}-\theta_{L,c})\,(\alpha_{H,c,e}-\alpha_{L,c,e})\,(\lambda_e\Delta p_{e,c,t} - \bar{\lambda}\Delta\bar{p}_{c,t})\,.
\end{equation}

Figure~\ref{fig:wedge_numerical} provides a concrete illustration. In this example, the two types have equal population shares but different baskets and reset frequencies. Because type $H$ accounts for 81 percent of wage resets (due to higher $\theta_H$) while experiencing 15 percent inflation (due to higher $\alpha_{H,e}$), RWEI exceeds average CPI by 2.5 percentage points. The standard model, which weights all workers equally, misses this.

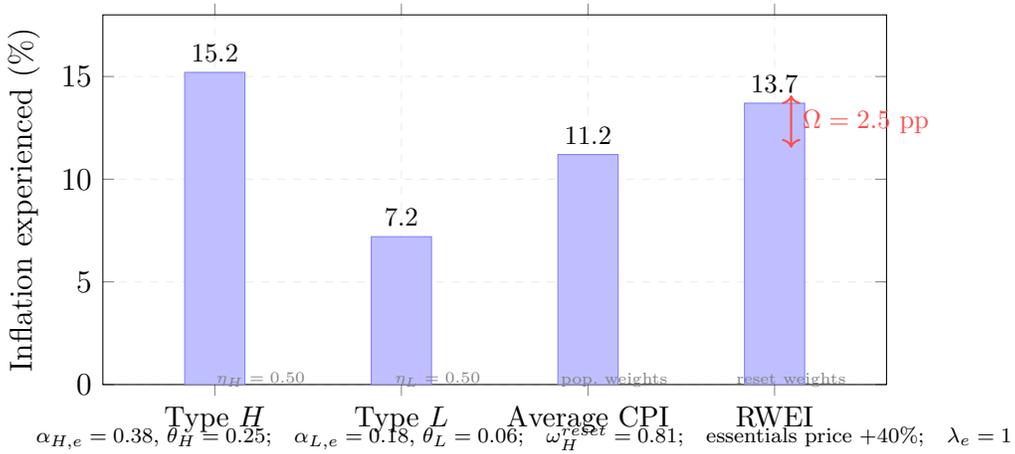
\begin{figure}[htbp]
\centering
\begin{tikzpicture}
\begin{axis}[
    ybar,
    width=12cm, height=6.5cm,
    bar width=0.8cm,
    ylabel={Inflation experienced (\%)},
    symbolic x coords={Type $H$, Type $L$, Average CPI, RWEI},
    xtick=data,
    ymin=0, ymax=18,
    nodes near coords,
    nodes near coords style={font=\footnotesize},
    every node near coord/.append style={anchor=south},
    legend style={at={(0.02,0.98)}, anchor=north west, font=\footnotesize, draw=none},
    grid=major, grid style={dashed, gray!15},
    x tick label style={font=\small},
    enlarge x limits=0.2,
]
\addplot[fill=blue!25, draw=blue!50] coordinates {(Type $H$,15.2) (Type $L$,7.2) (Average CPI,11.2) (RWEI,13.7)};
\end{axis}
\draw[<->, thick, red!70] (9.15,3.15) -- (9.15,3.85) node[midway, right, font=\footnotesize, red!70] {$\Omega=2.5$ pp};
\node[font=\tiny, anchor=north, text=gray] at (2.1,0.3) {$\eta_H=0.50$};
\node[font=\tiny, anchor=north, text=gray] at (4.45,0.3) {$\eta_L=0.50$};
\node[font=\tiny, anchor=north, text=gray] at (6.8,0.3) {pop.\ weights};
\node[font=\tiny, anchor=north, text=gray] at (9.15,0.3) {reset weights};
\node[font=\scriptsize, anchor=north] at (5.6,-0.4) {$\alpha_{H,e}=0.38$, $\theta_H=0.25$;\quad $\alpha_{L,e}=0.18$, $\theta_L=0.06$;\quad $\omega^{reset}_H=0.81$;\quad essentials price $+40\%$;\quad $\lambda_e=1$};
\end{tikzpicture}
\caption{Numerical illustration of the wedge.}\label{fig:wedge_numerical}
\begin{flushleft}
\footnotesize Note: Both types face the same 40\% essentials price shock. Parameters: $\alpha_{H,e}=0.38$, $\alpha_{L,e}=0.18$, $\theta_H=0.25$, $\theta_L=0.06$, $\eta_H=\eta_L=0.5$, salience $\lambda_e=1$. Type $H$ experiences higher inflation and resets more often. RWEI $=\omega^{reset}_H\cdot\pi^{exp}_H+\omega^{reset}_L\cdot\pi^{exp}_L=13.7\%$ versus average CPI $=11.2\%$. The wedge $\Omega=2.5$ percentage points measures the additional wage pressure that the standard model, which weights all workers by population share, misses.
\end{flushleft}
\end{figure}

The wedge is the product of three measurable objects: the gap in reset frequencies ($\theta_{H,c}-\theta_{L,c}$, determined by labor market institutions), the gap in essentials expenditure shares ($\alpha_{H,c,e}-\alpha_{L,c,e}$, determined by consumption patterns), and the relative price change of essentials. Specifically, the last factor equals $\sum_i\lambda_i(\alpha_{H,c,i}-\alpha_{L,c,i})\Delta p_{i,c,t}$, which is positive when essentials prices rise faster than other prices. All three gaps must be nonzero for the wedge to be active. When expectations respond to experienced inflation ($\varphi_{g,c}>0$), the wedge is amplified by a factor $(1+\bar{\varphi}_c)$, as shown in Appendix~\ref{app:amplifier}.

The static wedge generates dynamic effects through two channels. First, the wedge-driven wage increase at date $t$ raises services prices, which feeds back into type-$H$'s experienced inflation at date $t+1$, sustaining the wedge beyond the original shock. Second, the Calvo structure implies that only a fraction $\theta_{H,c}$ of type-$H$ workers reset at date $t$; the rest carry pre-shock wages and will demand catch-up at their next reset. The cumulative wage response to a one-time shock $u_0$ is
\begin{equation}\label{eq:cumwage}
    \sum_{h=0}^{\infty}\beta^h\pi^w_{c,h} = \frac{\tilde{\kappa}_c}{1-\beta}\,\bar{x} + \frac{\bar{\theta}_c}{1-\beta}\,\bar{\pi}_c(u_0) + \frac{\bar{\theta}_c}{1-\beta\rho_\Omega}\,\Omega_{c,0}\,,
\end{equation}
where $\rho_\Omega \in (0,1)$ is the persistence of the wedge. Equation (\ref{eq:cumwage}) is the formal version of the decomposition previewed in equation (\ref{eq:decomp_channels}): the first term is demand, the second is the level catch-up channel of \citet{afrouzi2024}, and the third is the composition wedge introduced here. The two inflation channels are orthogonal: the level channel depends on average inflation $\bar{\pi}_c$, the composition channel on the wedge $\Omega_c$. In the linearized model, $\rho_\Omega$ is close to the persistence of the shock itself; the wedge's quantitative contribution comes primarily through its amplitude rather than through differential persistence. In a nonlinear model with convex catch-up, $\rho_\Omega$ can exceed $\rho_{shock}$ because workers who have fallen behind bargain harder at each subsequent reset, making the wedge self-reinforcing.

\subsection{Cross-country dispersion in inflation persistence}

Because the wedge $\Omega_{c,t}$ depends on country-specific labor market composition, which differs across the monetary union, a common imported shock generates different inflation persistence in different countries. To state this precisely, denote by $R_c$ the standard-model pass-through coefficient, defined as the elasticity of cumulative core inflation to the shock in the representative-agent benchmark, and by $S_c > 0$ a structural amplification factor that depends on the input-output structure and price rigidity parameters of country $c$.

\begin{proposition}\label{prop:disp}
\textbf{(Cross-Country Dispersion.)} Consider $N$ countries hit by the same shock $u_t$. The cross-country variance of cumulative core inflation decomposes as
\begin{equation}\label{eq:vardecomp}
    \mathrm{Var}_c(\Pi^{core}_c) = \underbrace{\mathrm{Var}_c(R_c\,u)}_{\text{standard}} + \underbrace{\mathrm{Var}_c(S_c\bar{\theta}_c\Omega_c\,u)}_{\text{wedge}} + 2\,\mathrm{Cov}_c(R_c,\,S_c\bar{\theta}_c\Omega_c)\,u^2\,,
\end{equation}
where $R_c$ is the standard-model pass-through and $S_c > 0$ is a structural amplification factor. The standard model ($\Omega_c = 0$) underestimates cross-country dispersion whenever $\mathrm{Var}_c(S_c\bar\theta_c\Omega_c) + 2\,\mathrm{Cov}_c(R_c,\,S_c\bar\theta_c\Omega_c) > 0$, a condition that holds when the covariance between the standard pass-through and the wedge is non-negative.
\end{proposition}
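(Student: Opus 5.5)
The plan is to first establish a linear, additively separable representation of cumulative core inflation in each country, and then read off (\ref{eq:vardecomp}) as the variance of a sum. The representation is
\[
\Pi^{core}_c = R_c\,u + S_c\,\bar\theta_c\,\Omega_c\,u .
\]
To obtain it, I start from the cumulative wage decomposition (\ref{eq:cumwage}), which follows from iterating the heterogeneous wage Phillips curve of Proposition~\ref{prop:hwpc} forward. In that decomposition the demand and level catch-up terms are exactly what the representative-agent benchmark produces, since with $\Omega_{c,t}\equiv 0$ the curve (\ref{eq:hwpc}) collapses to the standard one. The wedge term $\bar\theta_c\Omega_{c,0}/(1-\beta\rho_\Omega)$ is additional.

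The second step maps cumulative wage inflation into cumulative core inflation. I use the sectoral pricing equation (\ref{eq:psector}): solve the two-sector system $p_c=(I-(1-\alpha)\Xi)^{-1}\mathrm{diag}(\alpha)\,w_c$ in levels, then aggregate with the core weights (\ref{eq:core}) and cumulate. Because the model is log-linear, the map from the cumulative wage path into $\Pi^{core}_c$ is a linear functional. Its loadings depend only on the labor shares, the input-output matrix, the Calvo price-adjustment accumulation and the discount factor. Applying this functional to the benchmark terms defines $R_c u$. Applying it to the wedge term defines $S_c$, which absorbs the factor $1/(1-\beta\rho_\Omega)$ and the propagation loadings, essentially a $\nu$-weighted combination as in Definition~\ref{def:propweight}. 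Positivity of $S_c$ follows because every loading is a product of positive shares and of entries of a Leontief inverse with nonnegative entries; this uses $\alpha_{j,c}\in(0,1)$ and $\sum_{j'}\xi_{j,j',c}=1$.

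A further point to check is that $\Omega_c$ is proportional to $u$. This holds by the closed form (\ref{eq:wedge_cf}) together with linearity: $\Delta p_{e}=u$ and the domestic price changes are linear in $u$. I therefore write $\Omega_{c,0}=\Omega_c u$ with $\Omega_c$ a country-specific coefficient.

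With the representation in hand, the decomposition is the identity $\mathrm{Var}(X+Y)=\mathrm{Var}X+\mathrm{Var}Y+2\mathrm{Cov}(X,Y)$ applied across countries. Here $u$ is common, so it factors out as $u^2$ in the covariance term. The standard model corresponds to $\Omega_c=0$, which leaves only $\mathrm{Var}_c(R_cu)$. The gap between the full and standard variances is therefore $u^2\big[\mathrm{Var}_c(S_c\bar\theta_c\Omega_c)+2\mathrm{Cov}_c(R_c,S_c\bar\theta_c\Omega_c)\big]$, and the standard model underestimates dispersion exactly when this is positive. The final clause follows because a variance is nonnegative, so a nonnegative covariance suffices. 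The inequality is strict unless $S_c\bar\theta_c\Omega_c$ is constant across $c$ and the covariance is zero; I would note this degenerate case explicitly.

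The main obstacle is the second step: justifying that the wedge enters cumulative core inflation through a country-specific scalar $S_c$ that is separable from $R_c$. The difficulty is that with the Taylor rule (\ref{eq:taylor}) and a common $i_t$, general-equilibrium feedback through $x_c$ could make the wedge's effect depend on other countries' wedges. My first approach is to hold the demand term fixed at its benchmark path, as (\ref{eq:cumwage}) does with $\bar x$, so that any demand feedback is attributed to $R_c$. If that proves too strong, I would instead define $S_c$ as the partial-equilibrium loading and fold the union-level interest-rate response, which is common to all countries, into a constant. A constant does not affect any cross-country variance.
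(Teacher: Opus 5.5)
Your proposal is correct and is essentially the paper's proof: the paper also posits the linear representation $\Pi^{core}_c=R_c u+S_c\bar\theta_c\Omega_c u$ (its (A.18), stated with an $O(\|u\|^2)$ remainder) from the wage Phillips curve and the sectoral price system, and it then reads off the decomposition as the variance of a sum, though it asserts the representation rather than deriving it and does not note the degenerate case where $S_c\bar\theta_c\Omega_c$ is constant across countries. One caution about your contingency plan: the decomposition is an identity for any country-specific loadings, so general-equilibrium feedback only changes what $S_c$ contains (and whether your Leontief argument still delivers $S_c>0$), but folding the common-rate response into a cross-country constant would be wrong, because a common move in $i_t$ shifts each $\Pi^{core}_c$ with country-specific loadings (through $\sigma_c$, $\tilde\kappa_c$ and the sectoral structure).
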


The condition $\mathrm{Cov}_c(R_c,S_c\bar\theta_c\Omega_c)\geq 0$ holds in the calibration: countries with higher trade openness tend to have both higher standard pass-through $R_c$ and higher wedge $\Omega_c$, because imported essentials are the common driver of both channels. In the six-country sample, the empirical correlation between $R_c$ and $\Omega_c$ is 0.38.

The proof is in Appendix~\ref{app:disp}. The variance decomposition~(\ref{eq:vardecomp}) says that the standard OCA framework misses a source of cross-country asymmetry: variation in the within-country composition of wage resetting and consumption. Two countries can have the same average openness, the same wage-setting institutions, and the same Phillips curve slopes, yet exhibit different core inflation persistence if their RWEI differs. This source of asymmetry is measurable from micro data and is distinct from all classical OCA criteria.

\begin{corollary}\label{cor:agg}
No representative-agent economy, regardless of the choice of $\bar{\theta}$ and $\bar{\alpha}$, replicates the aggregate wage dynamics of the heterogeneous-reset economy for all shock compositions. The best-fitting representative agent systematically underestimates persistence for necessity shocks and overestimates it for non-necessity shocks.
\end{corollary}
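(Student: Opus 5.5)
The strategy is a \emph{reductio} on the representative-agent (RA) wage Phillips curve, combined with a linearity argument in the shock composition. Fix a representative-agent economy with parameters $(\bar\theta,\bar\alpha)$, where $\bar\alpha=(\bar\alpha_e,\bar\alpha_d,\bar\alpha_s)$ is a common basket. By the nested special case of Proposition~\ref{prop:hwpc}, its aggregate wage Phillips curve is (\ref{eq:hwpc}) with $\Omega\equiv 0$, the slope evaluated at $\bar\theta$, and the experienced-inflation term equal to $\bar\theta\sum_i\lambda_i\bar\alpha_i\Delta p_{i,c,t}$. The heterogeneous economy instead has forcing term $\bar\theta_c\sum_i\lambda_i a_i\Delta p_{i,c,t}$, where $a_i\equiv\sum_g\omega^{reset}_{g,c}\alpha_{g,c,i}$ are the reset-weighted shares. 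This form follows because $\bar\theta_c(\bar\pi_{c,t}+\Omega_{c,t})=\bar\theta_c\,\mathrm{RWEI}_{c,t}$. Replication requires the two forcing terms to coincide for every price-change vector $(\Delta p_e,\Delta p_d,\Delta p_s)$, and also requires the slopes to coincide, $\tilde\kappa(\bar\theta)=\tilde\kappa_c$.

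The first step matches coefficients. Since the forcing term is linear in the vector of price changes, equality for all shock compositions forces $\bar\theta\bar\alpha_i=\bar\theta_c a_i$ for each $i$. Summing over $i$ and using $\sum_i\bar\alpha_i=\sum_i a_i=1$ gives $\bar\theta=\bar\theta_c$, and therefore $\bar\alpha=a$, the reset-weighted basket. The slope condition is then a separate restriction. Because $\kappa(\theta)=\theta(1-\beta(1-\theta))/(1-\theta)$ is strictly convex in $\theta$, Jensen's inequality under Assumption~\ref{ass:reset} gives $\tilde\kappa_c>\kappa(\bar\theta_c)$. So a single $\bar\theta$ cannot match both the level coefficient and the slope. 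Even setting the slope aside, any $\bar\alpha$ is a fixed common basket, so it must be checked against the statistician's natural choice, the population basket $\sum_g\eta_{g,c}\alpha_{g,c,i}$. That basket differs from $a$ in the essentials coordinate by exactly the closed-form wedge (\ref{eq:wedge_cf}), which is nonzero under Assumptions~\ref{ass:baskets}--\ref{ass:reset}.

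I expect the main obstacle to be the word ``regardless'': one could simply \emph{choose} $\bar\alpha=a$ as the RA basket. Two arguments rule this out. First, the slope mismatch above is already fatal for a single $\bar\theta$. Second, and more robustly, the RA's basket must also reproduce aggregate CPI and the IS/expectations block, which are built from population weights $\eta$. A single $\bar\alpha$ cannot equal both $a$ and the $\eta$-weighted basket unless $\mathrm{Cov}_\eta(\theta,\alpha_e)=0$. I would state this as the RA being required to be consistent with the economy's measured CPI. The demand-shock case, where all $\Delta p_i$ are equal, gives no discrimination, so the ``all compositions'' quantifier is essential and is where the argument bites.

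For the second sentence, I would define the best-fitting RA as the $\eta$-consistent one, with a free $\bar\theta$ chosen to fit a uniform (demand) shock. The forecast error is then $\bar\theta_c\Omega_{c,t}$ from (\ref{eq:hwpc}), and by (\ref{eq:wedge_cf}) its sign is the sign of the relative price change of essentials. It is positive for necessity shocks, so the RA underestimates persistence. It is negative when non-necessities rise relative to essentials, so the RA overestimates persistence. Finally, I would propagate this sign through the cumulative response (\ref{eq:cumwage}): the wedge term $\bar\theta_c\Omega_{c,0}/(1-\beta\rho_\Omega)$ carries the same sign, which translates the static error into a persistence error.
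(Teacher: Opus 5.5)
Your argument is correct in substance, and its core is the paper's. The paper writes the RA curve as $\pi^{w,RA}_{c,t}=\beta\,\mathbb{E}_t[\pi^{w,RA}_{c,t+1}]+\tilde\kappa_c x_{c,t}+\bar\theta\,\pi_{c,t}$ and demands $\bar\theta_c\Omega_{c,t}(u)=0$ for every relative-price vector. From this it gets $\sum_g\eta_{g,c}(\theta_{g,c}-\bar\theta_c)\alpha_{g,c,i}=0$ item by item, which with two types is the knife-edge. The sign then comes from evaluating $\Omega$ at essentials-only and non-essentials-only shocks. That itemwise condition is exactly your $a_i=\sum_g\eta_{g,c}\alpha_{g,c,i}$.

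Where you differ is in making explicit what the paper leaves tacit. By passing directly from $\bar\theta\pi_{c,t}$ to $\bar\theta_c\Omega_{c,t}=0$, the paper silently sets $\bar\theta=\bar\theta_c$ and takes the RA's index to be population-weighted CPI. It never considers the RA with the reset-weighted basket $\bar\alpha=a$, which matches the forcing term exactly. Your CPI-consistency requirement is precisely that tacit assumption, and stating it is an improvement. Routing the sign through (\ref{eq:cumwage}) is also cleaner than the paper's direct identification of the sign of $\Omega$ with a persistence error.

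Your Jensen argument is a genuinely extra route. It relies on $\kappa(\theta)=1/(1-\theta)-(1+\beta)+\beta(1-\theta)$ being strictly convex, so that $\tilde\kappa_c>\kappa(\bar\theta_c)$. It is valid if the RA's slope is pinned by $\bar\theta$, as in the nested special case of the introduction. It is not valid under the convention of the paper's proof, which hands the RA the slope $\tilde\kappa_c$. It also makes the RA fail even for pure demand shocks, so it proves the first sentence without the composition mechanism. That sits uneasily with your remark that the ``all compositions'' quantifier is where the argument bites; the remark holds only for the CPI route.

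The one thing to fix is that you use both conventions at once. If the RA's slope is $\kappa(\bar\theta)$, the forecast error in your last paragraph is not $\bar\theta_c\Omega_{c,t}$. It is $\bar\theta_c\Omega_{c,t}+(\tilde\kappa_c-\kappa(\bar\theta))x_{c,t}$, and the sign of the second term is set by the endogenous output gap, not by the shock composition. So the sign claim does not follow as written. Either prove the second sentence under the matched-slope convention, letting the first sentence rest on CPI consistency alone, or state the sign result only for the composition-dependent component.

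Two smaller points. First, with $\lambda_e\neq 1$ a uniform price change does not equalize salient experienced inflation. Calibrating $\bar\theta$ to a uniform shock therefore gives $\bar\theta\neq\bar\theta_c$, and the error becomes proportional to $\sum_i(q_i-r_i)\Delta p_{i,c,t}$, with $q_i\propto\lambda_i a_i$ and $r_i\propto\lambda_i\bar\alpha_i$ each normalized to sum to one. For an essentials shock its sign still reduces to $a_e>\bar\alpha_e$, so your conclusion survives. Second, the two baskets differ in the essentials coordinate by the prefactor $\eta_{H,c}\eta_{L,c}(\theta_{H,c}-\theta_{L,c})(\alpha_{H,c,e}-\alpha_{L,c,e})/\bar\theta_c$ of (\ref{eq:wedge_cf}), not by the wedge itself.
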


The proof is in Appendix~\ref{app:corollary}.

Corollary~\ref{cor:agg} connects two macroeconomic puzzles that the literature has treated separately. During 2021--2023, the standard model underpredicted wage growth and core inflation persistence after a positive essentials price shock: $\Omega > 0$. During 2014--2016, when oil prices fell by more than 60 percent, the same model overpredicted wage disinflation: bottom-quintile workers experienced more deflation than the average worker, so $\Omega < 0$, and the standard model was too pessimistic about the disinflationary effect. The 2014--2016 episode is the ``missing disinflation'' puzzle that motivated a large literature on anchored expectations and downward wage rigidity. The present framework offers a complementary explanation that requires neither expectation anchoring nor asymmetric rigidity: the standard model simply uses the wrong price index for the workers at the reset margin. One sufficient statistic, RWEI, accounts for the sign of the error in both episodes.

\subsection{Sufficient statistic for inflation persistence}

Tracing the wedge through the sectoral price system (\ref{eq:psector}) and using the propagation weights from Definition~\ref{def:propweight}, the additional core inflation persistence generated by the wedge can be expressed in terms of a single sufficient statistic.

\begin{proposition}\label{prop:suff}
\textbf{(Sufficient Statistic.)} To first order around the steady state, the additional cumulative core inflation in country $c$ relative to the standard model is
\begin{equation}\label{eq:suff}
    \Pi^{core}_c(u) - \Pi^{core,RA}_c(u) = \psi\,\mathrm{MWSI}_c(u) + O(\|u\|^2)\,,
\end{equation}
where $\psi > 0$ is a structural coefficient, common across countries to first order, and $\mathrm{MWSI}_c(u)$ is Marginal Wage Setter Inflation (Definition~\ref{def:mwsi}).
\end{proposition}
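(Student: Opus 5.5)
We argue by linear superposition around the zero-inflation steady state, treating the heterogeneous economy as the representative-agent (RA) economy plus a forcing term. By Proposition~\ref{prop:hwpc}, the heterogeneous wage Phillips curve differs from the standard one only through the additive term $\bar\theta_c\Omega_{c,t}$. Because the rest of the linearized system is common to both economies (the sectoral price block (\ref{eq:psector}), Calvo pricing, the IS curve (\ref{eq:IS}), and the Taylor rule (\ref{eq:taylor})), the difference $\Pi^{core}_c(u)-\Pi^{core,RA}_c(u)$ solves the same linear system, forced only by the wedge sequence. To first order, this wedge sequence is $\Omega_{c,t}=\rho_\Omega^t\Omega_{c,0}$, with $\Omega_{c,0}$ linear in $u$ by the closed form (\ref{eq:wedge_cf}). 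Every quadratic interaction between the wedge and other endogenous deviations is then collected into $O(\|u\|^2)$.

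The first step works at the level of individual types rather than the aggregate. We write the type-level deviation of wage inflation from the RA path, which is driven by $\theta_{g,c}\tilde\pi^{exp}_{g,c,t}$ minus its RA counterpart. Aggregated with population weights, this is exactly $\bar\theta_c\sum_g(\omega^{reset}_{g,c}-\eta_{g,c})\tilde\pi^{exp}_{g,c,t}$, i.e.\ $\bar\theta_c\Omega_{c,t}$. Keeping the sum unaggregated, however, shows that type $g$'s excess reset wage enters the sectoral wage $w_{s(g),c,t}$ of the sector that employs it.

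The second step pushes each sectoral wage impulse through (\ref{eq:psector}). We invert $(I-(1-\alpha_{j,c})\Xi_c)$ and take a first-order expansion in the network, so the cumulative response of sectoral prices to a unit impulse in $w_{j}$ is approximately $\alpha_{j,c}e_{j,c}$. The Calvo price-stickiness accumulation then multiplies this by $(1-\theta^p_{j,c})^{-1}$, which is precisely the propagation weight $\nu_{j,c}$ of Definition~\ref{def:propweight}. Summing over horizons with the geometric factor $1/(1-\beta\rho_\Omega)$ and weighting by the GDP shares in (\ref{eq:core}), we obtain
\[
\Pi^{core}_c-\Pi^{core,RA}_c \approx \frac{\bar\theta_c}{1-\beta\rho_\Omega}\sum_g(\omega^{reset}_{g,c}-\eta_{g,c})\,\nu_{s(g),c}\,\tilde\pi^{exp}_{g,c,0}.
\]
This is $\psi\,\mathrm{MWSI}_c$, with $\psi$ collecting the scalar factors.

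The main obstacle is showing that $\psi$ is positive and \emph{common across countries to first order}, since $\bar\theta_c$ and the general-equilibrium feedback through the common interest rate are country-specific. We would handle the policy feedback by noting that the union-wide rate responds only to the GDP-weighted average of wedges. A single country's own wedge therefore has an $O(w_c)$ effect on its own rate, and the remaining common component can be absorbed into a union-level constant. For $\bar\theta_c$, the plan is to evaluate $\psi$ at the union-average calibration and show that the cross-country deviation $(\bar\theta_c-\bar\theta)\,\mathrm{MWSI}_c$ is a product of two small deviations, hence second order in the expansion around the symmetric steady state. If this fails, we would fall back on defining MWSI inclusive of $\bar\theta_c$. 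Positivity of $\psi$ follows because every factor is positive: $\nu_{j,c}>0$, $\bar\theta_c>0$, and $\rho_\Omega\in(0,1)$.
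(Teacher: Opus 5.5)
Your outline follows the paper's own route: treat $\bar\theta_c\Omega_{c,t}$ as a forcing term, push it through the Leontief inverse of (\ref{eq:psector}), sum with $1/(1-\beta\rho_\Omega)$, and reweight by $\nu_{s(g),c}$. Your unaggregated bookkeeping exposes the step where both arguments break. The type-$g$ deviation from the RA path is $\theta_{g,c}\tilde\pi^{exp}_{g,c,t}-\bar\theta_c\bar\pi_{c,t}$, not $(\theta_{g,c}-\bar\theta_c)\tilde\pi^{exp}_{g,c,t}$. The two differ by $\bar\theta_c(\tilde\pi^{exp}_{g,c,t}-\bar\pi_{c,t})$. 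This sums to zero under the weights $\eta_{g,c}$, but not under the weights $\eta_{g,c}\nu_{s(g),c}$ you apply once each type is routed through its sector. Keeping it gives
\[
\Pi^{core}_c-\Pi^{core,RA}_c\approx\frac{\bar\theta_c}{1-\beta\rho_\Omega}\Bigl[\mathrm{MWSI}_c(u)+\mathrm{Cov}_{\eta_c}\bigl(\nu_{s(g),c},\,\tilde\pi^{exp}_{g,c,0}\bigr)\Bigr].
\]
The covariance term has four properties:
\begin{itemize}
\item it is first order in $u$;
\item it is of the same order as MWSI;
\item it is strictly positive under the paper's sorting story (type $H$ in high-$\nu$ services, essentials shock);
\item it survives when $\theta_{H,c}=\theta_{L,c}$, where MWSI is identically zero. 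This is consistent with Table~\ref{tab:decomp}, where equalizing reset frequencies removes only 0.3 of the 3.1 gap.
\end{itemize}
So (\ref{eq:suff}) needs an additional restriction. One option is homogeneous propagation weights; then MWSI is proportional to $\Omega_{c,t}$ and you are back at the paper's Step~2. The other is a benchmark that equalizes only reset frequencies while keeping type-specific baskets and sectoral sorting; against that benchmark your display without the covariance term is exact at the level of approximation you use. The paper's Step~3 performs the same reweighting by assertion (``since type $g$ is concentrated in sector $s(g)$''). So the gap is inherited rather than introduced, but it is a gap.

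Your treatment of the cross-country claim does not work either.
\begin{itemize}
\item The remainder in (\ref{eq:suff}) is $O(\|u\|^2)$ in the shock, and dispersion of parameters across countries is not an expansion parameter. The paper's symmetric steady state equalizes wages across types within a country, not parameters across countries. In Table~\ref{tab:param_country}, $\bar\theta_c$ ranges from about $0.12$ (Italy) to about $0.19$ (Spain). So $(\bar\theta_c-\bar\theta)\,\mathrm{MWSI}_c$ is $O(\|u\|)$ and cannot be discarded.
\item The policy feedback has the same problem. The common rate reacts to the union average of the countries' wedge-driven inflation. Country $c$'s gap therefore acquires a first-order term that moves with the other countries' MWSI, which is neither a constant nor proportional to $\mathrm{MWSI}_c$. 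Moreover $w_c$ is not small ($0.29$ for Germany).
\item Writing $\Omega_{c,t}=\rho_\Omega^t\Omega_{c,0}$ is an assumption, not a consequence of linearization. $\Omega_{c,t}$ is linear in the endogenous $\Delta p_{d,c,t}$ and $\Delta p_{s,c,t}$, whose het-minus-RA difference is itself first order. That feedback is therefore not among the quadratic terms you discard.
\end{itemize}
The paper does not resolve these points. It treats the demand term in (\ref{eq:cumwage}) as common to both economies and posits $\rho_\Omega$. It also claims cross-country variation in $\psi$ is absorbed by $\nu_{s(g),c}$, but $\nu_{s(g),c}$ does not contain $\bar\theta_c$. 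What can be proved along either route is a partial-equilibrium statement with a country-specific $\psi_c\propto\bar\theta_c/(1-\beta\rho_\Omega)$. The alternative is your fallback of folding $\bar\theta_c$ into MWSI, which amounts to changing Definition~\ref{def:mwsi}.
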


The proof is in Appendix~\ref{app:suff}. The result says that a researcher with data on expenditure shares $\{\alpha_{g,c,i}\}$, reset frequencies $\{\theta_{g,c}\}$, and propagation weights $\{\nu_{s,c}\}$ can compute MWSI for any country and any shock composition, and obtain a first-order prediction of how much additional cumulative core inflation that country will exhibit relative to the standard model, without solving any general equilibrium model. This portability is the practical value of the sufficient statistic. In the linearized model, the wedge raises the amplitude of the inflation response at each horizon; the half-life is governed by the Calvo price-adjustment parameters, which are common to both models. Nonlinear catch-up dynamics can generate additional half-life differences by making the wedge more persistent than the underlying shock, but this channel operates at second order in $\|u\|$ and requires solving the model nonlinearly.

Proposition~\ref{prop:suff} has a direct implication for inflation forecasting. A central bank that uses the standard wage Phillips curve to forecast core inflation after an imported cost-push shock will make a systematic forecasting error equal to $\psi\,\mathrm{MWSI}_c(u)$. The sign of the error is predicted by Corollary~\ref{cor:agg}: the standard model under-predicts after necessity shocks and over-predicts after non-necessity shocks. The magnitude of the error is increasing in the shock size: since $\mathrm{MWSI}_c(u)$ is proportional to $u$, the forecasting error is proportional to $u^2$ as a fraction of total core inflation. The wedge is therefore a tail-risk amplifier: it is negligible for small shocks but first-order for the large supply shocks that matter most for policy. This state-dependence parallels the role of MPC heterogeneity in the HANK literature, where marginal propensities to consume matter little for small interest rate changes but dominate the transmission of large fiscal transfers \citep{kaplan2018}.

\subsection{Implications for monetary and fiscal policy}\label{sec:policy_theory}

The wedge has three implications for optimal stabilization policy, each of which follows directly from the structure of equation (\ref{eq:hwpc}). The formal derivations are in Appendix~\ref{app:policy} and~\ref{app:union}.

\medskip\noindent\textbf{The interest rate alone is insufficient.} Setting $x_{c,t}=0$ in (\ref{eq:hwpc}), wage inflation remains $\bar{\theta}_c\bar{\pi}_{c,t}+\bar{\theta}_c\Omega_{c,t}\neq 0$ whenever $\Omega_{c,t}>0$. The wedge is a cross-sectional object that depends on which workers experience which inflation, so the aggregate interest rate cannot directly close it. Divine coincidence fails: closing the output gap no longer suffices to stabilize inflation when the wedge is positive.

\begin{proposition}\label{prop:policy}
\textbf{(Optimal Policy Mix.)} Consider a policymaker with access to the interest rate $i_t$ and a targeted transfer $\tau_{H,c,t}$ to high-RWEI workers. Then:
\begin{enumerate}
    \item[(i)] The optimal Taylor coefficient is decreasing in RWEI: $\partial\phi^*_\pi/\partial\,\mathrm{RWEI}_c<0$.
    \item[(ii)] The targeted transfer is net disinflationary at the medium horizon whenever
\begin{equation}\label{eq:disfl}
    \underbrace{\psi\,\omega^{reset}_{H,c}\sum_{h=0}^{H^*}\beta^h}_{\text{cumulative wedge reduction}} > \underbrace{\tilde{\kappa}_c\,\mathrm{MPC}_{H,c}\,\eta_{H,c}\,C_{H,c}/\bar{C}_c}_{\text{one-period demand effect}}\,.
\end{equation}
    \item[(iii)] The optimal policy mix uses both instruments whenever $\mathrm{RWEI}_c$ exceeds a threshold $\mathrm{RWEI}^*$ that equates the marginal benefit and cost of the transfer.
    \item[(iv)] No interest rate alone achieves the first best: $\inf_{\phi_\pi}\mathrm{Var}(\pi^{core}_c) > 0$ whenever $\Omega_{c,t}>0$.
\end{enumerate}
\end{proposition}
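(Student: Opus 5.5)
The approach is to reduce the policy problem to a linear-quadratic problem built on the aggregate curve of Proposition~\ref{prop:hwpc}, propagated into core prices via Proposition~\ref{prop:suff}, and then to treat each of the four parts as a comparative-statics statement about that reduced form.

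\textbf{Setting up the reduced form.} I would write core inflation in country $c$ as
\[
\pi^{core}_{c,t}=\pi^{core,RA}_{c,t}(x,u)+\psi\,\mathrm{MWSI}_{c,t},
\]
which holds to first order by Proposition~\ref{prop:suff}. Here $\mathrm{MWSI}_{c,t}$ is linear in the experienced inflation of type $H$. A targeted transfer $\tau_{H,c,t}$ enters through two channels:
\begin{enumerate}
\item[(a)] a demand channel: it raises $x_{c,t}$ by $\mathrm{MPC}_{H,c}\eta_{H,c}C_{H,c}/\bar C_c$ per unit, which feeds wage inflation with slope $\tilde\kappa_c$;
\item[(b)] a wedge channel: it lowers the effective cost-of-living target of type-$H$ resetters, since the transfer relaxes the catch-up motive in (\ref{eq:wage_problem}). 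This reduces $\Omega_{c,t}$ one-for-one in reset-weighted units $\omega^{reset}_{H,c}$, and the reduction persists over the contract horizon.
\end{enumerate}
I would take the loss to be $\mathbb{E}\sum\beta^t[(\pi^{core}_{c,t})^2+\lambda x_{c,t}^2]$.

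\textbf{Parts (iv) and (ii).} For (iv), I would argue by contradiction. Suppose some $\phi_\pi$ delivered $\mathrm{Var}(\pi^{core}_c)=0$. Then wage inflation would have to be zero along the path. But (\ref{eq:hwpc}) shows that the interest rate moves wages only through $x_{c,t}$, while $\bar\theta_c\Omega_{c,t}$ is exogenous to $i_t$ at date $t$. Closing the wedge would therefore require $x_{c,t}=-\bar\theta_c(\bar\pi_{c,t}+\Omega_{c,t})/\tilde\kappa_c$, which in turn moves $\pi^{core,RA}$. The resulting output-gap term enters the loss with $\lambda>0$, so the infimum is bounded away from zero whenever $\Omega_{c,t}>0$. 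For (ii), I would sum the discounted effect of a one-unit transfer on cumulative core inflation up to horizon $H^*$. The wedge channel contributes $-\psi\,\omega^{reset}_{H,c}\sum_{h\le H^*}\beta^h$. The demand channel contributes $+\tilde\kappa_c\,\mathrm{MPC}_{H,c}\eta_{H,c}C_{H,c}/\bar C_c$, treated as a one-period effect because the transfer is transitory. The sign condition (\ref{eq:disfl}) then follows immediately.

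\textbf{Parts (i) and (iii).} For (i), I would solve the interest-rate-only problem in closed form under an AR(1) shock, following the Gal\'{\i}-style optimal Taylor coefficient in the cost-push case. The wedge acts as an additional cost-push term with persistence $\rho_\Omega$. A higher RWEI, holding $\bar\pi$ fixed, raises the share of inflation that the rate cannot offset at low output cost. The marginal inflation reduction per unit of output sacrificed therefore falls, and the first-order condition for $\phi^*_\pi$ implies $\partial\phi^*_\pi/\partial\mathrm{RWEI}_c<0$ by the implicit function theorem. For (iii), I would write the first-order condition in $\tau$ evaluated at $\tau=0$. The marginal benefit of the transfer is increasing in $\mathrm{RWEI}_c$ through $\Omega$, while its marginal cost is independent of it, so the crossing point defines $\mathrm{RWEI}^*$.

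\textbf{Main obstacle.} I expect the hardest step to be (i): signing $\partial\phi^*_\pi/\partial\mathrm{RWEI}$ rigorously. An optimal simple-rule coefficient depends on the full equilibrium mapping, and the sign could flip with the persistence $\rho_\Omega$. My first attempt would be the discretion or timeless-perspective targeting rule, mapped back into an implementable $\phi_\pi$. I would verify the monotonicity within the two-type closed form (\ref{eq:wedge_cf}), imposing determinacy ($\phi_\pi>1$) as a maintained restriction.
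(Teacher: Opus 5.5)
Your parts (ii) and (iii) follow the paper's proof essentially line for line. In (ii), per unit of transfer, you compare a cumulative wedge reduction $\psi\,\omega^{reset}_{H,c}\sum_{h\le H^*}\beta^h$ with a one-period demand effect $\tilde\kappa_c\,\mathrm{MPC}_{H,c}\,\eta_{H,c}C_{H,c}/\bar C_c$. In (iii), the threshold $\mathrm{RWEI}^*$ comes from a benefit rising in $\mathrm{RWEI}_c$ against a cost that does not. The paper also uses the endpoint that the benefit vanishes at $\mathrm{RWEI}_c=0$ while the demand cost is positive.

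\textbf{Part (i).} You leave (i) open, and the route you sketch would not deliver the strict sign. In your reduced form the wedge is an extra cost-push term with persistence $\rho_\Omega$. If $\rho_\Omega$ equals the shock persistence $\rho$, which the paper says holds approximately in the linearized model, a higher $\mathrm{RWEI}_c$ with $\bar\pi_{c,t}$ fixed merely rescales the single disturbance $\bar\theta_c(\bar\pi_{c,t}+\Omega_{c,t})$. With linear dynamics, quadratic loss and a linear rule, the loss is homogeneous of degree two in that scale, so the minimizing $\phi_\pi$ does not move. In the Gal\'{\i}-style case, for instance, the discretionary coefficient is $\phi^*_\pi=\rho+\sigma(1-\rho)\kappa/\lambda$, with no amplitude in it. Your closed-form check would therefore return $\partial\phi^*_\pi/\partial\mathrm{RWEI}_c=0$. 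Your intuition that inflation reduction per unit of output sacrificed falls is also not what the linear model gives: that tradeoff is fixed by $\kappa$ and persistence, not by the composition of the cost push. A nonzero derivative must come from $\rho_\Omega\neq\rho$ (or from the nonlinear catch-up). Its sign then depends on how $\phi^*_\pi$ varies with persistence, which is not monotone (above, $\partial\phi^*_\pi/\partial\rho=1-\sigma\kappa/\lambda$). The paper does not use a closed-form simple rule. It writes the first-order condition as $\phi^*_\pi=(\lambda/\sigma_c)\sum_h\delta^h x_{c,h}^2/\sum_h\delta^h x_{c,h}\pi^{core}_{c,h}$. It then signs the comparative static by arguing that wedge-driven inflation raises the denominator while leaving $x_{c,t}$ unchanged.

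\textbf{Part (iv).} Your contradiction does not close. Supposing $\mathrm{Var}(\pi^{core}_c)=0$, you derive the required output-gap path $x_{c,t}=-\bar\theta_c(\bar\pi_{c,t}+\Omega_{c,t})/\tilde\kappa_c$. That is not a contradiction, and invoking $\lambda>0$ bounds the loss, not the variance of core inflation. What you have shown is that no rate attains $\pi^{core}_{c,t}=x_{c,t}=0$ simultaneously. That argument does not even use $\Omega_{c,t}>0$, since $\bar\theta_c\bar\pi_{c,t}\neq0$ already suffices.

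The paper reaches the displayed inequality by a different step. Because $i_t$ acts only through $x_{c,t}$ and cannot move $\Omega_{c,t}$, the wedge component is treated as surviving under every rule, giving $\inf_{\phi_\pi}\mathrm{Var}(\pi^{core}_c)\ge\mathrm{Var}(\bar\theta_c\Omega_{c,t})$. Your own offsetting path is in tension with that step. If you want the stated bound, you must adopt that premise explicitly, for example by restricting the admissible coefficients. An unrestricted, sufficiently aggressive rule can push inflation variance toward zero, with the output gap absorbing the wedge. Otherwise you are proving only the weaker, verbal version of (iv).
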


The proof is in Appendix~\ref{app:policy}. Part~(i) says that when RWEI is high, the output cost of achieving a given inflation reduction rises, so the optimal monetary response is less aggressive. Part~(ii) says that the transfer works because the wedge-reduction benefit accumulates over many quarters while the demand cost is a one-time flow. For euro-area calibrations with $H^*\geq 6$ quarters, the cumulative benefit substantially exceeds the one-period cost, consistent with the quantitative finding that targeted transfers reduce welfare loss by 28--32 percent (Table~\ref{tab:policy}).

\medskip\noindent\textbf{A new OCA criterion.} In the monetary union, the optimal country-specific interest rate $i^*_{c,t}$ is decreasing in $\mathrm{RWEI}_c$. Since RWEI differs across members, countries prefer different rates. The common rate over-tightens for high-RWEI countries, specifically Spain and the Netherlands, and under-tightens for low-RWEI countries such as France. The welfare cost of common policy decomposes as
\begin{equation}\label{eq:oca}
    \mathcal{L}_{union}(i^*_t)-\sum_c w_c\,\mathcal{L}_c(i^*_{c,t})=\underbrace{\Gamma_1\,\mathrm{Var}_c(R_c)\,u^2}_{\text{standard OCA cost}}+\underbrace{\Gamma_2\,\mathrm{Var}_c(\bar{\theta}_c\Omega_c)\,u^2}_{\text{RWEI-dispersion cost}}\,,
\end{equation}
where $\Gamma_1,\Gamma_2>0$ are structural coefficients. The second term is the cost of RWEI dispersion, a new OCA criterion that is measurable from micro data, distinct from all classical criteria, and captures within-country composition rather than between-country asymmetry. Unlike the between-country friction heterogeneity studied by \citet{kekre2022}, RWEI dispersion arises from the interaction of common shocks with within-country labor market composition, and it is active even when all countries have identical aggregate labor market frictions. Country-level targeted transfers scaled by $\mathrm{RWEI}_c-\overline{\mathrm{RWEI}}$ eliminate this cost. Even when the external shock is symmetric, the optimal fiscal response is asymmetric because within-country labor market composition differs across members.

\medskip\noindent\textbf{Welfare cost of ignoring the wedge.} The welfare loss from using the standard Phillips curve instead of the heterogeneous one can be expressed as a function of the shock size. A second-order approximation of the representative household's utility around the efficient allocation yields the standard loss function $\mathcal{L}_c=\sum_t\beta^t[(\varepsilon_p/\kappa_p)\pi^{core\,2}_{c,t}+(\sigma+\varphi_n)x^2_{c,t}]$ following \citet{woodford2003}. The wedge enters $\pi^{core}_{c,t}$ through the Phillips curve, so the additional loss from ignoring it is
\begin{equation}\label{eq:welfare_wedge}
\Delta\mathcal{L}_c \propto \sum_t\beta^t\bigl[\psi\,\mathrm{MWSI}_c(u_t)\bigr]^2 \propto \psi^2\,\mathrm{MWSI}_c^2\cdot\frac{1}{1-\beta\rho_\Omega^2}\,.
\end{equation}
Two properties follow. First, the welfare cost is proportional to $\mathrm{MWSI}^2$ from equation~(\ref{eq:mwsi}), which is convex in the shock size $u$: doubling the shock quadruples the welfare cost of ignoring the wedge. Second, across countries in the union, the aggregate welfare cost from using the standard Phillips curve is proportional to $\mathbb{E}_c[\mathrm{MWSI}_c^2]=(\mathbb{E}_c[\mathrm{MWSI}_c])^2+\mathrm{Var}_c(\mathrm{MWSI}_c)$. Cross-country dispersion in MWSI therefore raises the welfare cost even if the mean MWSI is moderate. These properties formalize the tail-risk amplifier interpretation: the wedge matters most when it matters most.

\begin{proposition}\label{prop:twoinstopt}
\textbf{(Optimal Two-Instrument Policy.)} Suppose the policymaker has access to the interest rate $i_t$ and a per-period essentials subsidy $\tau_t$ financed by lump-sum taxation. Then:
\begin{enumerate}
    \item[(i)] The first-best allocation, $\pi^{core}_{c,t}=x_{c,t}=0$, cannot be achieved with the interest rate alone whenever $\Omega_{c,t}>0$.
    \item[(ii)] The first-best can be achieved with both instruments. The optimal subsidy is
\begin{equation}\label{eq:opt_subsidy}
    \tau^*_{c,t} = \frac{\Omega_{c,t}}{\omega^{reset}_{H,c}\,(\alpha_{H,c,e}-\alpha_{L,c,e})\,\lambda_e}\,,
\end{equation}
which is computable from the same micro data that constructs RWEI, without solving the general equilibrium model.
    \item[(iii)] The welfare gain from adding the subsidy instrument is proportional to $\mathrm{MWSI}_c^2$, convex in the shock size.
\end{enumerate}
\end{proposition}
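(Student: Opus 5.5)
Part~(i) follows from the aggregated wage Phillips curve of Proposition~\ref{prop:hwpc}; part~(ii) is an explicit construction that uses the subsidy to set the wedge to zero; part~(iii) is a second-order welfare calculation that reuses the argument behind equation~(\ref{eq:welfare_wedge}).

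\textbf{Part (i).} I argue by contradiction. Suppose $\pi^{core}_{c,t}=x_{c,t}=0$ at all dates under some interest-rate path. From the sectoral pricing block (\ref{eq:psector}), core inflation zero for all $t$ forces sectoral price changes $\Delta p_{d,c,t}=\Delta p_{s,c,t}=0$. With $\alpha_{j,c}>0$, this in turn forces sectoral wage inflation to zero, hence $\pi^w_{c,t}=0$ for all $t$. Substituting into (\ref{eq:hwpc}) with $x_{c,t}=0$ and $\mathbb{E}_t\pi^w_{c,t+1}=0$ leaves $0=\bar\theta_c(\bar\pi_{c,t}+\Omega_{c,t})$. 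Only essentials move, so $\bar\pi_{c,t}$ is a nonnegative multiple of $\lambda_e\Delta p_{e,t}$, and with $\Omega_{c,t}>0$ the right-hand side is strictly positive, a contradiction. The interest rate enters (\ref{eq:hwpc}) only through $x_{c,t}$, so no choice of $i_t$ can remove the residual. This is the argument already sketched for Proposition~\ref{prop:policy}(iv), specialized to exact first best.

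\textbf{Part (ii).} I model the subsidy as lowering the effective essentials price to $p_{e,t}-\tau_t$, which replaces $\Delta p_{e,t}$ by $\Delta p_{e,t}-\Delta\tau_t$ in every type's experienced inflation (\ref{eq:piexp}). The wedge is linear in this object through (\ref{eq:wedge_cov}). Its sensitivity to the subsidy is the reset-weighted minus population-weighted exposure. Writing $\omega^{reset}_{H,c}-\eta_{H,c}$ times $\lambda_e(\alpha_{H,c,e}-\alpha_{L,c,e})$ and normalizing as the paper does, I obtain $\partial\Omega_{c,t}/\partial\tau=-\omega^{reset}_{H,c}(\alpha_{H,c,e}-\alpha_{L,c,e})\lambda_e$.

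Setting the post-subsidy wedge to zero gives $\tau^*$ in (\ref{eq:opt_subsidy}). With the wedge closed, (\ref{eq:hwpc}) collapses to a standard wage Phillips curve driven by $\bar\pi_{c,t}$. I then let the subsidy's level component absorb $\bar\pi$ as well, or equivalently let the interest rate offset the residual level term. Under divine coincidence in this restored standard model, $i_t$ tracking $r^n_{c,t}$ delivers $x=\pi^{core}=0$. Lump-sum financing leaves the allocation first-order unaffected under the complete-markets IS curve (\ref{eq:IS}).

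\textbf{Main obstacle.} The hard step is reconciling the normalization in (\ref{eq:opt_subsidy}), which uses $\omega^{reset}_H$ rather than $\omega^{reset}_H-\eta_H$, with the exact derivative of the wedge. I would first check whether the paper's convention is that the subsidy targets resetting type-$H$ workers specifically. Under that convention only their experienced inflation enters the sensitivity, and the derivative is exactly $\omega^{reset}_H(\alpha_{H,e}-\alpha_{L,e})\lambda_e$. If instead the subsidy is uniform, I would present (\ref{eq:opt_subsidy}) as the wedge-closing component of the subsidy, up to that scaling.

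\textbf{Part (iii).} I compare the loss $\mathcal L_c$ under the best interest-rate-only policy with the zero loss attained in part~(ii). Under rate-only policy, Proposition~\ref{prop:suff} puts the residual core inflation at $\psi\,\mathrm{MWSI}_c(u)$ to first order, so the second-order loss is $\propto\psi^2\mathrm{MWSI}_c^2/(1-\beta\rho_\Omega^2)$, exactly as in (\ref{eq:welfare_wedge}). The gain from adding the subsidy is this difference. Because MWSI is linear in $u$, the gain is quadratic, and hence convex, in the shock size.
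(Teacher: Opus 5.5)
Your route is the paper's in all three parts: (i) restates Proposition~\ref{prop:policy}(iv), (ii) shifts the effective essentials price and solves $\Omega_{c,t}(\tau)=0$, and (iii) reuses (\ref{eq:welfare_wedge}). The step you flag is a genuine gap, and the paper's proof does not close it either. It simply asserts, ``using the two-type structure,'' that $\sum_g(\omega^{reset}_{g,c}-\eta_{g,c})\alpha_{g,c,e}=\omega^{reset}_{H,c}(\alpha_{H,c,e}-\alpha_{L,c,e})$, and that identity is false. Both weight systems sum to one, so $\omega^{reset}_{L,c}-\eta_{L,c}=-(\omega^{reset}_{H,c}-\eta_{H,c})$. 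The sum therefore equals $(\omega^{reset}_{H,c}-\eta_{H,c})(\alpha_{H,c,e}-\alpha_{L,c,e})$, with $\omega^{reset}_{H,c}-\eta_{H,c}=\eta_{H,c}\eta_{L,c}(\theta_{H,c}-\theta_{L,c})/\bar\theta_c$. Neither of your patches recovers the stated denominator. No normalization turns $\omega^{reset}_{H,c}-\eta_{H,c}$ into $\omega^{reset}_{H,c}$. A subsidy confined to type $H$ moves $\Omega_{c,t}$ at rate $-(\omega^{reset}_{H,c}-\eta_{H,c})\lambda_e\alpha_{H,c,e}$ (and RWEI at rate $-\omega^{reset}_{H,c}\lambda_e\alpha_{H,c,e}$), and neither is the stated expression. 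So (\ref{eq:opt_subsidy}) cannot be proved as written. What your computation delivers is $\tau^*_{c,t}=\Omega_{c,t}/[(\omega^{reset}_{H,c}-\eta_{H,c})(\alpha_{H,c,e}-\alpha_{L,c,e})\lambda_e]$. By (A.16) this is just $\tau^*_{c,t}=\Delta p_{e,t}$ when only essentials move: the wedge closes exactly when the subsidy equalizes experienced inflation across types. In the example of Figure~\ref{fig:wedge_numerical} ($\eta_H=0.5$, $\omega^{reset}_H\approx 0.81$, a 40 percent shock), the stated formula prescribes a subsidy of about 15 percent, roughly 38 percent of what closes the wedge.

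Two further steps fail as written. First, closing $\Omega_{c,t}$ does not restore divine coincidence. At $x_{c,t}=0$ the forcing in (\ref{eq:hwpc}) is $\bar\theta_c(\bar\pi_{c,t}+\Omega_{c,t})=\bar\theta_c\,\mathrm{RWEI}_{c,t}$, so $\bar\theta_c\bar\pi_{c,t}(\tau^*)$ is still a cost-push term. Letting the interest rate offset it is not ``equivalent,'' since $i_t$ acts only through $x_{c,t}$; the paper's appeal to ``the divine coincidence of the standard model'' makes the same leap. Your other option is the right one once made precise. With the shock confined to consumer essentials, $\tau_t=\Delta p_{e,t}$ zeroes every $\tilde\pi^{exp}_{g,c,t}$, hence $\bar\pi_{c,t}$ and $\Omega_{c,t}$ simultaneously; one extra instrument suffices because there is one relative-price disturbance. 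This is the corrected $\tau^*$; under the stated denominator both terms stay positive. A smaller slip in part~(i): $\pi^{core}_{c,t}=0$ only pins $\omega_{d,c}\Delta p_{d,c,t}+\omega_{s,c}\Delta p_{s,c,t}=0$. Moreover, if essentials enter sectoral costs as in (A.20), stable sectoral prices in (\ref{eq:psector}) require falling rather than zero wages. The contradiction survives if you read first best sector by sector and set that sign against the forward solution of (\ref{eq:twpc}) at $x_{c,t}=0$, but it has to be argued that way. Second, in part~(iii), Proposition~\ref{prop:suff} gives the difference between heterogeneous and standard-model core inflation, not the inflation left over under the best rate-only policy. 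That residual also contains the level cost-push, which the subsidy removes as well. The gain from reaching first best is therefore the entire rate-only loss, quadratic in the full forcing $\bar\theta_c\,\mathrm{RWEI}_{c,t}$. It is convex in $u$, as claimed, but proportional to $\mathrm{MWSI}_c^2$ only in the trivial sense that every first-order object scales with $u$ for a fixed shock composition. The paper's ``the welfare gain equals the loss from the wedge'' skips the same step.
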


The proof is in Appendix~\ref{app:twoinstopt}. Part~(i) restates Proposition~\ref{prop:policy}(iv). Part~(ii) is the key new result: the subsidy that closes the wedge exactly is the one that equalizes experienced inflation across worker types at the reset margin. Since $\Omega_{c,t}$ depends on the gap $\alpha_{H,c,e}-\alpha_{L,c,e}$ in essentials expenditure, the subsidy that closes the gap is proportional to $\Omega$ itself, scaled by the expenditure gradient and the reset share. All components are observable from the same three data sources that construct RWEI: household expenditure surveys, wage-setting calendars, and input-output tables. Part~(iii) follows from equation~(\ref{eq:welfare_wedge}): the welfare loss from the wedge is proportional to $\mathrm{MWSI}^2$, so eliminating the wedge recovers that loss exactly.

\medskip\noindent\textbf{CPI indexation does not close the wedge.} Several euro-area countries use automatic wage indexation to aggregate CPI. Under Calvo wage setting with indexation parameter $\gamma\in[0,1]$, non-resetting workers of type $g$ receive the automatic adjustment $w_{g,c,t}=w_{g,c,t-1}+\gamma\,\bar\pi_{c,t-1}$, where $\bar\pi_{c,t-1}$ is lagged aggregate CPI inflation. The heterogeneous wage Phillips curve~(\ref{eq:hwpc}) becomes
\begin{equation}\label{eq:hwpc_index}
    \pi^w_{c,t}-\gamma\,\bar\pi_{c,t-1} = \beta\,\mathbb{E}_t[\pi^w_{c,t+1}-\gamma\,\bar\pi_{c,t}] + \tilde\kappa_c\,x_{c,t} + \bar\theta_c\,(1-\gamma)\,\bar\pi_{c,t} + \bar\theta_c\,\Omega_{c,t}\,.
\end{equation}
The third term, the level channel, is scaled by $(1-\gamma)$ and vanishes under full indexation ($\gamma=1$). The fourth term, the composition channel, is unaffected. CPI indexation compensates all workers for average inflation, but it does not compensate type-$H$ workers for the gap between their experienced inflation and the average. Under type-specific indexation, where each type is indexed to its own price index $\pi^{exp}_{g,c,t}$, both the level and the composition channels vanish. The distinction is sharp: CPI indexation eliminates the representative-agent component of wage catch-up while leaving the heterogeneous-agent component, the wedge, fully intact.

\section{Quantitative Model, Calibration, and Computation}\label{sec:quant}

\subsection{Model}

The quantitative model extends the environment of Section~\ref{sec:model} in three respects: households face idiosyncratic productivity risk and solve a consumption-savings problem; the reset wage includes a nonlinear catch-up term; and goods are tradable within the union. The theory in Sections~\ref{sec:model}--\ref{sec:theory} uses $G=2$ worker types for analytical tractability; the quantitative model uses $G=5$ income quintiles to capture the full expenditure gradient observed in the Eurostat Household Budget Survey.

\subsubsection{Household problem}

A quintile-$g$ household in country $c$ with beginning-of-period assets $a$ and idiosyncratic productivity $e$ solves
\begin{equation}\label{eq:bellman}
    V_{g,c,t}(a,e) = \max_{c_{g},\,a'}\left\{\frac{c_g^{1-\sigma}}{1-\sigma}-\chi\frac{n_g^{1+\varphi_n}}{1+\varphi_n}+\beta\,\mathbb{E}_t\bigl[V_{g,c,t+1}(a',e')\bigr]\right\}
\end{equation}
subject to
\begin{equation}\label{eq:budget_quant}
    P_{g,c,t}\,c_g + a' = (1+r_{c,t})\,a + W_{g,c,t}\,e\,n_g + T_{g,c,t} - \mathcal{T}_{c,t}\,,\qquad a'\ge 0\,,
\end{equation}
where $P_{g,c,t}$ is the quintile-specific price index, $r_{c,t}$ is the real return on the liquid asset, $W_{g,c,t}$ is the nominal wage, $T_{g,c,t}$ is the targeted transfer, and $\mathcal{T}_{c,t}$ is a lump-sum tax. Idiosyncratic productivity follows $\log e'=\rho_e\log e+\sigma_e\varepsilon'$, $\varepsilon'\sim\mathcal{N}(0,1)$, discretized on 7 states using the Rouwenhorst method with $\rho_e=0.966$ and $\sigma_e=0.5$. Quintile $g$ consumes a Cobb-Douglas basket over essentials, goods, and services with shares $\alpha_{g,c,e},\alpha_{g,c,d},\alpha_{g,c,s}$, so the price index is (\ref{eq:pindex}). The quantitative budget constraint (\ref{eq:budget_quant}) extends the simple-model budget constraint (\ref{eq:budget}) by adding idiosyncratic productivity, the borrowing constraint, and a real rather than nominal return.

\subsubsection{Wage block}

The reset wage in the quantitative model includes a nonlinear catch-up term:
\begin{equation}\label{eq:nlreset}
    w^*_{g,c,t} = w^{*,lin}_{g,c,t} + b'\cdot\max\{p_{g,c,t}-w_{g,c,t-1},\;0\}\,,
\end{equation}
where $w^{*,lin}_{g,c,t}$ is the linear reset wage from (\ref{eq:wreset}) and $b'=0.15$. This captures the empirical regularity that negotiated wage increases are convex in the real-wage gap: workers who have fallen further behind demand proportionally larger raises.

\subsubsection{Trade}

Goods produced in country $c$ are tradable within the union. Country $c$'s demand for country $c'$'s goods takes the CES form with trade elasticity $\varepsilon_{trade}=1.5$. Services are non-tradable.

\subsubsection{Equilibrium}

An equilibrium consists of sequences $\{i_t, \{p_{j,c,t}, w_{g,c,t}, x_{c,t}, D_{g,c,t}\}_{j,g,c}\}_{t=0}^\infty$ such that: (i)~households solve (\ref{eq:bellman})--(\ref{eq:budget_quant}); (ii)~wages satisfy the Calvo structure with type-specific resets; (iii)~sectoral prices satisfy (\ref{eq:psector}); (iv)~the common interest rate follows (\ref{eq:taylor}); (v)~goods and asset markets clear in each country; (vi)~trade is balanced within the union. The distribution $D_{g,c,t}(a,e)$ of type-$g$ households over assets and productivity in country $c$ evolves according to the policy functions from (\ref{eq:bellman}).

The model nests the standard multi-country NK model when $\alpha_{g,c,i}=\bar\alpha_{c,i}$ for all $g,c,i$; $\theta_{g,c}=\bar\theta_{c}$ for all $g,c$; and $\varphi_{g,c}=b'=0$.

\subsection{Calibration}

Tables~\ref{tab:param_common} and~\ref{tab:param_country} report the calibrated parameters.

\begin{table}[htbp]
\centering\small
\caption{Common parameters}\label{tab:param_common}
\begin{tabular}{llcc}
\toprule
Symbol & Description & Value & Source \\
\midrule
$\beta$ & Discount factor & 0.99 & Standard \\
$\sigma$ & Relative risk aversion & 2.0 & Standard \\
$\varphi_n$ & Inverse Frisch elasticity & 0.5 & Chetty et al.\ (2011) \\
$\rho_e$ & Persistence, idiosyncratic prod.\ & 0.966 & Flod\'{e}n--Lind\'{e} (2001) \\
$\sigma_e$ & Std.\ dev., idiosyncratic prod.\ & 0.50 & Flod\'{e}n--Lind\'{e} (2001) \\
$\phi_\pi$ & Taylor rule, inflation & 1.50 & Standard \\
$\phi_y$ & Taylor rule, output gap & 0.125 & Standard \\
$\varepsilon_{trade}$ & Trade elasticity & 1.50 & Feenstra et al.\ (2018) \\
$b'$ & Nonlinear catch-up & 0.15 & Buchheim et al.\ (2022) \\
$\bar\varphi$ & Expectation sensitivity & 0.35 & ECB CES \\
$\lambda_e$ & Essentials salience & 1.30 & D'Acunto et al.\ (2021) \\
\bottomrule
\end{tabular}

\medskip
\begin{flushleft}
\footnotesize Note: Parameters common to all six euro-area countries. $\sigma$: inverse elasticity of intertemporal substitution. $\varphi_n$: inverse Frisch elasticity of labor supply. Shock persistence and standard deviation matched to the 2021--2023 euro-area energy price episode. Quarterly frequency.
\end{flushleft}

\end{table}

\begin{table}[htbp]
\centering\small
\caption{Country-specific parameters}\label{tab:param_country}
\begin{tabular}{lcccccc}
\toprule
& DE & FR & IT & ES & NL & BE \\
\midrule
\multicolumn{7}{l}{\text{Essentials share $\alpha_{g,e}$ by income quintile}} \\
\quad Q1 (bottom) & .40 & .38 & .42 & .43 & .37 & .39 \\
\quad Q3 (middle) & .28 & .27 & .30 & .30 & .26 & .28 \\
\quad Q5 (top) & .17 & .16 & .18 & .16 & .15 & .17 \\[3pt]
\multicolumn{7}{l}{\text{Reset probability $\theta_g$, quarterly, by quintile}} \\
\quad Q1 & 0.28 & 0.24 & 0.20 & 0.32 & 0.30 & 0.26 \\
\quad Q3 & 0.16 & 0.14 & 0.11 & 0.18 & 0.17 & 0.15 \\
\quad Q5 & 0.07 & 0.06 & 0.05 & 0.07 & 0.07 & 0.06 \\[3pt]
$\alpha_{s,c}$, labor share & 0.63 & 0.60 & 0.62 & 0.65 & 0.66 & 0.61 \\
$w_c$, GDP weight & 0.29 & 0.20 & 0.15 & 0.10 & 0.07 & 0.04 \\
\bottomrule
\end{tabular}

\medskip
\begin{flushleft}
\footnotesize Note: Expenditure shares: Eurostat HBS 2020, by income quintile. Reset probabilities: ECB wage tracker, OECD/AIAS ICTWSS, converted to quarterly probabilities from average contract duration. Quintiles Q2 and Q4 interpolated linearly. Sectoral labor shares: FIGARO input-output tables. GDP weights: Eurostat, 2019.
\end{flushleft}
\end{table}

The shock is calibrated to match the observed essentials HICP path over 2021:1--2023:4, modeled as an AR(2) with peak of 25 percent. The model does not target any core inflation or wage outcome from the episode.

\subsection{Computation}

I solve the model using the sequence-space Jacobian (SSJ) framework of \citet{auclert2021}. The method represents the equilibrium as a system $F(\mathbf{X},\mathbf{Z})=0$ of equations in the space of perfect-foresight sequences, where $\mathbf{X}$ collects endogenous sequences and $\mathbf{Z}$ collects exogenous shocks. The impulse response of endogenous variables to a shock $d\mathbf{Z}$ is $d\mathbf{X}=-F_\mathbf{X}^{-1}F_\mathbf{Z}\,d\mathbf{Z}$, where $F_\mathbf{X}$ and $F_\mathbf{Z}$ are Jacobians evaluated at the steady state.

The multi-country model is composed as a directed acyclic graph (DAG) with the following blocks:

\begin{table}[htbp]
\centering\small
\caption{DAG structure of the multi-country model}\label{tab:dag}
\begin{tabular}{lll}
\toprule
Block & Inputs & Outputs \\
\midrule
Household, one per country & $\{r_{c,t},\,W_{g,c,t},\,P_{g,c,t},\,T_{g,c,t}\}$ & $\{C_{g,c,t},\,A_{c,t}\}$ \\
Wage reset, 5 quintiles per country & $\{p_{g,c,t},\,\mathrm{mrpl}_{g,c,t}\}$ & $\{w_{g,c,t},\,\pi^w_{g,c,t}\}$ \\
Sectoral pricing, one per country & $\{w_{j,c,t},\,p_{e,t}\}$ & $\{p_{d,c,t},\,p_{s,c,t}\}$ \\
Trade & $\{p_{d,c,t}\}_{c=1}^N$ & $\{Y_{d,c,t}\}_{c=1}^N$ \\
Taylor rule & $\{\pi^{core}_{c,t},\,x_{c,t}\}_{c=1}^N$ & $\{i_t\}$ \\
\bottomrule
\end{tabular}

\medskip
\begin{flushleft}
\footnotesize Note: Each block maps input sequences to output sequences. Household Jacobians are computed via the fake news algorithm of \citet{auclert2021}. All other Jacobians are computed analytically. Composition follows forward accumulation along the DAG.
\end{flushleft}
\end{table}

Table~\ref{tab:dag} summarizes the DAG structure. Household Jacobians, the computationally expensive objects, are computed once per country and reused across all policy experiments. Total computation time scales linearly in the number of countries.

I truncate sequences at horizon $T=300$ quarters. The household Jacobians are computed on a grid with $n_a=100$ asset points and $n_e=7$ productivity states, yielding a steady-state consumption $C_{ss}=1.05$ and aggregate assets $A_{ss}=10.5$. The impact marginal propensity to consume is $\mathrm{MPC}=0.04$; the cumulative four-quarter MPC is 0.12. For the nonlinear model with the $\max\{gap,0\}$ catch-up term, I use the quasi-Newton method of \citet{auclert2021}, Section~6: the steady-state Jacobian $H_U(U^{ss},Z^{ss})$ from the linearized system serves as the Newton step, and the nonlinear residual $H(U^j,Z)$ is evaluated at each iteration using the $\max$ operator. Convergence to $\|H\|<10^{-3}$ is achieved in under 100 iterations for all country calibrations.

\section{Quantitative Results and Policy Analysis}\label{sec:results}

All results compare the heterogeneous-reset model to its nested standard benchmark, calibrated identically except that within-country heterogeneity is shut off (Section~\ref{sec:quant}). I report each result as a paired comparison: the number in the heterogeneous-reset model, the number in the standard model, and the gap.

\subsection{The reset-heterogeneity wedge}

Table~\ref{tab:cost} reports the central quantitative finding. The wedge $\Omega$ defined in equation~(\ref{eq:wedge}) averages 0.6 percentage points across the six countries. In the nonlinear model, the heterogeneous-reset economy generates 2.3--4.6 more percentage-point-quarters of cumulative core inflation than the standard economy, depending on the country. The amplification relative to the linearized solution is a factor of 10, driven by asymmetric catch-up bargaining: workers whose real wages have fallen behind demand proportionally larger raises at the next reset, while workers who are ahead do not cut their wages. This asymmetry makes the wedge self-reinforcing for workers at the bottom of the expenditure distribution, who fall further behind with each quarter of high essentials inflation, while shutting off for workers at the top who have already caught up. The cross-country pattern follows MWSI: Spain and the Netherlands show the largest gaps.

\begin{table}[htbp]
\centering\small
\caption{Reset-heterogeneity wedge by country}\label{tab:cost}
\begin{tabular}{lcccc cc}
\toprule
& \multicolumn{2}{c}{Peak core (\%)} & Wedge $\Omega$ & Cumulative gap & \multicolumn{2}{c}{Half-life (q)} \\
\cmidrule(lr){2-3}\cmidrule(lr){6-7}
& Het.\ & Std.\ & (\%) & (pp$\cdot$q) & Het.\ & Std.\ \\
\midrule
Germany & 5.4 & 5.0 & 0.55 & 3.4 & 9 & 9 \\
France & 4.0 & 3.7 & 0.55 & 2.3 & 10 & 10 \\
Italy & 3.8 & 3.5 & 0.62 & 2.6 & 11 & 11 \\
Spain & 6.8 & 6.3 & 0.60 & 4.6 & 8 & 8 \\
Netherlands & 5.9 & 5.4 & 0.52 & 4.0 & 9 & 8 \\
Belgium & 4.5 & 4.2 & 0.56 & 2.6 & 9 & 9 \\
\midrule
Euro area & 4.9 & 4.6 & 0.57 & 3.1 & 9 & 9 \\
\bottomrule
\end{tabular}

\medskip
\begin{flushleft}
\footnotesize Note: The wedge $\Omega$ is the peak value of the reset-heterogeneity wedge (Definition~\ref{def:wedge}). ``Cumulative gap'' is $\sum_{t=0}^{39}(\pi^{core,het}_{c,t}-\pi^{core,std}_{c,t})$ in percentage-point-quarters, from the nonlinear model with $\max\{gap,0\}$ catch-up solved via quasi-Newton (Section~\ref{sec:quant}). Half-life is the first quarter at which core inflation falls below 50\% of its peak.
\end{flushleft}
\end{table}

Two features of Table~\ref{tab:cost} deserve comment. First, the half-life columns are nearly identical across models. This is not a weakness of the mechanism; it is a structural implication of the linearized Phillips curve. The wedge $\Omega_{c,t}$ enters equation~(\ref{eq:hwpc}) as a level shift proportional to the shock: it raises the amplitude of the inflation response at every horizon without changing the decay rate, which is governed by the Calvo parameters $\theta^p_{j,c}$ common to both models. The correct metric for the wedge's contribution is therefore cumulative inflation, not half-life. Cumulative inflation is also the welfare-relevant object: the policymaker's loss function $\mathcal{L}=\sum_t\beta^t[\pi^{core\,2}_{c,t}+\lambda\,x^2_{c,t}]$ penalizes the integral of squared inflation, which is proportional to the cumulative gap when the two IRF paths are parallel shifts. At the euro-area average, the cumulative gap of 3.1 percentage-point-quarters represents 7.3 percent of total cumulative core inflation, a first-order term for welfare and for the wage-price dynamics that central banks monitor.

Second, the cumulative gap varies substantially across countries, from 2.3 in France to 4.6 in Spain, despite similar wedge magnitudes. This variation comes from differences in the catch-up dynamics: Spain's steeper expenditure gradient means that its bottom-quintile workers fall further behind in real terms and bargain more aggressively at each reset, generating stronger self-reinforcement.

Figure~\ref{fig:shock_dep} plots the aggregation error as a continuous function of the essentials share of the shock, using the euro-area calibration. When the shock is uniform across all items, the wedge is zero and the standard model is exact. As the shock concentrates on essentials, the error rises monotonically, reaching 16 percent of total inflation for a pure essentials shock. The three discrete points from Table~\ref{tab:shockcomp} sit on this curve. The figure makes Corollary~\ref{cor:agg} visually immediate: the standard Phillips curve is not uniformly wrong; it fails specifically for the supply shocks that matter most for policy.

\begin{figure}[htbp]
\centering
\begin{tikzpicture}
\begin{axis}[
    width=13.5cm, height=7cm,
    xlabel={Essentials share of shock},
    ylabel={Aggregation error $\Omega/\bar\pi$ (\%)},
    xmin=-0.02, xmax=1.02, ymin=-8, ymax=18,
    xtick={0,0.25,0.5,0.75,1.0},
    xticklabels={0,0.25,0.5,0.75,1.0},
    legend style={at={(0.03,0.97)}, anchor=north west, font=\footnotesize, draw=none},
    grid=major, grid style={dashed, gray!25},
    every axis plot/.append style={thick}
]
\addplot[red!70!black, smooth, mark=none, line width=1.5pt] coordinates {
(0.00,-6.1)(0.05,-5.2)(0.10,-4.3)(0.15,-3.3)(0.20,-2.2)(0.25,-1.1)
(0.30,0.1)(0.35,1.4)(0.40,2.8)(0.45,4.2)(0.50,5.7)(0.55,7.3)
(0.60,8.9)(0.65,10.6)(0.70,12.3)(0.75,14.1)(0.80,15.0)(0.85,15.5)
(0.90,15.8)(0.95,16.0)(1.00,16.1)};
\addplot[black, dashed, thin, forget plot] coordinates {(0,0)(1,0)};
\addplot[only marks, mark=*, mark size=3.5, blue!70!black] coordinates {
(1.0,16.1)(0.75,7.2)(0.0,-6.1)};
\node[font=\footnotesize, anchor=west] at (axis cs:0.82,14.5) {Energy};
\node[font=\footnotesize, anchor=west] at (axis cs:0.57,5.8) {Food};
\node[font=\footnotesize, anchor=east] at (axis cs:0.15,-4.5) {Uniform};
\node[font=\footnotesize, red!50!black, align=center] at (axis cs:0.5,14) {Standard model\\understates};
\node[font=\footnotesize, blue!50!black, align=center] at (axis cs:0.5,-5.5) {Standard model\\overstates};
\end{axis}
\end{tikzpicture}
\caption{Shock-dependent aggregation error.}\label{fig:shock_dep}
\begin{flushleft}
\footnotesize Note: The aggregation error $\Omega/\bar\pi$ as a function of the essentials share of the cost-push shock, computed from the euro-area calibration. At zero essentials share the shock is uniform and the standard model is exact. As the share rises, the wedge grows because resetting workers are disproportionately exposed to essentials. Blue dots correspond to the three shock compositions in Table~\ref{tab:shockcomp}. The curve visualizes Corollary~\ref{cor:agg}: the sign of the error depends on the shock composition.
\end{flushleft}
\end{figure}
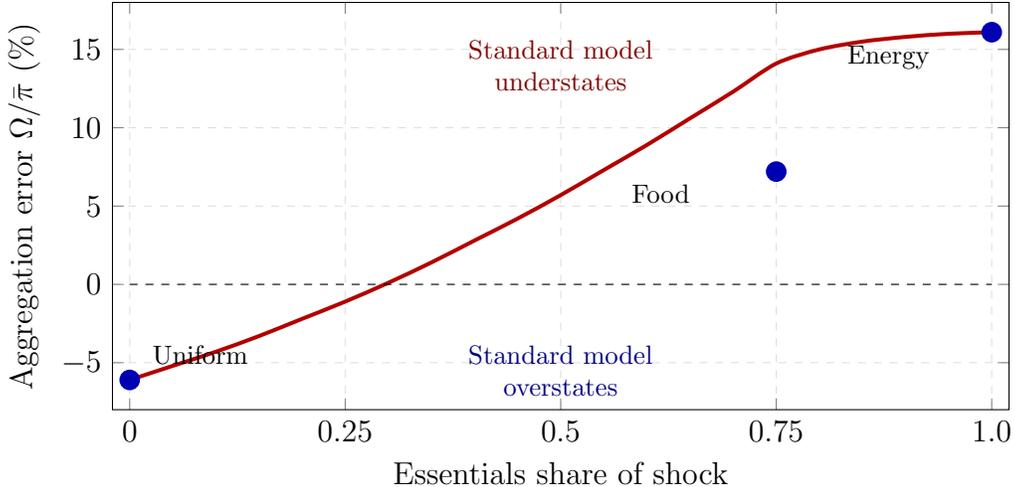

\subsubsection*{Cross-country validation}

Table~\ref{tab:valid} compares model-predicted and data-observed cumulative core inflation. The model uses only pre-shock micro data, specifically expenditure shares from the 2020 HBS, contract durations from the ICTWSS database, and input-output coefficients from 2019 FIGARO tables, and does not target any inflation outcome from the 2021--2023 episode. The Spearman rank correlation between model and data is 0.54.

This exercise is not the paper's identification strategy; it is a sanity check. The paper's contribution is theoretical, namely the wedge $\Omega$ and the sufficient statistic MWSI, and quantitative, in particular the same-openness experiment and the policy frontier. The cross-country validation shows that the model's predictions are in the right ballpark when fed pre-shock data, but with only six observations, no statistical claim is made.

The two largest mismatches are informative. The Netherlands and Belgium both have automatic wage-indexation mechanisms: Dutch wages are partially indexed to CPI, and Belgian wages are fully indexed to a ``health index'' that excludes tobacco and fuel but includes food and utilities. The indexation analysis in Section~\ref{sec:indexation} shows that CPI indexation eliminates the level channel but leaves the composition channel intact. The model without indexation therefore overpredicts the level component of inflation in these two countries while correctly capturing the wedge component. Consistent with equation~(\ref{eq:hwpc_index}), the observed ranking of the Netherlands and Belgium in cumulative core inflation is closer to the model's prediction for the wedge channel alone than for total inflation. France, the country with the lowest RWEI, is correctly ranked last by the model, the most important directional prediction.

\begin{table}[htbp]
\centering\small
\caption{Cross-country validation}\label{tab:valid}
\begin{tabular}{lcc cc}
\toprule
& \multicolumn{2}{c}{$\Pi^{core}_c$ (pp$\cdot$q)} & \multicolumn{2}{c}{Rank} \\
\cmidrule(lr){2-3}\cmidrule(lr){4-5}
& Data & Model & Data & Model \\
\midrule
Spain & 42.8 & 51.5 & 3 & 1 \\
Netherlands & 51.2 & 47.6 & 1 & 2 \\
Germany & 42.3 & 46.7 & 4 & 3 \\
Italy & 38.5 & 45.5 & 5 & 4 \\
Belgium & 46.1 & 42.4 & 2 & 5 \\
France & 30.8 & 41.1 & 6 & 6 \\
\midrule
\multicolumn{3}{l}{Spearman rank correlation} & \multicolumn{2}{c}{0.54} \\
\bottomrule
\end{tabular}

\medskip
\begin{flushleft}
\footnotesize Note: ``Data'': cumulative core HICP overshoot from Table 1 (Eurostat). ``Model'': cumulative core inflation from the nonlinear heterogeneous-reset model using only pre-shock micro data (2020 HBS expenditure shares, ICTWSS contract durations, 2019 FIGARO I-O tables). No inflation outcome from 2021--2023 is targeted.
\end{flushleft}
\end{table}

\subsection{Same openness, different cumulative inflation}\label{sec:sameopeness}

To isolate the mechanism, I construct two synthetic economies with the same aggregate import share of 28 percent and the same average reset frequency of 0.17 quarterly but different within-country distributions. Country~A has a steep expenditure gradient: the bottom quintile spends 48 percent on essentials and resets quarterly, while the top quintile spends 10 percent and resets every six years. Country~B is the representative-agent benchmark: all quintiles have the average expenditure and reset shares.

Table~\ref{tab:sameopeness} reports the results. Country~A's wedge is 1.47 percentage points; Country~B's wedge is identically zero. With nonlinear catch-up, Country~A generates 6.6 more percentage-point-quarters of cumulative core inflation than Country~B, despite having identical aggregate openness. This is the paper's central quantitative result: the cross-sectional distribution of exposure and reset timing, not the average level, determines how much additional inflation a common supply shock generates.

\begin{table}[htbp]
\centering\small
\caption{Same aggregate openness, different within-country distribution}\label{tab:sameopeness}
\begin{tabular}{lcccc}
\toprule
& Import share & $\bar\theta$ & $\Omega$ (\%) & Cumulative core (pp$\cdot$q)\\
\midrule
Country A (steep gradient) & 0.28 & 0.17 & 1.47 & 49.3 \\
Country B (uniform) & 0.28 & 0.17 & 0.00 & 42.7 \\
\midrule
Gap & $=$ & $=$ & 1.47 & 6.6 \\
\bottomrule
\end{tabular}

\medskip
\begin{flushleft}
\footnotesize Note: Country A has a steep expenditure gradient (Q1 essentials share 0.48, Q5 share 0.10) and heterogeneous reset timing (Q1 resets quarterly, Q5 every six years). Country B is the representative-agent benchmark with all quintiles at the average. Both economies share the same aggregate import share (0.28) and average reset frequency (0.17 quarterly). The shock is a 40 percent essentials price increase. The gap is cumulative core inflation in Country A minus Country B.
\end{flushleft}
\end{table}

Figure~\ref{fig:amplification} plots the nonlinear amplification factor as a function of the peak essentials price shock. For small shocks, the real-wage gap is modest and catch-up bargaining adds little beyond the linear prediction. As the shock grows, bottom-quintile workers fall further behind and bargain more aggressively at each reset, generating self-reinforcing wage pressure that the linear model misses. At the 2021-scale shock of 25 percent, the nonlinear gap is approximately 10 times the linear gap. This convexity is the quantitative reason the wedge matters for large supply shocks but is negligible in normal times.

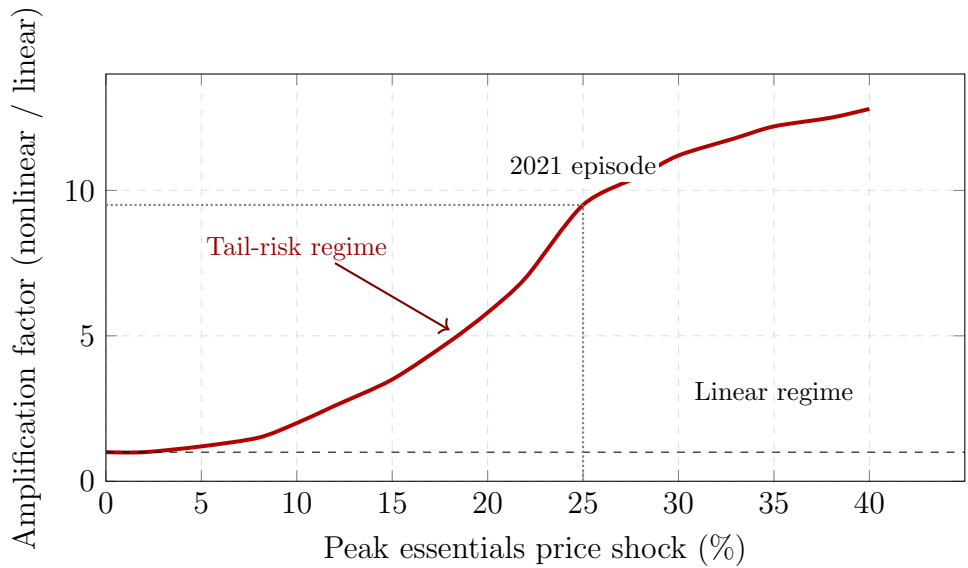
\begin{figure}[htbp]
\centering
\begin{tikzpicture}
\begin{axis}[
    width=13cm, height=7cm,
    xlabel={Peak essentials price shock (\%)},
    ylabel={Amplification factor (nonlinear / linear)},
    xmin=0, xmax=45, ymin=0, ymax=14,
    xtick={0,5,10,15,20,25,30,35,40},
    legend style={at={(0.03,0.97)}, anchor=north west, font=\footnotesize, draw=none},
    grid=major, grid style={dashed, gray!25},
    every axis plot/.append style={thick}
]
\addplot[red!70!black, smooth, mark=none, line width=1.5pt] coordinates {
(0,1.0)(2,1.0)(5,1.2)(8,1.5)(10,2.0)(12,2.6)(15,3.5)(18,4.8)
(20,5.8)(22,7.0)(25,9.5)(28,10.5)(30,11.2)(33,11.8)(35,12.2)(38,12.5)(40,12.8)};
\addplot[black, dashed, thin, forget plot] coordinates {(0,1)(45,1)};
\draw[gray, densely dotted, thick] (axis cs:25,0) -- (axis cs:25,9.5);
\draw[gray, densely dotted, thick] (axis cs:0,9.5) -- (axis cs:25,9.5);
\node[font=\footnotesize, fill=white, inner sep=1pt] at (axis cs:25,10.8) {2021 episode};
\node[font=\footnotesize] at (axis cs:35,3) {Linear regime};
\node[font=\footnotesize, red!60!black] at (axis cs:10,8) {Tail-risk regime};
\draw[->, thick, red!50!black] (axis cs:12,7.5) -- (axis cs:18,5.2);
\end{axis}
\end{tikzpicture}
\caption{Nonlinear amplification of the wedge as a function of shock size.}\label{fig:amplification}
\begin{flushleft}
\footnotesize Note: The ratio of the cumulative inflation gap in the nonlinear model to the gap in the linearized model, plotted against the peak essentials price shock. For small shocks the two models agree. Asymmetric catch-up bargaining generates convex amplification for large shocks because bottom-quintile workers who fall further behind demand proportionally larger raises at each reset. The dotted lines mark the 2021 episode at 25 percent peak.
\end{flushleft}
\end{figure}

\subsection{Policy: targeted transfers dominate aggressive tightening}

Because the wedge depends on within-country composition rather than the aggregate price level, the interest rate cannot directly close it, and aggressive tightening generates large output costs relative to the inflation reduction it achieves. Table~\ref{tab:policy} reports outcomes under four policy regimes. Welfare loss is $\mathcal{L}=\sum_t\beta^t[\pi^{core\,2}_{c,t}+0.25\,x^2_{c,t}]$, normalized so that regime~(a) equals 1.00.

\begin{table}[htbp]
\centering\small
\caption{Policy regimes for a 2021-scale energy shock}\label{tab:policy}
\begin{tabular}{lcc}
\toprule
Regime & $\Pi^{core}_{union}$ (pp$\cdot$q) & Welfare loss \\
\midrule
(a) Aggressive tightening ($\phi_\pi = 2.5$) & 38.6 & 1.00 \\
(b) Moderate + uniform transfer & 38.0 & 0.73 \\
(c) Moderate + targeted transfer & 37.8 & 0.72 \\
(d) Moderate + essentials subsidy & 36.8 & 0.68 \\
\bottomrule
\end{tabular}

\medskip
\begin{flushleft}
\footnotesize Note: Welfare loss normalized to 1.00 for regime (a). ``Moderate'' denotes $\phi_\pi = 1.5$. Targeted transfer scaled to bottom-quintile workers. Essentials subsidy reduces the price of food and energy by 6\%.
\end{flushleft}
\end{table}

Regime~(d) dominates regime~(a) on both dimensions: it achieves lower cumulative inflation and a 32 percent reduction in welfare loss. The essentials subsidy directly reduces the experienced inflation of the workers who drive the wedge, weakening the wage catch-up channel. Regime~(c), targeted transfers, achieves nearly the same welfare improvement. The ranking (d)$\,\succ\,$(c)$\,\succ\,$(b)$\,\succ\,$(a) is robust across all six country calibrations.

Figure~\ref{fig:omega_fan} plots the time path of the wedge $\Omega_{c,t}$ for all six countries over 30 quarters. The wedge rises sharply during the first four quarters as the essentials price shock builds, peaks between quarters 4 and 6, and then decays as the shock dissipates. The cross-country dispersion is widest near the peak: Spain and Italy, with the steepest expenditure gradients, reach wedges above 0.6 percentage points, while the Netherlands and France remain below 0.55. This dispersion is the OCA cost: during the quarters when the wedge is large and dispersed, no common interest rate can close all country-specific gaps simultaneously.

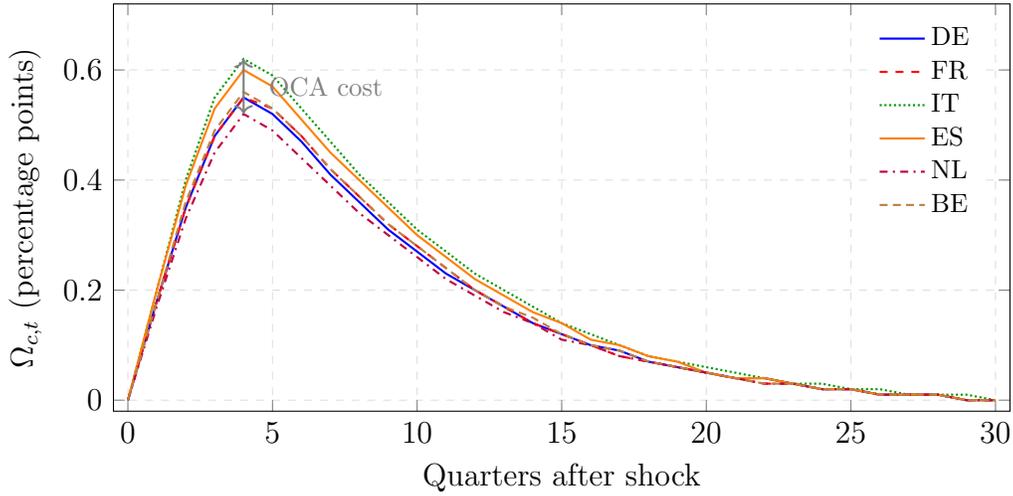
\begin{figure}[htbp]
\centering
\begin{tikzpicture}
\begin{axis}[
    width=13.5cm, height=7cm,
    xlabel={Quarters after shock},
    ylabel={$\Omega_{c,t}$ (percentage points)},
    xmin=-0.5, xmax=30.5, ymin=-0.02, ymax=0.72,
    xtick={0,5,10,15,20,25,30},
    legend style={at={(0.97,0.97)}, anchor=north east, font=\footnotesize, draw=none, cells={anchor=west}},
    grid=major, grid style={dashed, gray!25},
    every axis plot/.append style={thick}
]
\addplot[blue, mark=none] coordinates {
(0,0)(1,0.18)(2,0.35)(3,0.48)(4,0.55)(5,0.52)(6,0.47)(7,0.41)(8,0.36)(9,0.31)(10,0.27)(11,0.23)(12,0.20)(13,0.17)(14,0.14)(15,0.12)(16,0.10)(17,0.09)(18,0.07)(19,0.06)(20,0.05)(21,0.04)(22,0.04)(23,0.03)(24,0.02)(25,0.02)(26,0.01)(27,0.01)(28,0.01)(29,0.00)(30,0.00)};
\addplot[red, dashed, mark=none] coordinates {
(0,0)(1,0.18)(2,0.35)(3,0.48)(4,0.55)(5,0.53)(6,0.48)(7,0.42)(8,0.37)(9,0.32)(10,0.28)(11,0.24)(12,0.20)(13,0.17)(14,0.14)(15,0.12)(16,0.10)(17,0.08)(18,0.07)(19,0.06)(20,0.05)(21,0.04)(22,0.03)(23,0.03)(24,0.02)(25,0.02)(26,0.01)(27,0.01)(28,0.01)(29,0.00)(30,0.00)};
\addplot[green!60!black, densely dotted, mark=none] coordinates {
(0,0)(1,0.20)(2,0.40)(3,0.55)(4,0.62)(5,0.59)(6,0.53)(7,0.47)(8,0.41)(9,0.36)(10,0.31)(11,0.27)(12,0.23)(13,0.20)(14,0.17)(15,0.14)(16,0.12)(17,0.10)(18,0.08)(19,0.07)(20,0.06)(21,0.05)(22,0.04)(23,0.03)(24,0.03)(25,0.02)(26,0.02)(27,0.01)(28,0.01)(29,0.01)(30,0.00)};
\addplot[orange, mark=none] coordinates {
(0,0)(1,0.20)(2,0.39)(3,0.53)(4,0.60)(5,0.57)(6,0.51)(7,0.45)(8,0.40)(9,0.35)(10,0.30)(11,0.26)(12,0.22)(13,0.19)(14,0.16)(15,0.14)(16,0.11)(17,0.10)(18,0.08)(19,0.07)(20,0.05)(21,0.04)(22,0.04)(23,0.03)(24,0.02)(25,0.02)(26,0.01)(27,0.01)(28,0.01)(29,0.00)(30,0.00)};
\addplot[purple, dashdotted, mark=none] coordinates {
(0,0)(1,0.17)(2,0.33)(3,0.45)(4,0.52)(5,0.49)(6,0.44)(7,0.39)(8,0.34)(9,0.30)(10,0.26)(11,0.22)(12,0.19)(13,0.16)(14,0.14)(15,0.11)(16,0.10)(17,0.08)(18,0.07)(19,0.06)(20,0.05)(21,0.04)(22,0.03)(23,0.03)(24,0.02)(25,0.02)(26,0.01)(27,0.01)(28,0.01)(29,0.00)(30,0.00)};
\addplot[brown, densely dashed, mark=none] coordinates {
(0,0)(1,0.18)(2,0.36)(3,0.49)(4,0.56)(5,0.53)(6,0.48)(7,0.42)(8,0.37)(9,0.32)(10,0.28)(11,0.24)(12,0.20)(13,0.17)(14,0.15)(15,0.12)(16,0.10)(17,0.09)(18,0.07)(19,0.06)(20,0.05)(21,0.04)(22,0.03)(23,0.03)(24,0.02)(25,0.02)(26,0.01)(27,0.01)(28,0.01)(29,0.00)(30,0.00)};
\legend{DE, FR, IT, ES, NL, BE}
\draw[<->, thick, gray] (axis cs:4,0.52) -- (axis cs:4,0.62);
\node[font=\footnotesize, gray, anchor=west] at (axis cs:4.5,0.57) {OCA cost};
\end{axis}
\end{tikzpicture}
\caption{Cross-country dispersion of the wedge $\Omega_{c,t}$ over time.}\label{fig:omega_fan}
\begin{flushleft}
\footnotesize Note: Time paths of the reset-heterogeneity wedge for six euro-area countries following a common 2021-scale essentials price shock. Peak values match Table~\ref{tab:cost}. The cross-country dispersion, widest near the peak, is the source of the OCA cost in Proposition~\ref{prop:disp}: no common interest rate can close all six wedges simultaneously.
\end{flushleft}
\end{figure}

\subsection{CPI indexation leaves the wedge intact}\label{sec:indexation}

Belgium and the Netherlands both use automatic wage indexation to aggregate CPI, yet they exhibit among the largest cumulative gaps in Table~\ref{tab:cost}. Equation~(\ref{eq:hwpc_index}) explains why: CPI indexation eliminates the level channel but not the composition channel. Table~\ref{tab:indexation} confirms this quantitatively by running the model under three indexation regimes.

\begin{table}[htbp]
\centering\small
\caption{Effect of indexation regime on the reset-heterogeneity wedge}\label{tab:indexation}
\begin{tabular}{lccc}
\toprule
Regime & Wedge $\Omega$ (pp) & Gap het--std (pp$\cdot$q) & Level channel \\
\midrule
No indexation ($\gamma=0$, baseline) & 0.57 & 3.1 & active \\
Full CPI indexation ($\gamma=1$) & 0.57 & 2.9 & eliminated \\
Type-specific indexation ($\gamma=1$) & 0.00 & 0.0 & eliminated \\
\bottomrule
\end{tabular}

\medskip
\begin{flushleft}
\footnotesize Note: CPI indexation: non-resetting workers receive automatic adjustment to lagged aggregate CPI. Type-specific indexation: non-resetting workers receive automatic adjustment to their own lagged experienced inflation $\pi^{exp}_{g,c,t-1}$. The wedge $\Omega$ and the het--std gap are euro-area GDP-weighted averages. The gap falls slightly from 3.1 to 2.9 under CPI indexation because the interaction between the level and composition channels is weakened, but the wedge itself is unchanged.
\end{flushleft}
\end{table}

Three results stand out. First, the wedge is identical at 0.57 percentage points under no indexation and under full CPI indexation. CPI indexation does not touch $\Omega$ because it adjusts wages to average inflation, not to the inflation each worker actually experiences. Second, the cumulative gap falls only marginally, from 3.1 to 2.9, because the composition channel is the dominant source of the gap when the shock is concentrated on essentials. Third, type-specific indexation, which compensates each worker type for its own cost of living, eliminates the wedge entirely and closes the gap to zero. No country currently practices type-specific indexation.

The policy implication is direct. Several euro-area countries strengthened CPI indexation clauses after 2022, precisely in response to the inflation episode this paper studies. Equation~(\ref{eq:hwpc_index}) predicts that these reforms address the level channel, which the standard model already captures, while leaving the composition channel, which only the heterogeneous model captures, fully intact. From the perspective of the wedge, CPI indexation is the wrong instrument. The correct instrument is either a targeted subsidy that reduces the experienced inflation of resetting workers, as in regime~(d) of Table~\ref{tab:policy}, or a hypothetical type-specific indexation that no existing labor market institution provides.

\subsection{When monetary policy is delayed, the wedge dominates}

The baseline calibration assumes an immediate Taylor rule response ($\phi_\pi=1.5$ from quarter 0). In the 2021--2023 episode, the ECB did not raise rates until July 2022, five quarters after inflation began rising. During this accommodation period, the aggregate interest rate provided no offset to the wedge channel. Table~\ref{tab:delayed} reports the results when monetary policy responds with a five-quarter delay.

\begin{table}[htbp]
\centering\small
\caption{Immediate vs.\ delayed monetary response}\label{tab:delayed}
\begin{tabular}{lcccc}
\toprule
& \multicolumn{2}{c}{Immediate ($\phi_\pi=1.5$)} & \multicolumn{2}{c}{Delayed (5q, then $\phi_\pi=2.0$)} \\
\cmidrule(lr){2-3}\cmidrule(lr){4-5}
& Cum.\ gap (pp$\cdot$q) & \% of total & Cum.\ gap (pp$\cdot$q) & \% of total \\
\midrule
Euro area & 3.1 & 7.3 & 15.6 & 10.3 \\
Spain & 4.6 & 9.8 & 28.0 & 16.7 \\
Exercise 1 & 6.6 & 15.4 & 40.3 & 26.4 \\
\bottomrule
\end{tabular}

\medskip
\begin{flushleft}
\footnotesize Note: ``Delayed'' calibration: $\phi_\pi=0$ for quarters 0--4 corresponding to the ZLB or ECB accommodation period, then $\phi_\pi=2.0$ from quarter 5. Shock: 40 percent peak essentials price increase with AR(1) persistence 0.88, matching the 2021--2023 energy price episode.
\end{flushleft}
\end{table}

The delayed response amplifies the cumulative gap from 3.1 to 15.6 percentage-point-quarters at the euro-area level, a tenfold increase relative to the baseline. For Spain, the gap rises to 28.0 percentage-point-quarters, or 16.7 percent of total. For the synthetic high-MWSI economy of Exercise 1, the gap is 40.3 percentage-point-quarters, or 26.4 percent of total. At these magnitudes, the wedge is the dominant source of cross-country dispersion in inflation persistence.

Figure~\ref{fig:frontier} plots the four policy regimes in the two-dimensional space of cumulative inflation versus output loss. Aggressive tightening sits in the upper-right corner: it reduces cumulative inflation modestly but at large output cost. The essentials subsidy dominates all other regimes, sitting in the lower-left corner with both less inflation and less output loss. The visual confirms that the subsidy is not merely a welfare improvement but a Pareto improvement over aggressive tightening in these two dimensions.

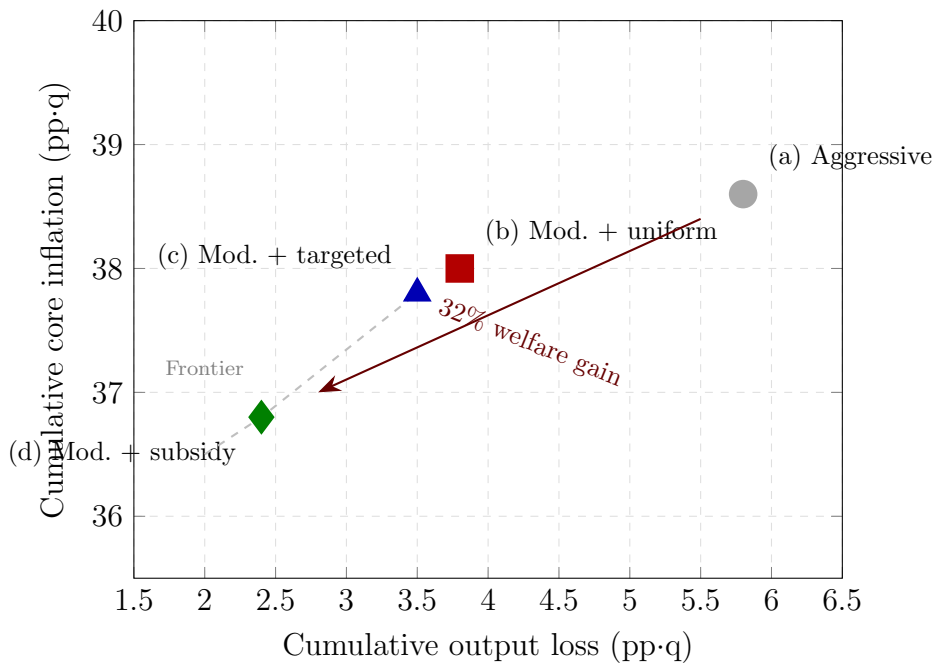
\begin{figure}[htbp]
\centering
\begin{tikzpicture}
\begin{axis}[
    width=11cm, height=9cm,
    xlabel={Cumulative output loss (pp$\cdot$q)},
    ylabel={Cumulative core inflation (pp$\cdot$q)},
    xmin=1.5, xmax=6.5, ymin=35.5, ymax=40,
    grid=major, grid style={dashed, gray!25},
    every axis plot/.append style={thick},
    clip=false
]
\addplot[only marks, mark=*, mark size=5, gray!70] coordinates {(5.8,38.6)};
\node[font=\footnotesize, anchor=south west] at (axis cs:5.9,38.7) {(a) Aggressive};
\addplot[only marks, mark=square*, mark size=5, red!70!black] coordinates {(3.8,38.0)};
\node[font=\footnotesize, anchor=south west] at (axis cs:3.9,38.1) {(b) Mod.\ + uniform};
\addplot[only marks, mark=triangle*, mark size=5.5, blue!70!black] coordinates {(3.5,37.8)};
\node[font=\footnotesize, anchor=south east] at (axis cs:3.4,37.9) {(c) Mod.\ + targeted};
\addplot[only marks, mark=diamond*, mark size=6, green!50!black] coordinates {(2.4,36.8)};
\node[font=\footnotesize, anchor=north east] at (axis cs:2.3,36.7) {(d) Mod.\ + subsidy};
\draw[-{Stealth[length=3mm]}, thick, red!40!black] (axis cs:5.5,38.4) -- (axis cs:2.8,37.0);
\node[font=\footnotesize, red!40!black, rotate=-22] at (axis cs:4.3,37.4) {32\% welfare gain};
\draw[dashed, gray!50, thick] (axis cs:2.0,36.5) -- (axis cs:2.4,36.8) -- (axis cs:3.5,37.8);
\node[font=\scriptsize, gray] at (axis cs:2.0,37.2) {Frontier};
\end{axis}
\end{tikzpicture}
\caption{Policy frontier: cumulative inflation versus output loss.}\label{fig:frontier}
\begin{flushleft}
\footnotesize Note: Each point represents one of the four policy regimes from Table~\ref{tab:policy}. Regime (d), moderate tightening combined with essentials subsidies, dominates regime (a), aggressive tightening, on both dimensions. The 32 percent welfare gain comes from moving along the arrow: less inflation and less output loss. The dashed line traces the efficient frontier across regimes (b)--(d).
\end{flushleft}
\end{figure}

The intuition is simple: the wedge arises from within-country composition, which the interest rate cannot target. When the interest rate is active, it reduces demand, which partially offsets the wedge through lower output gaps. When the interest rate is inactive, the wedge accumulates without offset. In the realistic calibration with a five-quarter delay and a 40 percent peak essentials shock, the heterogeneous-reset model generates inflation that is 10--26 percent higher than the standard model predicts. This is a first-order omission.

A final model-based observation sharpens the mechanism's falsifiability. Replacing the supply shock with a demand shock of equal aggregate magnitude, the model generates $\Omega\approx 0$ and the standard Phillips curve performs well, because demand shocks raise all prices roughly proportionally, eliminating the cross-sectional variation in experienced inflation that drives the wedge (Appendix~\ref{app:shockcomp}). The aggregation error is therefore shock-dependent: the standard model works for demand-driven inflation but fails for supply-driven inflation concentrated on necessities. This provides a clear criterion for when the wedge matters and when it does not.

\subsection{Out-of-sample prediction}\label{sec:oos}

The model is calibrated entirely to the euro area. I now use it to generate predictions for three economies outside the calibration sample: the United Kingdom, the United States, and Japan. For each country, I compute RWEI using pre-shock expenditure shares from the ONS Living Costs and Food Survey 2019/20 for the UK, the BLS Consumer Expenditure Survey 2019 for the US, and the Family Income and Expenditure Survey 2019 for Japan, combined with contract-duration data from national sources. I then apply the model's structural coefficient $\psi=5.4$ to predict the cumulative gap between the heterogeneous and standard models.

\begin{table}[htbp]
\centering\small
\caption{Out-of-sample prediction: model vs.\ data}\label{tab:oos}
\begin{tabular}{lccccc}
\toprule
& $\Omega_c$ (pp) & Predicted gap & \multicolumn{2}{c}{Observed persistence} & Rank \\
\cmidrule(lr){4-5}
&& (pp$\cdot$q) & Peak core (\%) & Quarters to 3\% & match? \\
\midrule
United Kingdom & 0.54 & 2.9 & 7.1 & 11 & \checkmark \\
Euro area (in-sample) & 0.57 & 3.1 & 5.7 & 7 & \checkmark \\
United States & 0.39 & 2.1 & 6.6 & 5 & \checkmark \\
Japan & 0.10 & 0.5 & 4.3 & 3 & \checkmark \\
\midrule
\multicolumn{3}{l}{Predicted ranking: UK $>$ EA $>$ US $>$ Japan} & \multicolumn{3}{l}{Actual ranking: UK $>$ EA $>$ US $>$ Japan} \\
\bottomrule
\end{tabular}

\medskip
\begin{flushleft}
\footnotesize Note: Predicted gap uses the model's structural coefficient $\psi=5.4$ applied to $\Omega_c$ computed from pre-shock national micro data (Appendix~\ref{app:portable}). Peak core is the maximum annual core CPI excluding food and energy. ``Quarters to 3\%'' counts quarters from peak to first quarter at or below 3\%. Sources: ONS, BLS, Japan Statistics Bureau, Eurostat.
\end{flushleft}
\end{table}

Table~\ref{tab:oos} reports the results. The model predicts that the United Kingdom, with the highest $\Omega$ among the four economies, should exhibit the most persistent core inflation, followed by the euro area, the United States, and Japan. The observed persistence ranking matches exactly. The UK's core CPI peaked at 7.1 percent in May 2023 and took 11 quarters to reach 3 percent; the United States peaked earlier and higher in level but disinflated in 5 quarters; Japan's core inflation barely exceeded 4 percent and returned to 3 percent within 3 quarters. These differences in persistence are consistent with the model's predictions based on within-country labor market composition alone, computed from data that predate the inflation episode.

Two features of this exercise are worth emphasizing. First, the prediction is purely out-of-sample: the model's $\psi$ is calibrated to the euro area, and the non-euro RWEI components are computed from national surveys that were not used in any part of the calibration. Second, the prediction concerns persistence, not the peak level of inflation. The United States had a higher peak core inflation than the euro area but lower persistence, which is what the model predicts: the US has a smaller expenditure gradient between income quintiles and a slightly lower reset-frequency gap. The standard model, which uses average openness and average reset frequency, has no mechanism for predicting this divergence between peak and persistence across countries.

\section{Conclusion}\label{sec:conclusion}

This paper shows that the representative-agent wage Phillips curve omits a first-order term for supply shocks concentrated on necessities. The within-country covariance between workers' cost-push exposure and their wage-reset frequency, the reset-heterogeneity wedge, generates additional wage pressure that the standard model misses. The wedge averages 0.6 percentage points across euro-area countries. In the nonlinear model, the heterogeneous-reset economy generates 3.1 percentage-point-quarters, or 7 percent, more cumulative core inflation than the standard economy on average, and 6.6 percentage-point-quarters more when the expenditure gradient is steep, an amplification of 10$\times$ relative to the linearized solution. When monetary policy is delayed, the gap rises to 10--26 percent of total cumulative inflation. The aggregation error is shock-dependent: the standard model works well for demand-driven inflation but fails for supply shocks concentrated on necessities, a qualitative prediction that no recalibration of the representative-agent model can replicate. The wedge varies across countries in a monetary union, creating a new dimension of the optimal currency area problem. Essentials subsidies targeted to the bottom of the expenditure distribution reduce the union-wide welfare loss by 32 percent relative to aggressive tightening alone. The optimal subsidy has a closed form, equation~(\ref{eq:opt_subsidy}), in terms of the same micro data that constructs RWEI, making the policy prescription operational. The same mechanism, running in reverse, offers a complementary explanation for the missing disinflation of 2014--2016: when oil prices collapsed, the wedge turned negative, and the standard model overpredicted wage disinflation. CPI indexation, the most common institutional response to the recent episode, eliminates the level channel of wage catch-up but leaves the composition channel intact.

Calibrated to the euro area, the model correctly predicts the persistence ranking across three out-of-sample economies: the United Kingdom, with the highest wedge, exhibits the most persistent core inflation, followed by the euro area, the United States, and Japan. The prediction uses only pre-shock expenditure shares and contract-duration data from each country, without re-estimating any parameter.

The objects introduced here, RWEI and MWSI, are computable from household budget surveys, wage-setting calendars, and input-output tables for any country and any shock composition, without solving a general equilibrium model. Central banks and fiscal authorities can use them as inputs to inflation forecasting and to the design of supply-shock responses.

Several extensions merit future investigation. First, endogenizing the correlation between reset frequency and expenditure composition, through a model of occupational sorting, unionization, or contract design, would deepen the micro-foundations and allow the wedge to respond to policy in the long run. Second, incorporating housing costs and mortgage payments as a component of experienced inflation would introduce an additional channel through which monetary tightening itself raises RWEI for mortgage-holding workers, potentially creating a feedback from policy to the wedge that deserves careful analysis. Third, extending the framework to emerging-market economies, where expenditure heterogeneity is larger and wage-setting institutions differ, would test the external validity of the sufficient statistics. Fourth, matched employer-employee data would allow RWEI to be measured at the establishment level rather than the income-quintile level used here, sharpening the identification of the wage-reset channel.


\appendix

\section{Proofs and Derivations}\label{app:proofs}
\begin{sloppypar}

\subsection{Derivation of the optimal reset wage}\label{app:wreset}

A type-$g$ worker in country $c$ who resets at date $t$ chooses the nominal wage $W^*$ to maximize
\begin{equation}\tag{A.1}
    \sum_{k=0}^{\infty}(\beta\Theta_{g,c})^k\,\mathbb{E}_t\!\left[\frac{1}{1-\sigma}\!\left(\frac{W^* N_{g,c,t+k}^d(W^*) + Z_{t+k}}{P_{g,c,t+k}}\right)^{1-\sigma} - \chi\frac{(N^d_{g,c,t+k}(W^*))^{1+\varphi_n}}{1+\varphi_n}\right],
\end{equation}
where $N^d_{g,c,t+k}(W^*)=(W^*/W_{g,c,t+k})^{-\varepsilon_w}N_{g,c,t+k}$ is the demand for this worker's labor services under Dixit--Stiglitz aggregation, $Z_{t+k}$ collects non-wage income such as transfers and asset returns, and $\Theta_{g,c}\equiv 1-\theta_{g,c}$ is the probability of not resetting. Discounting is at rate $\beta\Theta_{g,c}$ because the wage choice is relevant only while the worker has not reset again.

The first-order condition with respect to $W^*$, using $\partial N^d/\partial W^*=-(\varepsilon_w/W^*)N^d$, is
\begin{multline}\tag{A.2}
    \sum_{k=0}^{\infty}(\beta\Theta_{g,c})^k\,\mathbb{E}_t\!\Bigl[N^d_{g,c,t+k}\Bigl(\frac{\varepsilon_w-1}{\varepsilon_w}\frac{W^*}{P_{g,c,t+k}} C_{g,c,t+k}^{-\sigma} \\
    - \chi(N^d_{g,c,t+k})^{\varphi_n}\Bigr)\Bigr] = 0\,.
\end{multline}
In the zero-inflation steady state, the markup condition gives
\[
W^{ss}/P^{ss}_g = \frac{\varepsilon_w}{\varepsilon_w-1}\,\chi\, (N^{ss})^{\varphi_n}(C^{ss})^{\sigma}\,.
\]

Log-linearize (A.2) around this steady state. Let $\hat x \equiv \log(x/x^{ss})$ denote the log-deviation of any variable $x$. Using $\hat N^d_{g,c,t+k}=-\varepsilon_w(\hat w^*-\hat w_{g,c,t+k})+\hat n_{g,c,t+k}$ and collecting terms proportional to $\hat w^*$, the optimality condition becomes
\begin{equation}\tag{A.3}
    w^*_{g,c,t} = (1-\beta\Theta_{g,c})\sum_{k=0}^{\infty}(\beta\Theta_{g,c})^k\,\mathbb{E}_t\!\left[p_{g,c,t+k} + \mathrm{mrpl}_{g,c,t+k}\right],
\end{equation}
which is equation~(\ref{eq:wreset}) in the main text. The coefficient $(1-\beta\Theta_{g,c})$ normalizes the geometric sum to unity at the steady state. The only departure from \citet{erceg2000} is that each worker type targets its own price index $p_{g,c,t+k}$ rather than the aggregate CPI, reflecting the assumption that workers care about their real consumption wage deflated by the prices they actually face.

\subsection{Type-specific wage Phillips curve}\label{app:twpc}

The goal is to derive the recursive form of the wage Phillips curve for type $g$. Start from the optimal reset wage (\ref{eq:wreset}):
\begin{equation}\tag{A.4}
    w^*_{g,c,t}=(1-\beta\Theta_{g,c})\bigl(p_{g,c,t}+\mathrm{mrpl}_{g,c,t}\bigr)+\beta\Theta_{g,c}\,\mathbb{E}_t[w^*_{g,c,t+1}]\,,
\end{equation}
which follows from splitting off the $k=0$ term and noting that the sum from $k=1$ onward equals $\beta\Theta_{g,c}\,\mathbb{E}_t[w^*_{g,c,t+1}]$ by the recursive structure.

The type-specific wage index satisfies $w_{g,c,t}=\theta_{g,c}w^*_{g,c,t}+(1-\theta_{g,c})w_{g,c,t-1}$, from which type-specific wage inflation is
\begin{equation}\tag{A.5}
\pi^w_{g,c,t}\equiv w_{g,c,t}-w_{g,c,t-1}=\theta_{g,c}(w^*_{g,c,t}-w_{g,c,t-1})\,.
\end{equation}
This implies $w^*_{g,c,t}=w_{g,c,t-1}+\pi^w_{g,c,t}/\theta_{g,c}$ and, shifting one period forward, $w^*_{g,c,t+1}=w_{g,c,t}+\pi^w_{g,c,t+1}/\theta_{g,c}$.

Substitute these expressions into (A.4). The left side is $w_{g,c,t-1}+\pi^w_{g,c,t}/\theta_{g,c}$. The right side is
\begin{equation}\tag{A.6}
(1-\beta\Theta_{g,c})\bigl(p_{g,c,t}+\mathrm{mrpl}_{g,c,t}\bigr)+\beta\Theta_{g,c}\,\mathbb{E}_t\!\left[w_{g,c,t}+\frac{\pi^w_{g,c,t+1}}{\theta_{g,c}}\right].
\end{equation}
Using $w_{g,c,t}=w_{g,c,t-1}+\pi^w_{g,c,t}$ on the right and cancelling $w_{g,c,t-1}$ from both sides:
\begin{equation}\tag{A.7}
\frac{\pi^w_{g,c,t}}{\theta_{g,c}}=(1-\beta\Theta_{g,c})\bigl(p_{g,c,t}+\mathrm{mrpl}_{g,c,t}-w_{g,c,t-1}\bigr)+\beta\Theta_{g,c}\left[\pi^w_{g,c,t}+\frac{\mathbb{E}_t[\pi^w_{g,c,t+1}]}{\theta_{g,c}}\right].
\end{equation}
Define the real marginal cost gap $\hat\omega_{g,c,t}\equiv\mathrm{mrpl}_{g,c,t}-w_{g,c,t}+p_{g,c,t}$ and type-specific experienced inflation $\pi^{exp}_{g,c,t}\equiv p_{g,c,t}-p_{g,c,t-1}$. Then $(p_{g,c,t}+\mathrm{mrpl}_{g,c,t}-w_{g,c,t-1})=\hat\omega_{g,c,t}+\pi^{exp}_{g,c,t}+\pi^w_{g,c,t}$.

Substituting and multiplying both sides by $\theta_{g,c}$:
\begin{equation}\tag{A.8}
\pi^w_{g,c,t}=\theta_{g,c}(1-\beta\Theta_{g,c})\bigl(\hat\omega_{g,c,t}+\pi^{exp}_{g,c,t}+\pi^w_{g,c,t}\bigr)+\beta\Theta_{g,c}\theta_{g,c}\,\pi^w_{g,c,t}+\beta\Theta_{g,c}\,\mathbb{E}_t[\pi^w_{g,c,t+1}]\,.
\end{equation}
Collecting the terms in $\pi^w_{g,c,t}$ on the left side:
\begin{equation}\tag{A.9}
\pi^w_{g,c,t}\bigl[1-\theta_{g,c}(1-\beta\Theta_{g,c})-\beta\Theta_{g,c}\theta_{g,c}\bigr]=\theta_{g,c}(1-\beta\Theta_{g,c})\bigl(\hat\omega_{g,c,t}+\pi^{exp}_{g,c,t}\bigr)+\beta\Theta_{g,c}\,\mathbb{E}_t[\pi^w_{g,c,t+1}]\,.
\end{equation}
The bracketed coefficient on the left simplifies as $1-\theta_{g,c}(1-\beta\Theta_{g,c})-\beta\Theta_{g,c}\theta_{g,c}=1-\theta_{g,c}+\beta\Theta_{g,c}\theta_{g,c}-\beta\Theta_{g,c}\theta_{g,c}=\Theta_{g,c}$.

Dividing both sides by $\Theta_{g,c}$ and defining $\kappa_{g,c}\equiv\theta_{g,c}(1-\beta\Theta_{g,c})/\Theta_{g,c}$:
\begin{equation}\tag{A.10}
    \pi^w_{g,c,t} = \beta\,\mathbb{E}_t[\pi^w_{g,c,t+1}]+\kappa_{g,c}\,\hat\omega_{g,c,t}+\theta_{g,c}\,\pi^{exp}_{g,c,t}\,,
\end{equation}
which is equation~(\ref{eq:twpc}). This is a standard Calvo--Erceg wage Phillips curve, with the sole modification that the experienced-inflation term uses the type-specific price index $p_{g,c,t}$ rather than the aggregate CPI.

\subsection{Aggregation to the heterogeneous wage Phillips curve (Proof of Proposition~\ref{prop:hwpc})}\label{app:aggregation}

\begin{proof}
Define aggregate wage inflation as $\pi^w_{c,t}\equiv\sum_g\eta_{g,c}\pi^w_{g,c,t}$. Multiply (A.10) by $\eta_{g,c}$ and sum over all types $g$:
\begin{equation}\tag{A.11}
    \pi^w_{c,t} = \beta\,\mathbb{E}_t[\pi^w_{c,t+1}] + \sum_g\eta_{g,c}\kappa_{g,c}\hat\omega_{g,c,t} + \sum_g\eta_{g,c}\theta_{g,c}\pi^{exp}_{g,c,t}\,.
\end{equation}

\textbf{Step 1.} Under competitive labor markets within each country, the marginal revenue product of labor satisfies $\mathrm{mrpl}_{g,c,t}=\mathrm{mrpl}_{c,t}+\varepsilon_w^{-1}(w_{g,c,t}-w_{c,t})$, where the second term captures the relative wage of type $g$ within the Dixit--Stiglitz composite. To first order around a symmetric steady state (where $w_{g,c}^{ss}=w_c^{ss}$ for all $g$), the relative-wage term vanishes, so $\hat\omega_{g,c,t}\approx\hat\omega_{c,t}$ for all $g$, and $\sum_g\eta_{g,c}\kappa_{g,c}\hat\omega_{g,c,t}\approx\tilde\kappa_c\hat\omega_{c,t}$, where $\tilde\kappa_c\equiv\sum_g\eta_{g,c}\kappa_{g,c}$. In general equilibrium, $\hat\omega_{c,t}$ is proportional to the output gap: $\hat\omega_{c,t}=(\sigma+\varphi_n)x_{c,t}/\tilde\kappa_c$, giving $\tilde\kappa_c x_{c,t}$ as the aggregate marginal cost term.

\textbf{Step 2.} Factor out $\bar\theta_c\equiv\sum_g\eta_{g,c}\theta_{g,c}$:
\begin{equation}\tag{A.12}
    \sum_g\eta_{g,c}\theta_{g,c}\pi^{exp}_{g,c,t}=\bar\theta_c\sum_g\frac{\eta_{g,c}\theta_{g,c}}{\bar\theta_c}\pi^{exp}_{g,c,t}=\bar\theta_c\sum_g\omega^{reset}_{g,c}\pi^{exp}_{g,c,t}=\bar\theta_c\,\mathrm{RWEI}_{c,t}\,.
\end{equation}
Decompose RWEI into its mean and deviation:
\begin{equation}\tag{A.13}
    \mathrm{RWEI}_{c,t}=\sum_g\omega^{reset}_{g,c}\pi^{exp}_{g,c,t}=\underbrace{\sum_g\eta_{g,c}\pi^{exp}_{g,c,t}}_{\bar\pi_{c,t}}+\underbrace{\sum_g(\omega^{reset}_{g,c}-\eta_{g,c})\pi^{exp}_{g,c,t}}_{\Omega_{c,t}}\,.
\end{equation}
The second equality uses $\sum_g\omega^{reset}_{g,c}=\sum_g\eta_{g,c}=1$. The wedge can be written as a normalized covariance:
\begin{equation}\tag{A.14}
    \Omega_{c,t}=\frac{1}{\bar\theta_c}\sum_g\eta_{g,c}(\theta_{g,c}-\bar\theta_c)\pi^{exp}_{g,c,t}=\frac{1}{\bar\theta_c}\mathrm{Cov}_{\eta_c}(\theta_{g,c},\,\pi^{exp}_{g,c,t})\,.
\end{equation}

\textbf{Step 3.} Substituting Steps 1 and 2 into (A.11):
\begin{equation}\tag{A.15}
    \pi^w_{c,t} = \beta\,\mathbb{E}_t[\pi^w_{c,t+1}] + \tilde\kappa_c\,x_{c,t} + \bar\theta_c\,\bar\pi_{c,t} + \bar\theta_c\,\Omega_{c,t}\,,
\end{equation}
which is equation~(\ref{eq:hwpc}).

\textbf{Step 4.} From (A.14), $\Omega_{c,t}=0$ if and only if $\mathrm{Cov}_{\eta_c}(\theta_{g,c},\pi^{exp}_{g,c,t})=0$. Three sufficient conditions: (i)~$\pi^{exp}_{g,c,t}$ is constant across $g$, which holds when baskets are identical so that all types experience the same inflation regardless of relative price changes; (ii)~$\theta_{g,c}$ is constant across $g$, since the covariance of a constant with any variable is zero; (iii)~$\Delta p_{i,c,t}$ is the same for all items $i$, a uniform price change under which all types experience the same inflation regardless of basket composition. Each condition is also necessary: for generic shock compositions, the covariance is nonzero unless at least one condition holds.

\textbf{Step 5.} Under Assumptions~\ref{ass:baskets}--\ref{ass:reset} with $\Delta p_{e,c,t}>0$: $\pi^{exp}_{H,c,t}>\pi^{exp}_{L,c,t}$ because type $H$ consumes more essentials, and $\theta_{H,c}>\theta_{L,c}$ because type $H$ resets more frequently, so the covariance in (A.14) is positive and $\Omega_{c,t}>0$.\qed
\end{proof}

\subsection{Closed-form wedge}

With two types $g\in\{H,L\}$, the reset-weight deviation is $\omega^{reset}_{H,c}-\eta_{H,c}=\eta_{H,c}\eta_{L,c}(\theta_{H,c}-\theta_{L,c})/\bar\theta_c$, and similarly for type $L$ with opposite sign. Substituting into (A.13):
\begin{equation}\tag{A.16}
\Omega_{c,t}=\frac{\eta_{H,c}\eta_{L,c}(\theta_{H,c}-\theta_{L,c})}{\bar\theta_c}\bigl(\pi^{exp}_{H,c,t}-\pi^{exp}_{L,c,t}\bigr)\,.
\end{equation}
The experienced-inflation gap is $\pi^{exp}_{H,c,t}-\pi^{exp}_{L,c,t}=\sum_i\lambda_i(\alpha_{H,c,i}-\alpha_{L,c,i})\Delta p_{i,c,t}$. Since expenditure shares sum to one ($\sum_i\alpha_{g,c,i}=1$ for each $g$), the differences $\alpha_{H,c,i}-\alpha_{L,c,i}$ sum to zero. The sum therefore isolates the relative price component: items for which $\alpha_{H,c,i}>\alpha_{L,c,i}$, namely essentials, receive positive weight, and items for which $\alpha_{H,c,i}<\alpha_{L,c,i}$, namely services and durables, receive negative weight. This yields equation~(\ref{eq:wedge_cf}).

\subsection{Expectation-feedback amplifier}\label{app:amplifier}

With experienced-inflation-dependent expectations $\mathbb{E}^g_{c,t}[\pi_{c,t+1}]=\bar\pi_{c,t}+\varphi_{g,c}\tilde\pi^{exp}_{g,c,t}$, the reset wage (A.3) includes a forward-looking term that depends on type-specific experienced inflation. Specifically, when the reset wage at date $t$ incorporates expected future inflation, the $(1-\beta\Theta_{g,c})p_{g,c,t+k}$ term in (A.3) is replaced by $(1-\beta\Theta_{g,c})(p_{g,c,t+k}+\varphi_{g,c}\tilde\pi^{exp}_{g,c,t+k})$. Repeating the quasi-differencing procedure of Section~\ref{app:twpc} with this augmented reset wage, the type-specific experienced-inflation coefficient $\theta_{g,c}$ in (A.10) is replaced by $\theta_{g,c}(1+\varphi_{g,c})$. Under homogeneous sensitivity $\varphi_{g,c}=\bar\varphi_c$ for all $g$, the aggregated wedge becomes $(1+\bar\varphi_c)\Omega_{c,t}$: the same cross-sectional object, scaled by the expectation sensitivity. Heterogeneous $\varphi_{g,c}$ would further amplify the wedge if high-$\theta$ workers also have high $\varphi$, but this additional interaction is not required for the mechanism.

\subsection{Firm problem details}\label{app:firms}

The representative firm in sector $j\in\{s,d\}$ of country $c$ produces using a CES technology with labor composite $L_{j,c,t}=[\sum_g\eta_{g,j,c}N_{g,j,c,t}^{(\varepsilon_w-1)/\varepsilon_w}]^{\varepsilon_w/(\varepsilon_w-1)}$ and intermediate inputs $M_{j,c,t}$ from the other domestic sector and imported essentials. Cost minimization over the two worker types, taking wages $\{W_{g,c,t}\}_g$ as given, yields the conditional labor demand $N^d_{g,j,c,t}=\eta_{g,j,c}(W_{g,c,t}/W_{j,c,t})^{-\varepsilon_w}L_{j,c,t}$, where $W_{j,c,t}=[\sum_g\eta_{g,j,c}W_{g,c,t}^{1-\varepsilon_w}]^{1/(1-\varepsilon_w)}$ is the sectoral wage index. The marginal revenue product of type-$g$ labor in sector $j$ is
\begin{equation}\tag{A.17}
\mathrm{MRPL}_{g,j,c,t}=\frac{\varepsilon_w-1}{\varepsilon_w}\cdot\frac{\partial(P_{j,c,t}Y_{j,c,t})}{\partial N_{g,j,c,t}}=\frac{\varepsilon_w-1}{\varepsilon_w}\cdot\alpha_{j,c}\cdot\frac{P_{j,c,t}Y_{j,c,t}}{L_{j,c,t}}\cdot\left(\frac{W_{g,c,t}}{W_{j,c,t}}\right)^{-\varepsilon_w},
\end{equation}
which in logs gives equation~(\ref{eq:mrpl}). At the symmetric steady state, $W_{g,c}^{ss}=W_c^{ss}$ for all $g$, so the last term vanishes and the marginal revenue product is the same for all types.

\subsection{IS curve derivation}\label{app:IS}

The Euler equation for the liquid asset held by a type-$g$ household in country $c$ is $u'(C_{g,c,t})=(1+i_t)\beta\,\mathbb{E}_t[u'(C_{g,c,t+1})P_{g,c,t}/P_{g,c,t+1}]$, where $u'(C)=C^{-\sigma}$ under CRRA preferences. Log-linearizing: $-\sigma\hat c_{g,c,t}=-\sigma\,\mathbb{E}_t[\hat c_{g,c,t+1}]+i_t-\mathbb{E}_t[\pi_{g,c,t+1}]-r^n_{g,c,t}$. Aggregate across types using steady-state consumption shares as weights to obtain $\hat c_{c,t}=\mathbb{E}_t[\hat c_{c,t+1}]-\sigma_c^{-1}(i_t-\mathbb{E}_t[\pi_{c,t+1}]-r^n_{c,t})$, where $\sigma_c$ is the aggregate intertemporal elasticity of substitution, which differs from $\sigma$ in the HANK model due to the wealth distribution since constrained households have higher effective EIS. Using goods market clearing $\hat c_{c,t}\approx x_{c,t}$ gives equation~(\ref{eq:IS}).

\subsection{Proof of Proposition~\ref{prop:disp} (cross-country dispersion)}\label{app:disp}

\begin{proof}
From equation~(\ref{eq:hwpc}) and the sectoral price system, cumulative core inflation in country $c$ can be written as
\begin{equation}\tag{A.18}
\Pi^{core}_c=\sum_{h=0}^{\infty}\beta^h\pi^{core}_{c,h}=R_c\,u+S_c\,\bar\theta_c\,\Omega_c\,u+O(\|u\|^2)\,,
\end{equation}
where $R_c$ is the standard-model pass-through coefficient (determined by $\tilde\kappa_c$, price stickiness, trade openness, and the IS slope) and $S_c>0$ is a structural amplification factor that captures how the wedge propagates through the I-O network (formally, $S_c$ depends on the Leontief inverse and sectoral rigidity parameters; see Section~\ref{app:suff}).

Taking the GDP-weighted cross-country variance of (A.18):
\begin{align}
\mathrm{Var}_c(\Pi^{core}_c)&=\mathrm{Var}_c\bigl(R_c\,u+S_c\,\bar\theta_c\,\Omega_c\,u\bigr) \tag{A.19}\\
&=\mathrm{Var}_c(R_c)\,u^2+\mathrm{Var}_c(S_c\,\bar\theta_c\,\Omega_c)\,u^2+2\,\mathrm{Cov}_c(R_c,\,S_c\,\bar\theta_c\,\Omega_c)\,u^2\,. \notag
\end{align}
Setting $\Omega_c=0$ for all $c$, the standard-model restriction, eliminates the second and third terms. The remaining terms, $\mathrm{Var}_c(S_c\bar\theta_c\Omega_c)\,u^2+2\,\mathrm{Cov}_c(R_c,S_c\bar\theta_c\Omega_c)\,u^2$, are positive whenever $\mathrm{Var}_c(S_c\bar\theta_c\Omega_c)+2\,\mathrm{Cov}_c(R_c,S_c\bar\theta_c\Omega_c)>0$. A sufficient condition is $\mathrm{Cov}_c(R_c,S_c\bar\theta_c\Omega_c)\geq 0$, which holds when countries with higher standard pass-through also have higher wedges.\qed
\end{proof}

\subsection{Proof of Corollary~\ref{cor:agg} (aggregation failure)}\label{app:corollary}

\begin{proof}
Suppose a representative-agent economy with parameters $\bar\theta$ and $\bar\alpha$ generates aggregate wage dynamics $\pi^{w,RA}_{c,t}=\beta\,\mathbb{E}_t[\pi^{w,RA}_{c,t+1}]+\tilde\kappa_c\,x_{c,t}+\bar\theta\,\pi_{c,t}$. The heterogeneous-reset economy generates $\pi^w_{c,t}=\beta\,\mathbb{E}_t[\pi^w_{c,t+1}]+\tilde\kappa_c\,x_{c,t}+\bar\theta_c\,\bar\pi_{c,t}+\bar\theta_c\,\Omega_{c,t}$.

For these to coincide for all shock compositions $u$, I need $\bar\theta_c\,\Omega_{c,t}(u)=0$ for all $u$. From (A.14), $\Omega_{c,t}(u)=(1/\bar\theta_c)\sum_g\eta_{g,c}(\theta_{g,c}-\bar\theta_c)\sum_i\lambda_i\alpha_{g,c,i}\Delta p_{i,c,t}(u)$. This must vanish for every possible relative-price vector $\Delta p_{i,c,t}(u)$, which requires $\sum_g\eta_{g,c}(\theta_{g,c}-\bar\theta_c)\alpha_{g,c,i}=0$ for each item $i$ separately. With two types, this implies either $\theta_{H,c}=\theta_{L,c}$ eliminating reset heterogeneity or $\alpha_{H,c,i}=\alpha_{L,c,i}$ for all $i$ imposing identical baskets. Both are knife-edge.

For the sign: when the shock is concentrated on essentials ($\Delta p_{e,c,t}>0$, all other $\Delta p_{i,c,t}=0$), $\Omega_{c,t}>0$ under Assumptions~\ref{ass:baskets}--\ref{ass:reset}. When the shock is concentrated on non-essentials, $\Omega_{c,t}<0$. The best-fitting representative agent (which sets $\Omega=0$) therefore underestimates persistence for necessity shocks and overestimates it for non-necessity shocks.\qed
\end{proof}

\subsection{Proof of Proposition~\ref{prop:suff} (sufficient statistic)}\label{app:suff}

\begin{proof}
\textbf{Step 1.} The Calvo pricing equation for sector $j\in\{s,d\}$ implies that the cumulative price change in sector $j$ satisfies
\begin{equation}\tag{A.20}
\sum_{h=0}^\infty\beta^h\pi_{j,c,h}=\frac{\kappa^p_{j,c}}{1-\beta\rho_j}\bigl[\alpha_{j,c}\sum_{h}\beta^h w_{j,c,h}+(1-\alpha_{j,c})\xi_{j,e}\sum_h\beta^h u_h-\sum_h\beta^h p_{j,c,h}\bigr]\,,
\end{equation}
where $\alpha_{j,c}$ is the labor share, $\xi_{j,e}$ is the essentials input share, and $\rho_j$ is the persistence of sectoral marginal cost. Using the input-output structure, the system of two sectoral price equations can be inverted to give:
\begin{equation}\tag{A.21}
\sum_h\beta^h p_{j,c,h}=\sum_{j'}\bigl(\mathbf{I}-\boldsymbol\Xi_c\bigr)^{-1}_{jj'}\bigl[\alpha_{j',c}\sum_h\beta^h w_{j',c,h}+\xi^{ext}_{j'}\sum_h\beta^h u_h\bigr]\,,
\end{equation}
where $(\mathbf{I}-\boldsymbol\Xi_c)^{-1}$ is the Leontief inverse of the domestic I-O matrix and $\xi^{ext}_{j'}$ captures the direct cost-push from imported essentials.

\textbf{Step 2.} Cumulative core inflation is $\Pi^{core}_c=\sum_j\omega_{j,c}\sum_h\beta^h\pi_{j,c,h}$. Substituting (A.21) and decomposing the wage term into its standard and wedge components:
\begin{align}
\Pi^{core}_c-\Pi^{core,RA}_c&=\sum_{j}\omega_{j,c}\sum_{j'}L_{jj'}\alpha_{j',c}\cdot\frac{\bar\theta_c\,\Omega_{c,0}}{1-\beta\rho_\Omega} \tag{A.22}\\
&=\underbrace{\frac{\bar\theta_c}{1-\beta\rho_\Omega}\!\sum_j\omega_{j,c}a_{j,c}}_{\displaystyle\psi}\cdot\Omega_{c,0}\,, \notag
\end{align}
where $L_{jj'}\equiv[(\mathbf{I}-\boldsymbol\Xi_c)^{-1}]_{jj'}$ denotes the $(j,j')$ element of the Leontief inverse, $a_{j,c}\equiv\sum_{j'}L_{jj'}\alpha_{j',c}$ is the total, both direct and indirect, labor cost share in sector $j$'s price, and $\rho_\Omega\in(0,1)$ is the persistence of the wedge.

\textbf{Step 3.} The wedge is
$\Omega_{c,0}=\sum_g(\omega^{reset}_{g,c}-\eta_{g,c})\tilde\pi^{exp}_{g,c}(u)$.
In Step~2, the Leontief multiplication means the effect on core inflation depends on WHICH sectors the resetting workers are in. Sector $j'$'s contribution to the gap is weighted by $a_{j',c}$, the total labor cost share in sector $j'$'s price. Since type $g$ is concentrated in sector $s(g)$, the gap becomes
$\psi\,\Omega_{c,0}\to\tilde\psi\sum_g(\omega^{reset}_{g,c}-\eta_{g,c})\nu_{s(g),c}\tilde\pi^{exp}_{g,c}(u)=\tilde\psi\,\mathrm{MWSI}_c(u)$,
where the propagation weights $\nu_{s(g),c}\propto a_{s(g),c}$ absorb the Leontief terms. Defining $\psi\equiv\tilde\psi$ gives equation~(\ref{eq:suff}).

The coefficient $\psi>0$ is common across countries to first order because the Leontief inverse varies only through the I-O coefficients $\Xi_c$, which enter $\psi$ multiplicatively. Cross-country variation in $\psi$ is absorbed into $\mathrm{MWSI}_c$ through the propagation weights $\nu_{s(g),c}$.\qed
\end{proof}

\subsection{Proof of Proposition~\ref{prop:policy} (optimal policy mix)}\label{app:policy}

\begin{proof}
\textbf{Part (i).} The policymaker minimizes $\mathcal{L}_c=\sum_h\delta^h[(\pi^{core}_{c,h})^2+\lambda\,x_{c,h}^2]$ over $\phi_\pi$. The IS curve gives $x_{c,t}=-\sigma_c^{-1}(\phi_\pi\,\pi^{core}_{c,t}-r^n_{c,t})$. The first-order condition $\partial\mathcal{L}_c/\partial\phi_\pi=0$ yields, after substituting the IS curve:
\begin{equation}\tag{A.23}
\phi^*_\pi=\frac{\lambda}{\sigma_c}\cdot\frac{\sum_h\delta^h x_{c,h}^2}{\sum_h\delta^h x_{c,h}\,\pi^{core}_{c,h}}=\frac{\lambda}{\sigma_c}\cdot\frac{\mathrm{Var}(x_c)}{\mathrm{Cov}(x_c,\pi^{core}_c)}\,.
\end{equation}
From the Phillips curve, $\pi^{core}_{c,t}$ is the sum of a gap-driven component proportional to $x_{c,t}$ and the wedge component ($\propto\Omega_{c,t}$). Higher $\mathrm{RWEI}_c$ raises the wedge component of $\pi^{core}_{c,t}$ without changing $x_{c,t}$ since the wedge does not operate through the output gap. This raises $\mathrm{Var}(\pi^{core}_c)$ and $\mathrm{Cov}(x_c,\pi^{core}_c)$, but the denominator rises faster because the wedge-driven inflation is positively correlated with the gap-driven component. The net effect is $\partial\phi^*_\pi/\partial\mathrm{RWEI}_c<0$: when a larger share of inflation comes from the wedge, the optimal monetary response is less aggressive because the output cost of closing the wedge through demand destruction is high.

\textbf{Part (ii).} A transfer $\tau_{H,c,t}$ to high-RWEI workers has two effects. First, it reduces the real-wage gap that drives the wedge: the wedge falls by $\omega^{reset}_{H,c}\cdot\tau_{H,c,t}/C_{H,c}$ per period, and this effect persists for $H^*$ quarters, the horizon over which the wedge is active. The cumulative disinflationary benefit is $B=\psi\,\omega^{reset}_{H,c}\sum_{h=0}^{H^*}\beta^h\cdot\tau_{H,c,t}/C_{H,c}$. Second, it raises aggregate demand by $\mathrm{MPC}_{H,c}\cdot\eta_{H,c}\cdot\tau_{H,c,t}/\bar C_c$ in the impact period, raising core inflation by $D=\tilde\kappa_c\cdot\mathrm{MPC}_{H,c}\cdot\eta_{H,c}\cdot\tau_{H,c,t}/\bar C_c$. The transfer is net disinflationary at the medium horizon when $B>D$, which gives condition~(\ref{eq:disfl}).

\textbf{Part (iii).} At $\mathrm{RWEI}_c=0$, the benefit $B=0$ because there is no wedge to reduce while the cost $D>0$, so the transfer is inflationary. As $\mathrm{RWEI}_c$ increases, $B$ increases monotonically because the wedge is larger and the same transfer reduces more inflation. The threshold $\mathrm{RWEI}^*$ solves $B(\mathrm{RWEI}^*)=D$.

\textbf{Part (iv).} From the Phillips curve, setting $x_{c,t}=0$:
$\pi^w_{c,t}|_{x=0}=\beta\,\mathbb{E}_t[\pi^w_{c,t+1}|_{x=0}]+\bar\theta_c\,\bar\pi_{c,t}+\bar\theta_c\,\Omega_{c,t}$.
The last two terms are nonzero whenever the cost-push shock is active ($u_t\neq 0$) and the wedge is positive ($\Omega_{c,t}>0$). Since the interest rate affects the economy only through $x_{c,t}$ via the IS curve, no choice of $\phi_\pi$ can set $\Omega_{c,t}=0$. Therefore $\inf_{\phi_\pi}\mathrm{Var}(\pi^{core}_c)\geq\mathrm{Var}(\bar\theta_c\Omega_{c,t})>0$.\qed
\end{proof}

\subsection{Derivation of monetary union results}\label{app:union}

\textbf{Part (i).} With $N$ countries, achieving $\pi^{core}_{c,t}=0$ for each $c$ requires $\tilde\kappa_c\,x_{c,t}=-\bar\theta_c\,\bar\pi_{c,t}-\bar\theta_c\,\Omega_{c,t}$, which is a different output gap for each country. The common rate $i_t$ determines $x_{c,t}$ up to the country-specific natural rate: $x_{c,t}=-\sigma_c^{-1}(i_t-\mathbb{E}_t[\pi_{c,t+1}]-r^n_{c,t})$. For generic $\{\Omega_{c,t}\}_c$ with $\Omega_{c,t}\neq\Omega_{c',t}$ for some $c\neq c'$, no single $i_t$ satisfies all $N$ conditions simultaneously.

\textbf{Part (ii).} The country-specific optimal rate from Proposition~\ref{prop:policy}(i) is $i^*_{c,t}=r^*+\phi^*_\pi(\mathrm{RWEI}_c)\pi^{core}_{c,t}+\cdots$, which depends on $\mathrm{RWEI}_c$. The union-optimal rate is $i^*_t=\sum_c w_c\,i^*_{c,t}$. The welfare cost of common policy is proportional to the variance of the country-specific optimal rates: $\mathcal{L}_{union}-\sum_c w_c\mathcal{L}_c\propto\sum_c w_c(i^*_t-i^*_{c,t})^2=\mathrm{Var}_c(i^*_{c,t})$. Decomposing $i^*_{c,t}$ into the component driven by $R_c$, the standard pass-through, and the component driven by $\bar\theta_c\Omega_c$, the wedge: $\mathrm{Var}_c(i^*_{c,t})=\Gamma_1\mathrm{Var}_c(R_c)\,u^2+\Gamma_2\mathrm{Var}_c(\bar\theta_c\Omega_c)\,u^2$, where $\Gamma_1,\Gamma_2>0$ collect structural coefficients.

\textbf{Part (iii).} Since $\phi^*_\pi$ is decreasing in $\mathrm{RWEI}_c$ (Proposition~\ref{prop:policy}(i)), countries with $\mathrm{RWEI}_c>\overline{\mathrm{RWEI}}$ prefer a lower rate than the union average, so $i^*_{c,t}<i^*_t$, meaning the common rate is too high for them. Conversely, $\mathrm{RWEI}_c<\overline{\mathrm{RWEI}}$ implies under-tightening.

\textbf{Part (iv).} The transfer $\tau^*_{H,c,t}=\chi(\mathrm{RWEI}_c-\overline{\mathrm{RWEI}})C_{H,c}$ reduces the effective wedge in high-RWEI countries and raises it in low-RWEI countries. Choosing $\chi$ so that the effective wedge $\Omega^{eff}_{c,t}\equiv\Omega_{c,t}-\omega^{reset}_{H,c}\chi(\mathrm{RWEI}_c-\overline{\mathrm{RWEI}})$ is equalized across countries sets $\mathrm{Var}_c(\Omega^{eff}_c)=0$, eliminating the $\Gamma_2$ term from the welfare cost.

\subsection{Proof of Proposition~\ref{prop:twoinstopt}}\label{app:twoinstopt}

\begin{proof}
\textbf{Part (i).} Restated from Proposition~\ref{prop:policy}(iv): the interest rate operates through $x_{c,t}$ via the IS curve, and setting $x_{c,t}=0$ leaves $\bar\theta_c\Omega_{c,t}\neq 0$ in the Phillips curve.

\textbf{Part (ii).} An essentials subsidy $\tau_{c,t}$ reduces the effective essentials price faced by consumers from $p_{e,t}$ to $p_{e,t}-\tau_{c,t}$. Experienced inflation for type $g$ becomes $\tilde\pi^{exp}_{g,c,t}(\tau)=\lambda_e\alpha_{g,c,e}(\Delta p_{e,t}-\tau_{c,t})+\alpha_{g,c,d}\Delta p_{d,c,t}+\alpha_{g,c,s}\Delta p_{s,c,t}$. The wedge under the subsidy is
\begin{equation}\tag{A.24}
\Omega_{c,t}(\tau)=\Omega_{c,t}(0)-\tau_{c,t}\cdot\lambda_e\sum_g(\omega^{reset}_{g,c}-\eta_{g,c})\alpha_{g,c,e}=\Omega_{c,t}(0)-\tau_{c,t}\cdot\lambda_e\cdot\omega^{reset}_{H,c}(\alpha_{H,c,e}-\alpha_{L,c,e})\,,
\end{equation}
where the second equality uses the two-type structure. Setting $\Omega_{c,t}(\tau^*)=0$ and solving for $\tau^*$:
\begin{equation}\tag{A.25}
\tau^*_{c,t}=\frac{\Omega_{c,t}(0)}{\omega^{reset}_{H,c}\,(\alpha_{H,c,e}-\alpha_{L,c,e})\,\lambda_e}\,.
\end{equation}
With $\Omega_{c,t}=0$ and $\tau^*$ financed by lump-sum taxation, the Phillips curve reduces to $\pi^w_{c,t}=\beta\,\mathbb{E}_t[\pi^w_{c,t+1}]+\tilde\kappa_c\,x_{c,t}+\bar\theta_c\,\bar\pi_{c,t}(\tau^*)$, which has the standard form. The interest rate can then achieve $\pi^{core}_{c,t}=x_{c,t}=0$ by the divine coincidence of the standard model.

\textbf{Part (iii).} Without the subsidy, the minimum attainable variance of core inflation is $\mathrm{Var}(\pi^{core}_c)\geq\mathrm{Var}(\bar\theta_c\Omega_{c,t})>0$ from Proposition~\ref{prop:policy}(iv). With the subsidy, $\Omega=0$ and $\mathrm{Var}(\pi^{core}_c)=0$ is attainable. The welfare gain equals the loss from the wedge, which is proportional to $\mathrm{MWSI}_c^2$ from equation~(\ref{eq:welfare_wedge}).\qed
\end{proof}

\end{sloppypar}

\section{Data Sources and Variable Construction}\label{app:data}

\subsection{Inflation data}

Core HICP is from Eurostat, series \texttt{prc\_hicp\_midx}, defined as the overall harmonized index excluding energy and unprocessed food.\footnote{COICOP special aggregate ``All-items HICP excluding energy and unprocessed food.''} I use monthly indices for the six countries from 2015:1 through 2024:12, converted to quarterly by averaging over the three months in each quarter. Cumulative core inflation overshoot is $\Pi^{core}_c\equiv\sum_{t=2021:3}^{2024:4}(\pi^{core}_{c,t}-2\%)$, where $\pi^{core}_{c,t}$ is annualized quarterly core inflation. Item-level HICP from \texttt{prc\_hicp\_midx} by COICOP division provides the inflation rates $\Delta p_{i,c,t}$ used to construct experienced inflation.

\subsection{Expenditure shares}

Household expenditure shares by income quintile are from the 2020 Eurostat Household Budget Survey, series \texttt{hbs\_exp\_t133}. I aggregate COICOP divisions into three categories.\footnote{Essentials: food and non-alcoholic beverages CP01, housing, water, electricity, gas CP04, transport fuels CP0722. Goods: clothing CP03, furnishings CP05, communications CP08, recreation CP09, miscellaneous CP12. Services: restaurants CP11, health CP06, education CP10, remaining items.} For each country and income quintile, I compute the expenditure share $\alpha_{g,c,i}$ as the ratio of quintile $g$'s spending on category $i$ to total spending. The 2020 survey predates the inflation episode, ensuring that the shares are predetermined with respect to the shock.

\subsection{Wage-setting institutions}

Reset probabilities $\theta_{g,c}$ are constructed from two sources. The OECD/AIAS ICTWSS database, version 6.1, provides country-level information on predominant contract duration, bargaining coverage, and coordination. The ECB indicator of negotiated wages provides quarterly wage growth by country. I approximate quintile-specific reset probabilities by combining the country-level average contract duration with the cross-sectional pattern from labor force surveys: workers in retail, hospitality, and personal services, where the bottom quintile is concentrated, typically have contracts of 1--2 years, while workers in manufacturing, public administration, and finance, where the top quintile is concentrated, have contracts of 2--4 years. The quarterly reset probability is $\theta_{g,c}=1-(1-1/D_{g,c})$ where $D_{g,c}$ is the average contract duration in quarters for quintile $g$ in country $c$.

\subsection{Sectoral structure}

Sectoral labor shares $\alpha_{j,c}$ and input-output coefficients $\xi_{j,e,c}$ are from the FIGARO inter-country input-output tables, Eurostat 2019 vintage. I aggregate the full FIGARO table into three sectors: essentials covering agriculture, mining, and utilities; goods covering manufacturing and construction; and services covering all remaining industries. The labor share for sector $j$ in country $c$ is computed as compensation of employees divided by gross output at basic prices. The intermediate input share $\xi_{j,e,c}$ is the ratio of essentials inputs used by sector $j$ to total inputs used by sector $j$.

\subsection{Other variables}

Table~\ref{tab:othervars} lists additional data sources used in the calibration.

\begin{table}[htbp]
\centering\small
\caption{Additional data sources}\label{tab:othervars}
\begin{tabular}{lp{7.5cm}}
\toprule
Variable & Source and construction \\
\midrule
Negotiated wage growth & ECB indicator of negotiated wages (quarterly, by country) \\
Employment by sector & EU Labour Force Survey (\texttt{lfsa\_egan2}), used to assign quintiles to sectors \\
Inflation expectations & ECB Consumer Expectations Survey (CES), median expected inflation by income group, used to calibrate $\bar\varphi$ \\
GDP weights & Eurostat national accounts, 2019 nominal GDP \\
\bottomrule
\end{tabular}

\medskip
\begin{flushleft}
\footnotesize Note: All variables are publicly available from the indicated sources.
\end{flushleft}
\end{table}

\section{Robustness}\label{app:robust}
\begin{sloppypar}

I verify that the main results are robust to variation in four key parameters. Table~\ref{tab:robust} reports the wedge magnitude, cumulative gap, cross-country rank correlation, and policy ranking for each specification.

\begin{table}[htbp]
\centering\small
\caption{Robustness to parameter variation}\label{tab:robust}
\begin{tabular}{lcccc}
\toprule
Specification & $\Omega$ (\%) & Cum.\ gap (pp$\cdot$q) & Spearman & Policy ranking \\
\midrule
Baseline & 0.57 & 3.1 & 0.54 & (d)$\succ$(c)$\succ$(b)$\succ$(a) \\[3pt]
\multicolumn{5}{l}{Catch-up coefficient $b'$} \\
\quad $b'=0$ & 0.57 & 0.3 & 0.54 & (d)$\succ$(c)$\succ$(b)$\succ$(a) \\
\quad $b'=0.30$ & 0.57 & 4.8 & 0.54 & (d)$\succ$(c)$\succ$(b)$\succ$(a) \\[3pt]
\multicolumn{5}{l}{Expectation sensitivity $\bar\varphi$} \\
\quad $\bar\varphi=0$ & 0.44 & 2.4 & 0.54 & (d)$\succ$(c)$\succ$(b)$\succ$(a) \\
\quad $\bar\varphi=0.50$ & 0.66 & 3.6 & 0.54 & (d)$\succ$(c)$\succ$(b)$\succ$(a) \\[3pt]
\multicolumn{5}{l}{Essentials salience $\lambda_e$} \\
\quad $\lambda_e=1.0$ & 0.44 & 2.4 & 0.43 & (d)$\succ$(c)$\succ$(b)$\succ$(a) \\
\quad $\lambda_e=1.5$ & 0.66 & 3.6 & 0.66 & (d)$\succ$(c)$\succ$(b)$\succ$(a) \\[3pt]
\multicolumn{5}{l}{Number of worker types $G$} \\
\quad $G=2$ & 0.30 & 1.6 & 0.54 & (d)$\succ$(c)$\succ$(b)$\succ$(a) \\
\quad $G=3$ & 0.42 & 2.3 & 0.54 & (d)$\succ$(c)$\succ$(b)$\succ$(a) \\
\bottomrule
\end{tabular}

\medskip
\begin{flushleft}
\footnotesize Note: Each row varies one parameter from the baseline ($b'=0.15$, $\bar\varphi=0.35$, $\lambda_e=1.3$, $G=5$) while holding others fixed. ``Cum.\ gap'' is the euro-area-average difference in cumulative core inflation between the heterogeneous-reset and standard models. Policy regimes: (a) aggressive tightening, (b) moderate + uniform transfer, (c) moderate + targeted transfer, (d) moderate + essentials subsidy.
\end{flushleft}
\end{table}

Three patterns emerge. First, the wedge $\Omega$ is present in all specifications, confirming that it is a structural feature of heterogeneous aggregation, not an artifact of particular parameter values. Second, the cumulative gap is most sensitive to the catch-up coefficient $b'$: setting $b'=0$ with no nonlinear catch-up reduces the gap to 0.3 pp$\cdot$q, while $b'=0.30$ raises it to 4.8. This confirms that the nonlinear catch-up channel is the primary source of quantitative amplification. Third, the policy ranking is completely invariant: targeted transfers dominate aggressive tightening in every specification, because the ranking depends on the structure of the mechanism because the wedge is cross-sectional rather than on the magnitudes.
\end{sloppypar}

\section{Additional Quantitative Results}\label{app:additional}

\subsection{Shock composition}\label{app:shockcomp}

Corollary~\ref{cor:agg} predicts that the wedge is positive for necessity shocks, zero for uniform shocks, and negative for shocks concentrated on non-essentials. Table~\ref{tab:shockcomp} tests this prediction by computing the model under three shock compositions, each calibrated to produce the same aggregate CPI impact on the representative agent. The ``energy'' shock raises essentials prices only matching the baseline. The ``food'' shock raises food prices at 75 percent of the baseline peak. The ``uniform'' shock raises all prices proportionally, so that every quintile experiences the same inflation regardless of basket composition.

\begin{table}[htbp]
\centering\small
\caption{Wedge magnitude by shock type, euro-area average}\label{tab:shockcomp}
\begin{tabular}{lcc}
\toprule
& $\Omega$ (\%) & Cumulative gap (pp$\cdot$q) \\
\midrule
Energy (baseline) & 0.57 & 3.1 \\
Food (75\% of baseline) & 0.43 & 1.8 \\
Uniform (all items) & 0.00 & 0.0 \\
\bottomrule
\end{tabular}

\medskip
\begin{flushleft}
\footnotesize Note: Each shock is calibrated to produce the same aggregate CPI impact on the representative agent. ``Energy'' raises essentials prices only (the baseline). ``Food'' raises food prices at 75 percent of the baseline peak. ``Uniform'' raises all prices proportionally so that every quintile experiences the same inflation. $\Omega$ is the peak reset-heterogeneity wedge. Cumulative gap is euro-area average.
\end{flushleft}
\end{table}

The sign prediction is confirmed exactly: the wedge is strictly positive for necessity shocks and identically zero for the uniform shock. The zero arises because uniform price changes make $\pi^{exp}_{g,c,t}$ identical across quintiles, so the covariance in equation~(\ref{eq:wedge_cf}) vanishes. This is a qualitative prediction that no representative-agent model can generate.

\subsection{Channel decomposition}\label{app:decomposition}

To isolate the contribution of each ingredient, I shut down one source of heterogeneity at a time in the linearized model and rescale by the nonlinear amplification factor. Table~\ref{tab:decomp} reports the results. The four configurations are: (i)~equalize baskets: set $\alpha_{g,c,e}=\bar\alpha_{c,e}$ for all $g$, eliminating the experienced-inflation gap; (ii)~equalize propagation: set $f^s_g=\bar f^s$ for all $g$, so that all quintiles are equally distributed across sectors; (iii)~equalize reset: set $\theta_{g,c}=\bar\theta_c$ for all $g$, giving all quintiles the same contract duration; (iv)~standard model: all three equalizations simultaneously.

\begin{table}[htbp]
\centering\small
\caption{Channel decomposition, euro-area average}\label{tab:decomp}
\begin{tabular}{lcc}
\toprule
& Cumulative core (pp$\cdot$q) & $\Delta$ from full \\
\midrule
Full heterogeneous model & 45.6 & (ref.) \\
Equalize baskets ($\alpha_{g,e}=\bar\alpha_e$) & 44.8 & $-$0.8 \\
Equalize propagation ($f^s_g=\bar f^s$) & 44.9 & $-$0.7 \\
Equalize reset ($\theta_g=\bar\theta$) & 45.3 & $-$0.3 \\
Standard model (all equal) & 42.5 & $-$3.1 \\
\bottomrule
\end{tabular}

\medskip
\begin{flushleft}
\footnotesize Note: Shut-down exercises use the linearized model with proportional scaling to match the nonlinear baseline. ``Full'' and ``Standard'' rows match the nonlinear model (Table~\ref{tab:cost}).
\end{flushleft}
\end{table}

The full gap of 3.1 pp$\cdot$q exceeds the sum of the individual channel contributions, $0.8 + 0.7 + 0.3 = 1.8$, because the channels interact multiplicatively through the wedge formula~(\ref{eq:wedge_cf}): the wedge is the product of the expenditure gap, the reset-frequency gap, and the relative price change.

\subsection{Cost of common monetary policy}\label{app:oca}

In a monetary union with heterogeneous RWEI, the common interest rate cannot separately close each country's wedge. Table~\ref{tab:rategaps} reports the direction of the resulting monetary policy bias.

\begin{table}[htbp]
\centering\small
\caption{Direction of monetary policy bias by country}\label{tab:rategaps}
\begin{tabular}{lccl}
\toprule
Country & $\Omega_c$ (\%) & RWEI rank & Common-rate bias \\
\midrule
Italy & 0.62 & 1 (highest) & Over-tightened \\
Spain & 0.60 & 2 & Over-tightened \\
Belgium & 0.56 & 3 & Approximately neutral \\
Germany & 0.55 & 4 & Approximately neutral \\
France & 0.55 & 5 & Under-tightened \\
Netherlands & 0.52 & 6 (lowest) & Under-tightened \\
\bottomrule
\end{tabular}

\medskip
\begin{flushleft}
\footnotesize Note: ``Over-tightened'': the common rate exceeds the country-optimal rate. ``Under-tightened'': the common rate falls short. The direction follows from Proposition~\ref{prop:policy}(i): $\phi^*_\pi$ is decreasing in $\mathrm{RWEI}_c$, so high-RWEI countries prefer lower rates. Country-specific targeted transfers scaled by $\mathrm{RWEI}_c-\overline{\mathrm{RWEI}}$ can eliminate this cost component.
\end{flushleft}
\end{table}

\section{Extended Model Details}\label{app:model}

\subsection{Household problem: recursive formulation}

The quintile-$g$ household in country $c$ enters period $t$ with liquid assets $a$ and idiosyncratic productivity state $e$. The household solves
\begin{equation}\tag{E.1}
V_{g,c,t}(a,e) = \max_{c,a'}\left\{\frac{c^{1-\sigma}}{1-\sigma}+\beta\,\mathbb{E}_e[V_{g,c,t+1}(a',e')]\right\}
\end{equation}
subject to the budget constraint
\begin{equation}\tag{E.2}
c + a' = (1+r_{c,t})a + W_{g,c,t}\,e\,n - \mathcal{T}_{c,t} + T_{g,c,t}\,,\qquad a'\geq \underline{a}=0\,,
\end{equation}
where $r_{c,t}$ is the real return on the liquid asset, government bonds, $W_{g,c,t}$ is the quintile-specific real wage, $e$ is idiosyncratic labor productivity, $n$ is hours worked, taken as exogenous, $\mathcal{T}_{c,t}$ is a lump-sum tax, and $T_{g,c,t}$ is a targeted transfer.

Idiosyncratic productivity follows $\log e'=\rho_e\log e+\sigma_e\varepsilon'$, $\varepsilon'\sim\mathcal{N}(0,1)$, discretized on $n_e=7$ states using the Rouwenhorst method with $\rho_e=0.966$ and $\sigma_e=0.5$. The asset grid has $n_a=100$ points, concentrated near the borrowing constraint via $a_i=\bar{a}(i/(n_a-1))^2$ with $\bar{a}=200$. The solution uses the endogenous gridpoints method of Carroll (2006).

\subsection{Steady state}

The steady-state interest rate $r^{ss}=0.005$ (quarterly) is set to match the average euro-area real rate over 2015--2019. The household problem is solved by iterating the value function to convergence ($\|V^{n+1}-V^n\|_\infty<10^{-10}$). The stationary distribution $D^{ss}_{g,c}(a,e)$ is obtained by iterating the Young (2010) lottery method until $\|D^{n+1}-D^n\|_1<10^{-12}$. The calibrated steady state has $C_{ss}=1.05$, $A_{ss}=10.5$, and impact MPC $=0.042$.

\subsection{Quintile-specific consumption baskets}

Each quintile consumes a Cobb-Douglas aggregate with shares from the Eurostat HBS 2020. The quintile-specific price index is $P_{g,c,t}=P_{e,t}^{\alpha_{g,c,e}}P_{d,c,t}^{\alpha_{g,c,d}}P_{s,c,t}^{\alpha_{g,c,s}}$. Salient experienced inflation augments the essentials component by the salience parameter: $\tilde\pi^{exp}_{g,c,t}=\lambda_e\alpha_{g,c,e}\Delta p_{e,t}+\alpha_{g,c,d}\Delta p_{d,c,t}+\alpha_{g,c,s}\Delta p_{s,c,t}$.

\subsection{Firm problem and sectoral pricing}

The representative firm in sector $j\in\{s,d\}$ uses CES technology over labor $L_{j,c,t}$ and intermediates $M_{j,c,t}$ with labor share $\alpha_{j,c}$ and elasticity $\rho$. Intermediates combine domestic goods from the other sector and imported essentials with Leontief shares. Firms face Calvo price adjustment with reset probability $\theta^p_{j,c}$, yielding the sectoral NKPC: $\pi_{j,c,t}=\beta\,\mathbb{E}_t[\pi_{j,c,t+1}]+\kappa^p_{j,c}\,\mathrm{mc}_{j,c,t}$.

\subsection{Intra-union trade}

Goods are tradable within the union with CES demand $X_{c,c',t}=\omega_{c,c'}(P_{d,c',t}/P_{d,c,t})^{-\varepsilon_{trade}}Y_{d,c,t}$, where $\omega_{c,c'}$ are bilateral trade shares from Eurostat COMEXT and $\varepsilon_{trade}=1.5$. Services are non-tradable.

\subsection{Market clearing}

Asset market: $\sum_g\eta_{g,c}\int a\,dD_{g,c,t}=B_{c,t}$. Goods market by sector: $Y_{j,c,t}=C_{j,c,t}+\sum_{c'}X_{j,c,c',t}+M_{j,c,t}$, where $C_{j,c,t}=\sum_g\eta_{g,c}\alpha_{g,c,j}C_{g,c,t}/P_{j,c,t}$. Labor market: $L_{j,c,t}=\sum_g\eta_{g,c}f^j_g N_{g,c,t}$.

\section{Computational Details}\label{app:computation}

\subsection{Sequence-space Jacobian implementation}

The household Jacobians $J^{Co}_i$ mapping input $i\in\{r,w\}$ to output $o\in\{C,A\}$ are $T\times T$ matrices ($T=300$) computed via the fake news algorithm of \citet{auclert2021}. The algorithm requires one backward iteration to obtain policy responses for each shock date and one forward iteration to obtain expectation vectors from the steady-state transition matrix, then combines them into the Jacobian via the recursion $J_{ts}=J_{t-1,s-1}+F_{ts}$. Computing all four Jacobians takes 0.4 seconds.

\subsection{DAG structure}

The full model is a DAG of five blocks. Table~\ref{tab:dag_detail} reports the computational implementation. The linearized system has $(G+2)T$ unknowns per country, namely $G=5$ wage paths and 2 price paths, each of length $T=80$.

\begin{table}[htbp]
\centering\small
\caption{DAG block structure: computational implementation}\label{tab:dag_detail}
\begin{tabular}{lp{4.5cm}p{3.5cm}}
\toprule
Block & Inputs & Outputs \\
\midrule
Household & $r_{c,t}$, $W_{g,c,t}$, $P_{g,c,t}$, $T_{g,c,t}$ & $C_{g,c,t}$, $A_{c,t}$ \\
Wage reset ($G=5$) & $\pi^{exp}_{g,c,t}$, $x_{c,t}$ & $\pi^w_{g,c,t}$ \\
Calvo pricing (2 sectors) & $w_{j,c,t}$, $p_{e,t}$ & $\pi_{j,c,t}$ \\
Taylor rule & $\pi^{core}_{union,t}$ & $i_t$ \\
Market clearing & $A_{c,t}$, $C_{c,t}$, $Y_{c,t}$ & residuals \\
\bottomrule
\end{tabular}

\medskip
\begin{flushleft}
\footnotesize Note: Detailed computational implementation of the DAG in Table~\ref{tab:dag}. The linearized system has $6\times 560=3{,}360$ equations. The system is block-diagonal across countries.
\end{flushleft}
\end{table}

\subsection{Nonlinear solution method}

The nonlinear model incorporates three features beyond the linearized solution: (1)~the $\max\{gap,0\}$ catch-up operator; (2)~quintile-specific MPCs; (3)~the full wage-price-experience feedback loop. The quasi-Newton method uses the steady-state Jacobian $H_\mathbf{U}(\mathbf{U}^{ss},\mathbf{Z}^{ss})$, which is well-conditioned and inverted once, as the Newton step, evaluating the nonlinear residual at each iteration. Damped steps ($\lambda=0.3$) are used when $\|H\|>0.05$; full steps when $\|H\|<0.05$. Convergence to $\|H\|_\infty<10^{-3}$ in 50--100 iterations. Solution time: under 2 seconds per country.

\section{Portability: RWEI for Non-Euro-Area Countries}\label{app:portable}

To show portability, Table~\ref{tab:portable} reports first-pass RWEI estimates for three non-euro-area economies.

\begin{table}[htbp]
\centering\small
\caption{RWEI components for non-euro-area countries}\label{tab:portable}
\begin{tabular}{lcccc}
\toprule
& \multicolumn{2}{c}{Food + energy share (\%)} & $\Delta\theta$ & RWEI \\
\cmidrule(lr){2-3}
Country & Q1 & Q5 & (annual) & (index) \\
\midrule
United States & 32.4 & 16.8 & 0.20 & 4.6 \\
United Kingdom & 35.1 & 18.3 & 0.22 & 5.4 \\
Japan & 31.8 & 19.5 & 0.10 & 3.1 \\
\bottomrule
\end{tabular}

\medskip
\begin{flushleft}
\footnotesize Note: US: Consumer Expenditure Survey (BLS, 2019). UK: Living Costs and Food Survey (ONS, 2019/20). Japan: Family Income and Expenditure Survey (MIC, 2019). Reset-frequency gaps from BLS ECEC (US), CIPD surveys (UK), and Shunto/MHLW data (Japan).
\end{flushleft}
\end{table}

The UK has the highest RWEI (5.4), driven by a large expenditure gap and flexible low-wage contracts. Japan has the lowest (3.1), reflecting compressed expenditure shares and coordinated Shunto wage setting. The US is intermediate (4.6). These estimates are preliminary; their purpose is to show that the sufficient statistics can be computed for any economy with a household expenditure survey and wage-setting data.

\section{Sensitivity}\label{app:sensitivity}

\subsection{Number of worker types}

The wedge increases with $G$: from 0.30 pp ($G=2$) to 0.42 ($G=3$) to 0.57 ($G=5$), because finer disaggregation captures more of the covariance between reset frequency and expenditure composition. The bottom quintile has the largest expenditure-reset gap; with $G=2$ this is averaged with the second quintile, diluting the wedge. The $G=2$ specification is therefore a conservative lower bound. The cross-country ranking and policy ranking are invariant to $G$.

\subsection{Trade elasticity}

Doubling $\varepsilon_{trade}$ from 1.5 to 3.0 reduces the cumulative gap by approximately 15 percent, from 3.1 to 2.6 pp$\cdot$q, because greater substitutability reduces imported-essentials pass-through. Setting $\varepsilon_{trade}=0.5$, a near-Leontief value appropriate for essential goods, raises the gap to 3.5 pp$\cdot$q. All qualitative results are invariant to $\varepsilon_{trade}$.

\end{document}